\newtheorem{theorem}{Theorem}
\newtheorem{lemma}{Lemma}
\newtheorem{corollary}{Corollary}
\newtheorem{definition}{Definition}
\newtheorem{example}{Example}
\begin{document}
\title{\bf{On Linear Codes over $\mathbb{Z}_4+v\mathbb{Z}_4$}}
\author{{\bf  Jian Gao$^1$, Fang-Wei Fu$^1$, Yun Gao$^2$}\\
 {\footnotesize \emph{1. Chern Institute of Mathematics and LPMC, Nankai University}}\\
  {\footnotesize  \emph{Tianjin, 300071, P. R. China}}\\
  {\footnotesize \emph{2. School of Science, Shandong University of Techonology}}\\
  {\footnotesize  \emph{Zibo, 255091, P. R. China}}
  }
\date{}

\maketitle \noindent {\small {\bf Abstract} Linear codes are considered over the ring $\mathbb{Z}_4+v\mathbb{Z}_4$, where $v^2=v$. Gray weight, Gray maps for linear codes are defined and MacWilliams identity for the Gray weight enumerator is given. Self-dual codes, construction of Euclidean isodual codes, unimodular complex lattices, MDS codes and MGDS codes over $\mathbb{Z}_4+v\mathbb{Z}_4$ are studied. Cyclic codes and quadratic residue codes are also considered. Finally, some examples for illustrating the main work are given.}
 \vskip 1mm

\noindent
 {\small {\bf Keywords} MacWilliams identity; self-dual codes; unimodular complex lattices; cyclic codes; quadratic residue codes}

\vskip 3mm \noindent {\bf Mathematics Subject Classification (2000) } 11T71 $\cdot$ 94B05 $\cdot$ 94B15

\vskip 3mm \baselineskip 0.2in
%%%%%%%%%%%%%%%%%%%%%%%%%%%%%%%%%%%%%%%%%%%%%%%%%%%%%%%%%%%%%%%%%%%%%%%%

%%%%%%%%%%%%%%%%%%%%%%%%%%%%%%%%%%%%%%%%%%%%%%%%%%%%%%%%%%%%%%%%%%%%%%%%%%
\section{Introduction}
Error-Correcting codes play important roles in application ranging from date networking to satellite communication to compact disks. Most coding theory concerns on linear codes since they have clear structure that makes them easy to discover, to understand and to encode and decode.
\par
Codes over finite rings have been studied since the early 1970s. There are a lot of works on codes over finite rings after the discovery that certain good nonlinear binary codes can be constructed from cyclic codes over $\mathbb{Z}_4$ via the Gary map \cite{Hammons}. Since then, many researchers have payed more and more attentions to study the codes over finite rings. In these studies, the group rings associated with codes are finite chain rings.
\par
Recently, Zhu et al. considered linear codes over the finite non-chain ring $\mathbb{F}_q+v\mathbb{F}_q$. In \cite{Zhu1}, they study the cyclic codes over $\mathbb{F}_2+v\mathbb{F}_2$. It has shown that cyclic codes over this ring are principally generated. In the subsequent paper \cite{Zhu2}, they investigate a class of constacyclic codes over $\mathbb{F}_p+v\mathbb{F}_p$. In that paper, the authors prove that the image of a $(1-2v)$-constacyclic code of length $n$ over $\mathbb{F}_p+v\mathbb{F}_p$ under the Gray map is a cyclic code of length $2n$ over $\mathbb{F}_p$. Furthermore, they also assert that $(1-2v)$-constacyclic codes over $\mathbb{F}_p+v\mathbb{F}_p$ are also principally generated. More recently, Yildiz and Karadeniz \cite{Yildiz} studied the linear codes over the non-principal ring $\mathbb{Z}_4+u\mathbb{Z}_4$, where $u^2=0$. They introduce the MacWilliams identities for the complete, symmetrized and Lee weight enumerators. They also gave some methods to construct formally self-dual codes over $\mathbb{Z}_4+u\mathbb{Z}_4$.
\par
Self-dual codes are an important class of linear codes. They have connections to many fields of research such as lattices, designs and invariant \cite{Bannai,Cengellenmis}. The study of extremal self-dual codes and the connections with unimodular lattices has generated a lot of interests among the coding theory. And this is one of the motivations to introduce self-dual codes over the ring $\mathbb{Z}_4+v\mathbb{Z}_4$.
\par
As a special class of cyclic codes, quadratic residue codes fall into the family of BCH codes and have proven to be a promising family of cyclic codes. They were first introduced by Gleason and since then have generated a lot of interests. This due to the fact that they enjoy good algebra properties and they contain source of good codes. Recently, quadratic residue codes have been studied over some special rings \cite{Kaya,Pless2}.
\par
In this paper, we mainly introduce some results on linear codes over the principal ring $R=\mathbb{Z}_4+v\mathbb{Z}_4$, where $v^2=v$. To the best of our knowledge, this is the first time to study the linear codes over this ring. The remainder of this paper is organized as follows. In Section 2, we define the Gray weight of the element of $R$, and introduce a Gray map that leads to some useful results on linear codes over $R$. Moreover, we also give the MacWilliams identity on the linear code over $R$. In Section 3, we introduce some important class of linear codes: self-dual codes, MDS codes and MGDS codes. We give the sufficient and necessary conditions for a linear code to be Euclidean self-dual, MDS and MGDS codes. We also define the Hermitian dual on the linear code, and research the connections between Hermitian self-dual codes and unimodular complex lattices. Furthermore, we obtain a Gray distance bound on the code over $R$. In Section 4, we study the cyclic codes over $R$ including the generating polynomials, the generating idempotents and their duals. In Section5, we introduce an important class of cyclic codes called quadratic residue codes over $R$. Moreover, the extensions of quadratic residue codes are also discussed in this section. In Section 6, we give some examples to illustrate the main work in this paper.

\section{Linear codes over $R$}
Let $R=\mathbb{Z}_4+v\mathbb{Z}_4$, where $v^2=v$. Then $R$ is commutative and with characteristic $4$. Clearly, $R\simeq \mathbb{Z}_4[v]/(v^2-v)$. An element $r$ of $R$ can be expressed uniquely as $r=a+bv$, where $a,b\in \mathbb{Z}_4$. The ring $R$ has the following properties
\par
\par $\bullet$~There are $9$ different ideals of $R$, and they are $(1)$, $(v+1)$, $(v+2)$, $(v-1)$, $(2)$, $(v)$, $(2v-2)$, $(2v)$, $(0)$;
\par $\bullet$~$R$ is a principal ring;
\par $\bullet$~$(v+1)$ and $(v+2)$ are the maximal ideals of $R$;
\par $\bullet$~$R$ is not a finite chain ring.\\

Furthermore, for any element $r=a+bv$ of $R$, $r$ is a unit if and only if $a\not\equiv 0 ({\rm mod}2)$ and $a+b\not\equiv 0 ({\rm mod}2)$. Moreover, one can verify that if $r$ is a unit of $R$ then $r^2=1$.
\begin{definition}
Let $r=a+bv$ be any element of $R$. Then the Gray weight of the element $r$ is defined as
   \begin{equation*}
   w_G(r)=w_L(a)+w_L(a+b),
   \end{equation*}
where the symbol $w_L(\Box)$ denotes the Lee weight of the element $\Box$ of $\mathbb{Z}_4$.
\end{definition}
\par
Define a Gray weight of a vector $\textbf{c}=(c_0, c_1, \ldots, c_{n-1})\in R^n$ to be the rational sum of the Gray weight of its components, i.e. $w_G(\textbf{c})=\sum_{i=0}^{n-1}w_G(c_i)$. For any elements $\textbf{c}_1, \textbf{c}_2 \in R^n$, the Gray distance is given by $d_G(\textbf{c}_1, \textbf{c}_2)=w_G(\textbf{c}_1- \textbf{c}_2)$. A code $\mathcal {C}$ of length $n$ over $R$ is a subset of $R^n$. $\mathcal {C}$ is linear if and only if $\mathcal {C}$ is an $R$-submodule of $R^n$. The minimum Gray distance of $\mathcal {C}$ is the smallest nonzero Gray distance between all pairs of distinct codewords. The minimum Gray weight of $\mathcal {C}$ is the smallest nonzero Gray weight among all codewords. If $\mathcal {C}$ is linear, then the minimum Gray distance is the same as the minimum Gray weight.
\par
Now we give the definition of the Gray map on $R^n$ as follows
\begin{equation*}
\begin{split}
\Phi:~~R^n&\rightarrow \mathbb{Z}_4^{2n}\\
 (c_0, c_1, \ldots, c_{n-1})&\mapsto (a_0, a_0+b_0, a_1, a_1+b_1, \ldots, a_{n-1}, a_{n-1}+b_{n-1}),
 \end{split}
 \end{equation*}
where $c_i=a_i+b_iv$, $i=0,1,\ldots,n-1$.
\par
It is well known that the Lee weights of the elements in $\mathbb{Z}_4$ are defined as $w_L(0)=0, w_L(1)=w_L(3)=1$ and $w_L(2)=2$. Then we have the following result.
\begin{theorem}
The Gray map $\Phi$ is a distance-preserving map from $R^n$ (Gray distance) to $\mathbb{Z}_4^{2n}$ (Lee distance) and it is also $\mathbb{Z}_4$-linear.
\end{theorem}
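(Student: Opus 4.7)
The plan is to reduce the distance statement to a weight statement and then handle both claims by direct component-by-component verification, leveraging the fact that everything is additive.

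First I would establish $\mathbb{Z}_4$-linearity. It suffices, by acting componentwise, to treat a single coordinate. For $c = a + bv, c' = a' + b'v \in R$, the sum is $c + c' = (a+a') + (b+b')v$, so
\begin{equation*}
\Phi(c + c') = (a+a',\, (a+a')+(b+b')) = (a, a+b) + (a', a'+b') = \Phi(c) + \Phi(c').
\end{equation*}
For $\lambda \in \mathbb{Z}_4$, since $\lambda$ multiplies $R$ componentwise in the $\{1,v\}$-basis, $\lambda c = (\lambda a) + (\lambda b) v$, and hence $\Phi(\lambda c) = (\lambda a, \lambda(a+b)) = \lambda\,\Phi(c)$. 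Extending componentwise to $R^n$ gives $\mathbb{Z}_4$-linearity.

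Next I would prove the distance-preserving property by first showing the weight identity $w_G(\mathbf{c}) = w_L(\Phi(\mathbf{c}))$ for every $\mathbf{c} \in R^n$. For a single coordinate $c = a+bv$, the definition of the Gray weight gives $w_G(c) = w_L(a) + w_L(a+b)$, while the Lee weight of $\Phi(c) = (a, a+b)$ is by definition $w_L(a) + w_L(a+b)$; these agree. Summing over the $n$ coordinates yields $w_G(\mathbf{c}) = w_L(\Phi(\mathbf{c}))$.

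Finally I would deduce the distance statement: for $\mathbf{c}_1, \mathbf{c}_2 \in R^n$, using the $\mathbb{Z}_4$-linearity of $\Phi$ established above,
\begin{equation*}
d_G(\mathbf{c}_1, \mathbf{c}_2) = w_G(\mathbf{c}_1 - \mathbf{c}_2) = w_L(\Phi(\mathbf{c}_1 - \mathbf{c}_2)) = w_L(\Phi(\mathbf{c}_1) - \Phi(\mathbf{c}_2)) = d_L(\Phi(\mathbf{c}_1), \Phi(\mathbf{c}_2)).
\end{equation*}
There is no real obstacle here; the statement amounts to unwinding the definitions, and the only thing worth emphasizing in the write-up is that the weight identity must be proved before the distance identity so that one may invoke additivity of $\Phi$ to pass the difference through the map.
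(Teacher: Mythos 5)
Your proposal is correct and follows essentially the same route as the paper: establish $\mathbb{Z}_4$-linearity, identify $w_G(c)$ with the Lee weight of $\Phi(c)=(a,a+b)$ coordinatewise, and then pass the difference through $\Phi$ to convert Gray distance into Lee distance. Your write-up is somewhat more explicit than the paper's (which asserts the linearity and the weight identity directly from the definitions), but there is no substantive difference in the argument.
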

\begin{proof}
Let $k_1, k_2 \in \mathbb{Z}_4$. Then, by the definition of Gray map $\Phi$, for any $\textbf{c}_1, \textbf{c}_2\in R^n$ we have $\Phi(k_1\textbf{c}_1+k_2\textbf{c}_2)=k_1\Phi(\textbf{c}_1)+k_2\Phi(\textbf{c}_2)$, which implies that $\Phi$ is $\mathbb{Z}_4$-linear. Let $\textbf{c}_1=(c_{1,0}, c_{1,1}, \ldots, c_{1,n-1})$ and $\textbf{c}_2=(c_{2.0}, c_{2,1}, \ldots, c_{2,n-1})$ be elements of $R^n$, where $c_{1,i}=a_{1,i}+b_{1,i}v$ and $c_{2,i}=a_{2,i}+b_{2,i}v$, $i=0,1,\ldots,n-1$. Then $\textbf{c}_1-\textbf{c}_2=(c_{1,0}-c_{2,0}, \ldots, c_{1,n-1}-c_{2,n-1})$ and $\Phi(\textbf{c}_1-\textbf{c}_2)=\Phi(\textbf{c}_1)-\Phi(\textbf{c}_2)$. Therefore $d_G(\textbf{c}_1, \textbf{c}_2)=w_G(\textbf{c}_1- \textbf{c}_2)=w_L(\Phi(\textbf{c}_1-\textbf{c}_2))=w_L(\Phi(\textbf{c}_1)-\Phi(\textbf{c}_2))=d_L(\Phi(\textbf{c}_1), \Phi(\textbf{c}_2))$. The second equality above holds because of the definition of the Gray weight of the element in $R$.
\end{proof}
\begin{lemma}
Let $\mathcal {C}$ be a $(n, M, d)$ linear code over $R$, where $n, M, d$ are the length, the number of the codewords and the minimum Gray distance of $\mathcal {C}$,respectively. Then $\Phi(\mathcal {C})$ is a $(2n, M, d)$ linear code over $\mathbb{Z}_4$.
\end{lemma}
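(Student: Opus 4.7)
The plan is to deduce all three parameters of $\Phi(\mathcal{C})$ from the structural properties of $\Phi$ already established in Theorem 1, together with one additional observation: that $\Phi$ is injective on $R^n$.

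First, since $\Phi$ is $\mathbb{Z}_4$-linear by Theorem 1 and $\mathcal{C}$ is an $R$-submodule of $R^n$ (hence in particular a $\mathbb{Z}_4$-submodule under restriction of scalars along $\mathbb{Z}_4 \hookrightarrow R$), the image $\Phi(\mathcal{C})$ is a $\mathbb{Z}_4$-submodule of $\mathbb{Z}_4^{2n}$. This immediately gives that $\Phi(\mathcal{C})$ is a linear code of length $2n$ over $\mathbb{Z}_4$.

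Second, for the cardinality I would verify that $\Phi$ is injective on $R^n$; combined with $\mathbb{Z}_4$-linearity this reduces to checking that $\ker \Phi = 0$. If $\mathbf{c} = (a_0 + b_0 v, \ldots, a_{n-1} + b_{n-1} v)$ satisfies $\Phi(\mathbf{c}) = 0$, then each coordinate pair $(a_i, a_i + b_i)$ vanishes, forcing $a_i = 0$ and then $b_i = 0$ for every $i$, so $\mathbf{c} = 0$. Hence $|\Phi(\mathcal{C})| = |\mathcal{C}| = M$.

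Third, for the minimum distance I appeal directly to the distance-preserving property from Theorem 1: for distinct $\mathbf{c}_1, \mathbf{c}_2 \in \mathcal{C}$, we have $d_L(\Phi(\mathbf{c}_1), \Phi(\mathbf{c}_2)) = d_G(\mathbf{c}_1, \mathbf{c}_2)$, and this bijectively identifies the nonzero pairwise distances in $\mathcal{C}$ with those in $\Phi(\mathcal{C})$. Taking the minimum yields that the minimum Lee distance of $\Phi(\mathcal{C})$ equals $d$.

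There is no real obstacle here; the statement is essentially a corollary of Theorem 1, and the only substantive new content is the injectivity check for $\Phi$, which is immediate from the formula for the map. I would therefore keep the proof brief, emphasizing the reduction to Theorem 1 and the triangular structure of the coordinate pairs $(a_i, a_i + b_i)$ that yields injectivity.
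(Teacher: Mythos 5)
Your proposal is correct and follows essentially the same route as the paper's proof: linearity and length from the $\mathbb{Z}_4$-linearity of $\Phi$, cardinality from bijectivity (which you usefully make explicit via the kernel computation the paper only asserts), and minimum distance from the distance-preserving property of Theorem 1.
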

\begin{proof}
From Theorem 1, we see that $\Phi(\mathcal {C})$ is $\mathbb{Z}_4$-linear, which implies that $\Phi(\mathcal {C})$ is a $\mathbb{Z}_4$-linear code. From the definition of the Gray map $\Phi$, $\Phi(\mathcal {C})$ is with length $2n$. Moreover, one can check that $\Phi$ is a bijective map from $R^n$ to $\mathbb{Z}_4^{2n}$ implying that $\Phi(\mathcal {C})$ has $M$ codewords. At last, the preserving distance of $\Phi$ leads to $\Phi(\mathcal {C})$ has the minimum Lee distance $d$.
\end{proof}

Let $\textbf{x}=(x_1, x_2, \ldots, x_n)$ and $\textbf{y}=(y_1, y_2, \ldots, y_n)$ be two vectors of $R^n$. The Euclidean inner product of $\textbf{x}$ and $\textbf{y}$ is defined as follows
\begin{equation*}
\textbf{x}\cdot \textbf{y}=\sum_{i=1}^nx_iy_i.
\end{equation*}
The Euclidean dual code $\mathcal {C}^\perp$ of $\mathcal {C}$ is defined as $\mathcal {C}^\perp=\{\textbf{x}\in R^n| \textbf{x}\cdot \textbf{c}=0~{\rm for ~all~}\textbf{c}\in \mathcal {C}\}$. $\mathcal {C}$ is said to be Euclidean self-orthogonal if $\mathcal {C}\subseteq \mathcal {C}^\perp$ and Euclidean self-dual if $\mathcal {C}=\mathcal {C}^\perp$.
\begin{theorem}
Let $\mathcal {C}$ be a linear code. Then $\Phi(\mathcal {C})^\perp=\Phi(\mathcal {C}^\perp)$. Moreover, if $\mathcal {C}$ is Euclidean self-dual, so is $\Phi(\mathcal {C})$.
\end{theorem}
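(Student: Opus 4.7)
The plan is to establish the identity $\Phi(\mathcal{C}^\perp)=\Phi(\mathcal{C})^\perp$ in two steps: first one containment via a direct inner-product calculation, then equality via a cardinality comparison. The self-dual assertion will then follow at once.

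For the containment $\Phi(\mathcal{C}^\perp)\subseteq\Phi(\mathcal{C})^\perp$, I would take $\textbf{x},\textbf{y}\in R^n$ and write $x_i=a_i+b_iv$, $y_i=c_i+d_iv$. Using $v^2=v$ together with the orthogonal idempotent decomposition $1=(1-v)+v$ of $R$, I would expand
\begin{equation*}
\textbf{x}\cdot\textbf{y}=(1-v)\sum_{i} a_ic_i \;+\; v\sum_{i}(a_i+b_i)(c_i+d_i),
\end{equation*}
so that $\textbf{x}\cdot\textbf{y}=0$ in $R$ is equivalent to the two $\mathbb{Z}_4$ conditions $\sum_i a_ic_i=0$ and $\sum_i(a_i+b_i)(c_i+d_i)=0$. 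On the other hand, the definition of $\Phi$ gives
\begin{equation*}
\Phi(\textbf{x})\cdot\Phi(\textbf{y})=\sum_{i} a_ic_i+\sum_{i}(a_i+b_i)(c_i+d_i),
\end{equation*}
so vanishing of the $R$-inner product forces vanishing of the $\mathbb{Z}_4^{2n}$-inner product, and the desired containment follows.

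For the reverse containment I would compare cardinalities. The assignment $a+bv\mapsto(a,a+b)$ is a ring isomorphism $R\simeq\mathbb{Z}_4\times\mathbb{Z}_4$ with componentwise multiplication, under which any $R$-linear code $\mathcal{C}\subseteq R^n$ decomposes as a product $\mathcal{C}_1\times\mathcal{C}_2$ of $\mathbb{Z}_4$-linear codes in $\mathbb{Z}_4^n$, and $\mathcal{C}^\perp$ corresponds to $\mathcal{C}_1^\perp\times\mathcal{C}_2^\perp$; hence $|\mathcal{C}|\cdot|\mathcal{C}^\perp|=4^n\cdot 4^n=|R|^n$. Since $\Phi$ is a $\mathbb{Z}_4$-linear bijection by Theorem 1, $|\Phi(\mathcal{C}^\perp)|=|\mathcal{C}^\perp|$ and $|\Phi(\mathcal{C})|=|\mathcal{C}|$; combined with the standard $\mathbb{Z}_4$-identity $|\Phi(\mathcal{C})|\cdot|\Phi(\mathcal{C})^\perp|=4^{2n}=|R|^n$ this yields $|\Phi(\mathcal{C}^\perp)|=|\Phi(\mathcal{C})^\perp|$, so the containment upgrades to equality. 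The self-dual claim is immediate: $\mathcal{C}=\mathcal{C}^\perp$ forces $\Phi(\mathcal{C})=\Phi(\mathcal{C}^\perp)=\Phi(\mathcal{C})^\perp$.

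The main obstacle I foresee is justifying the size identity $|\mathcal{C}|\cdot|\mathcal{C}^\perp|=|R|^n$ for the non-chain ring $R$; the idempotent decomposition $R\simeq\mathbb{Z}_4\times\mathbb{Z}_4$ is exactly what makes this clean, and I would want to check carefully that this isomorphism is compatible with the Euclidean inner product (which it is, being a ring isomorphism), so that duality in $R^n$ really does correspond to componentwise duality in $(\mathbb{Z}_4^n)^2$.
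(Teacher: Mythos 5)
Your proposal is correct and follows essentially the same route as the paper: one containment by direct expansion of the Euclidean inner product (your idempotent rewriting $\textbf{x}\cdot\textbf{y}=(1-v)\sum_i a_ic_i+v\sum_i(a_i+b_i)(c_i+d_i)$ is equivalent to the paper's comparison of the $1$- and $v$-coefficients), followed by a cardinality argument to upgrade the containment to equality, and the same immediate deduction for self-duality. The only substantive difference is that you actually justify $|\Phi(\mathcal{C}^\perp)|=|\Phi(\mathcal{C})^\perp|$ via the decomposition $R\simeq\mathbb{Z}_4\times\mathbb{Z}_4$ and the identity $|\mathcal{C}|\cdot|\mathcal{C}^\perp|=|R|^n$, whereas the paper merely cites Lemma 1 (which only yields $|\Phi(\mathcal{C}^\perp)|=|\mathcal{C}^\perp|$), so your extra care fills in a step the paper leaves implicit.
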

\begin{proof}
For all $\textbf{c}_1=(c_{1,0}, c_{1,1}, \ldots, c_{1,n-1})\in \mathcal {C}$ and $\textbf{c}_2=(c_{2,0}, c_{2,1}, \ldots, c_{2,n-1})\in \mathcal {C}^\perp$, where $c_{j,i}=a_{j,i}+b_{j,i}v$, $a_{j,i}, b_{j,i}\in \mathbb{Z}_4$, $j=1,2$, $i=0,1,\ldots,n-1$, if $\textbf{c}_1\cdot \textbf{c}_2=0$, then we have $\textbf{c}_1\cdot \textbf{c}_2=\sum_{i=0}^{n-1}c_{1,i}c_{2,i}=\sum_{i=0}^{n-1}a_{1,i}a_{2,i}+\sum_{i=0}^{n-1}(a_{1,i}b_{2,i}+a_{2,i}b_{1,i}+b_{1,i}b_{2,i})v=0$ implying $\sum_{i=0}^{n-1}a_{1,i}a_{2,i}=0$ and $\sum_{i=0}^{n-1}(a_{1,i}b_{2,i}+a_{2,i}b_{1,i}+b_{1,i}b_{2,i})=0$. Therefore, $\Phi(\textbf{c}_1)\cdot \Phi(\textbf{c}_2)=\sum_{i=0}^{n-1}(a_{1,i}a_{2,i}+a_{1,i}b_{2,i}+a_{2,i}b_{1,i}+b_{1,i}b_{2,i})=0$. Thus $\Phi(\mathcal {C}^\perp)\subseteq \Phi(\mathcal {C})^\perp$. From Lemma 1, we can verify that $|\Phi(\mathcal {C}^\perp)|=|\Phi(\mathcal {C})^\perp|$, which implies that $\Phi(\mathcal {C})^\perp=\Phi(\mathcal {C}^\perp)$. Clearly, $\Phi(\mathcal {C})$ is Euclidean self-orthogonal if $\mathcal {C}$ is Euclidean self-dual. From Lemma 1, we have $|\Phi(\mathcal {C})|=|\mathcal {C}|=16^{n/2}=4^{2n/2}$. Thus, $\Phi(\mathcal {C})$ is Euclidean self-dual.
\end{proof}

One of the most remarkable results on coding theory is the MacWilliams identity that describes the connections between a linear code and its dual code on the weight enumerator. In the following, we discuss this issue over $R$.
\par
Let $\mathcal {C}$ be a linear code of length $n$ over $R$. Suppose that $a$ is any element of $R$. For all $\textbf{c}=(c_0, c_1, \ldots, c_{n-1})\in R^n$, define the weight of $\textbf{c}$ at $a$ to be $w_a=|\{i|c_i=a\}|$.
\begin{definition}
Let $A_i$ be the number of elements of the Gray weight $i$ in $\mathcal {C}$. Then the set $\{A_0, A_1, \ldots, A_{4n}\}$ is called the Gray weight distribution of $\mathcal {C}$. Define the Gray weight enumerator of $\mathcal {C}$ as $Gray_\mathcal {C}(X, Y)=\sum_{i=0}^{4n}A_iX^{4n-i}Y^i$. Clearly, $Gray_\mathcal {C}(X, Y)=\sum_{\textbf{c}\in \mathcal {C}}X^{4n-w_G(\textbf{c})}Y^{w_G(\textbf{c})}$.
   Furthermore, define the complete weight enumerator of $\mathcal {C}$ as $cwe_\mathcal {C}(X_0, X_1, X_2, X_3, X_v, X_{1+v}, X_{2+v}, X_{3+v}, X_{2v}, X_{1+2v}, X_{2+2v}, X_{3+2v}, \\X_{3v}, X_{1+3v}, X_{2+3v}, X_{3+3v})=\sum_{\textbf{c}\in \mathcal {C}}X_0^{w_0(\textbf{c})}X_1^{w_1(\textbf{c})}X_2^{w_2(\textbf{c})}X_3^{w_3(\textbf{c})}X_v^{w_v(\textbf{c})}X_{1+v}^{w_{1+v}(\textbf{c})}
   X_{2+v}^{w_{2+v}(\textbf{c})}X_{3+v}^{w_{3+v}(\textbf{c})}X_{2v}^{w_{2v}(\textbf{c})}\\
   X_{1+2v}^{w_{1+2v}(\textbf{c})}X_{2+2v}^{w_{2+2v}(\textbf{c})}X_{3+2v}^{w_{3+2v}(\textbf{c})}X_{3v}^{w_{3v}(\textbf{c})}X_{1+3v}^{w_{1+3v}(\textbf{c})}
   X_{2+3v}^{w_{2+3v}(\textbf{c})}X_{3+3v}^{w_{3+3v}(\textbf{c})}$.
\end{definition}

For any codeword $\textbf{c}$ of $\mathcal {C}$, let
\begin{equation*}
\alpha_0(\textbf{c})=w_0(\textbf{c})
\end{equation*}
\begin{equation*}
\alpha_1(\textbf{c})=w_v(\textbf{c})+w_{3+v}(\textbf{c})+w_{1+3v}(\textbf{c})
\end{equation*}
\begin{equation*}
\alpha_2(\textbf{c})=w_1(\textbf{c})+w_3(\textbf{c})+w_{2v}(\textbf{c})+w_{3v}(\textbf{c})+w_{1+2v}(\textbf{c})+w_{2+2v}(\textbf{c})+w_{3+2v}(\textbf{c})
\end{equation*}
\begin{equation*}
\alpha_3(\textbf{c})=w_{1+v}(\textbf{c})+w_{2+v}(\textbf{c})+w_{2+3v}(\textbf{c})+w_{3+3v}(\textbf{c})
\end{equation*}
\begin{equation*}
\alpha_4(\textbf{c})=w_2(\textbf{c})
\end{equation*}
denote the number of elements of $\textbf{c}$ with Gray weight $0,1,2,3,4$, respectively. Then the Gray weight $w_G(\textbf{c})$ of $\textbf{c}\in \mathcal {C}$ is defined to be $$w_G(\textbf{c})=\alpha_1(\textbf{c})+2\alpha_2(\textbf{c})+3\alpha_3(\textbf{c})+4\alpha_4(\textbf{c}).$$
Define the symmetrized weight enumerator of $\mathcal {C}$ as $swe_\mathcal {C}(X_0, X_1, X_2, X_3, X_4)=cwe_\mathcal {C}(X_0, X_1, X_2, X_3, X_v, X_{1+v}, \\ X_{2+v},X_{3+v}, X_{2v}, X_{1+2v}, X_{2+2v}, X_{3+2v}, X_{3v}, X_{1+3v}, X_{2+3v}, X_{3+3v})=\sum_{\textbf{c}\in \mathcal {C}} X_0^{\alpha_0(\textbf{c})}X_1^{\alpha_1(\textbf{c})}X_2^{\alpha_2(\textbf{c})}X_3^{\alpha_3(\textbf{c})}X_4^{\alpha_4(\textbf{c})}$. Furthermore, the Hamming weight enumerator of $\mathcal {C}$ is defined as
\begin{equation*}
Ham_\mathcal {C}(X, Y)=\sum_{\textbf{c}\in \mathcal {C}}X^{n-w_H(\textbf{c})}Y^{w_H(\textbf{c})},
\end{equation*}
where $w_H(\textbf{c})$ denotes the Hamming weight of the codeword $\textbf{c}$. Then we have the following results.
\begin{theorem}
Let $\mathcal {C}$ be a linear code of length $n$ over $R$. Then\\
{\rm (i)}~$Gray_\mathcal {C}(X, Y)=swe_\mathcal {C}(X^4, X^3Y, X^2Y^2, XY^3, Y^4)$;\\
{\rm (ii)}~$Ham_\mathcal {C}(X, Y)=swe_\mathcal {C}(X, Y, Y, Y, Y)$;\\
{\rm (iii)}~$Gray_\mathcal {C}(X, Y)=Lee_{\Phi(\mathcal {C})}(X, Y)$;\\
{\rm (iv)}~$Gray_{\mathcal {C}^\perp}(X, Y)=\frac{1}{|\mathcal {C}|}Gray_\mathcal {C}(X+Y, X-Y)$.
\end{theorem}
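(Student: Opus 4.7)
My plan is to dispatch the four parts in the order they appear, since parts (i)--(iii) are essentially bookkeeping from the definitions, while (iv) is the substantive MacWilliams identity that I would derive by transporting the $\mathbb{Z}_4$-Lee MacWilliams identity across the Gray map.

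For part (i), I would substitute $X_0 = X^4$, $X_1 = X^3Y$, $X_2 = X^2Y^2$, $X_3 = XY^3$, $X_4 = Y^4$ into $swe_\mathcal{C}$ and observe that the exponent of $X$ in each term is $4\alpha_0(\textbf{c})+3\alpha_1(\textbf{c})+2\alpha_2(\textbf{c})+\alpha_3(\textbf{c})$ and the exponent of $Y$ is $\alpha_1(\textbf{c})+2\alpha_2(\textbf{c})+3\alpha_3(\textbf{c})+4\alpha_4(\textbf{c})=w_G(\textbf{c})$. Since $\alpha_0+\alpha_1+\alpha_2+\alpha_3+\alpha_4=n$ by the partition of coordinates into Gray-weight classes, the $X$-exponent equals $4n-w_G(\textbf{c})$, which gives exactly $Gray_\mathcal{C}(X,Y)$. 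For part (ii), I would substitute $X_0=X$ and $X_1=X_2=X_3=X_4=Y$ into $swe_\mathcal{C}$; the $X$-exponent collapses to $\alpha_0(\textbf{c})$, which is the number of zero coordinates, i.e.\ $n-w_H(\textbf{c})$, and the $Y$-exponent becomes $\alpha_1+\alpha_2+\alpha_3+\alpha_4=w_H(\textbf{c})$, yielding $Ham_\mathcal{C}(X,Y)$.

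For part (iii), I would use that $\Phi$ is a distance-preserving bijection from $\mathcal{C}$ onto $\Phi(\mathcal{C})$ by Theorem 1 and Lemma 1. In particular $w_G(\textbf{c})=w_L(\Phi(\textbf{c}))$ for every $\textbf{c}\in\mathcal{C}$, and $\Phi(\mathcal{C})$ has length $2n$ over $\mathbb{Z}_4$, so each codeword of $\Phi(\mathcal{C})$ has Lee weight between $0$ and $4n$. Re-indexing the sum defining $Lee_{\Phi(\mathcal{C})}(X,Y)=\sum_{\textbf{d}\in\Phi(\mathcal{C})}X^{4n-w_L(\textbf{d})}Y^{w_L(\textbf{d})}$ by the bijection $\textbf{c}\mapsto\Phi(\textbf{c})$ converts it term-for-term into $Gray_\mathcal{C}(X,Y)$.

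For part (iv), the strategy is to pull the identity back from $\mathbb{Z}_4$. By Theorem 2, $\Phi(\mathcal{C}^\perp)=\Phi(\mathcal{C})^\perp$, and by Lemma 1, $|\Phi(\mathcal{C})|=|\mathcal{C}|$. Applying part (iii) to both $\mathcal{C}$ and $\mathcal{C}^\perp$ gives
\begin{equation*}
Gray_{\mathcal{C}^\perp}(X,Y)=Lee_{\Phi(\mathcal{C}^\perp)}(X,Y)=Lee_{\Phi(\mathcal{C})^\perp}(X,Y).
\end{equation*}
Now I invoke the classical MacWilliams identity for the Lee weight enumerator of a linear code over $\mathbb{Z}_4$, which yields $Lee_{\Phi(\mathcal{C})^\perp}(X,Y)=\tfrac{1}{|\Phi(\mathcal{C})|}Lee_{\Phi(\mathcal{C})}(X+Y,X-Y)$. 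Substituting $|\Phi(\mathcal{C})|=|\mathcal{C}|$ and applying (iii) once more to rewrite $Lee_{\Phi(\mathcal{C})}$ as $Gray_\mathcal{C}$ finishes the argument.

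The main obstacle is conceptual rather than computational: one must be careful that $\Phi$ does the right thing on inner products, which is exactly what Theorem 2 records and is the only place where the specific form $\Phi(a+bv)=(a,a+b)$ is used beyond distance preservation. Once that compatibility is in hand, the MacWilliams transform over $R$ reduces cleanly to the known one over $\mathbb{Z}_4$, and no direct character-sum calculation over $R$ is needed.
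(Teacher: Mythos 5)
Your proposal is correct and follows essentially the same route as the paper: parts (i)--(iii) by direct substitution and the weight-preserving bijection $\Phi$, and part (iv) by transporting the $\mathbb{Z}_4$ Lee-weight MacWilliams identity across $\Phi$ using Theorem 2 and Lemma 1. The only difference is cosmetic --- you make explicit the identity $\alpha_0+\alpha_1+\alpha_2+\alpha_3+\alpha_4=n$ needed to rewrite the $X$-exponent as $4n-w_G(\textbf{c})$, which the paper uses silently.
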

\begin{proof}
(i)~From the definition of the symmetrized weight enumerator, we have
\begin{equation*}
\begin{split}
swe_\mathcal {C}(X^4, X^3Y,&  X^2Y^2, XY^3, Y^4)\\
                           &=\sum_{\textbf{c}\in\mathcal {C}}X^{4\alpha_0(\textbf{c})}(X^3Y)^{\alpha_1(\textbf{c})}(X^2Y^2)^{\alpha_2(\textbf{c})}(XY^3)^{\alpha_3(\textbf{c})}Y^{4\alpha_4(\textbf{c})}\\
                           &=\sum_{\textbf{c}\in\mathcal {C}}X^{4\alpha_0(\textbf{c})+3\alpha_1(\textbf{c})+2\alpha_2(\textbf{c})+\alpha_3(\textbf{c})}Y^{\alpha_1(\textbf{c})+2\alpha_2(\textbf{c})
                           +3\alpha_3(\textbf{c})+4\alpha_4(\textbf{c})}\\
                           &=\sum_{\textbf{c}\in\mathcal {C}}X^{4n-w_G(\textbf{c})}Y^{w_G(\textbf{c})}\\
                           &=Gray_\mathcal {C}(X, Y).
 \end{split}
\end{equation*}
(ii)~From the definition of symmetrized weight enumerator, we have
\begin{equation*}
\begin{split}
swe_\mathcal {C}(X, Y, Y, Y, Y)&=\sum_{\textbf{c}\in\mathcal {C}} X^{\alpha_0(\textbf{c})}Y^{\alpha_1(\textbf{c})}Y^{\alpha_2(\textbf{c})}Y^{\alpha_3(\textbf{c})}Y^{\alpha_4(\textbf{c})}\\
                               &=\sum_{\textbf{c}\in\mathcal {C}}X^{\alpha_0(\textbf{c})}Y^{\alpha_1(\textbf{c})+\alpha_2(\textbf{c})+\alpha_3(\textbf{c})+\alpha_4(\textbf{c})}\\
                               &=\sum_{\textbf{c}\in\mathcal {C}}X^{n-w_H(\textbf{c})}Y^{w_H(\textbf{c})}\\
                               &=Ham_\mathcal {C}(X, Y).
 \end{split}
 \end{equation*}
(iii)~From the definition of Gray weight enumerator, we obtain that
\begin{equation*}
\begin{split}
Gray_\mathcal {C}(X, Y)&=\sum_{\textbf{c}\in\mathcal {C}}X^{4n-w_G(\textbf{c})}Y^{w_G(\textbf{c})}\\
                       &=\sum_{\Phi(\textbf{c})\in \Phi(\mathcal {C})}X^{4n-w_L(\Phi(\textbf{c}))}Y^{w_L(\Phi(\textbf{c}))}\\
                       &=Lee_{\Phi(\mathcal {C})}(X, Y).
 \end{split}
\end{equation*}
(iv)~From Theorem 2, $\Phi(\mathcal {C}^\perp)=\Phi(\mathcal {C})^\perp$ and they are both $\mathbb{Z}_4$-linear according to Lemma 1. By Theorem 2.4 in \cite{Wan} and (iii), we have
\begin{equation*}
\begin{split}
Gray_{\mathcal {C}^\perp}(X, Y)&=Lee_{\Phi(\mathcal {C}^\perp)} (X, Y)\\
                               &=Lee_{\Phi(\mathcal {C})^\perp}(X, Y)\\
                               &=\frac{1}{|\Phi(\mathcal {C})|}Lee_{\Phi(\mathcal {C})}(X+Y, X-Y)\\
                               &=\frac{1}{|\mathcal {C}|}Gray_{\mathcal {C}}(X+Y, X-Y).
 \end{split}
 \end{equation*}
\end{proof}
\section{Self-dual codes, MDS codes and MGDS codes}
Self-dual codes, MDS codes and MGDS codes are important classes of linear codes. They have been studied over a wide variety of rings, including finite fields, Galois rings and finite chain rings. In this section, we investigate some properties of these codes over $R$.
\subsection{Euclidean Self-dual codes over $R$}
Euclidean self-dual codes over rings have been shown to have closely interesting connections to the invariant theory, lattice theory and the theory of modular forms. At the beginning, we introduce some useful facts.
\par By the Chinese Remainder Theorem, we have
\begin{equation*}
\begin{split}
R&=vR\oplus (1-v)R \\
 &=v\mathbb{Z}_4\oplus (1-v)\mathbb{Z}_4.
 \end{split}
 \end{equation*}
  Define $$\mathcal {C}_1=\{\textbf{x}\in \mathbb{Z}_4^n| \exists \textbf{y}\in \mathbb{Z}_4^n, v\textbf{x}+(1-v)\textbf{y}\in \mathcal {C}\}$$ and $$\mathcal {C}_2=\{\textbf{y}\in \mathbb{Z}_4^n| \exists \textbf{x}\in \mathbb{Z}_4^n, v\textbf{x}+(1-v)\textbf{y}\in \mathcal {C}\}.$$ Then $\mathcal {C}_1$ and $\mathcal {C}_2$ are both $\mathbb{Z}_4$-linear of length $n$. Moreover, the linear code $\mathcal {C}$ of length $n$ over $R$ can be uniquely expressed as $$\mathcal {C}=v\mathcal {C}_1\oplus (1-v)\mathcal {C}_2.$$
\begin{theorem}
Let $\mathcal {C}$ be a linear code of length $n$ over $R$. Then $\mathcal {C}^\perp=v\mathcal {C}_1^\perp\oplus(1-v)\mathcal {C}_2^\perp$. Moreover, $\mathcal {C}$ is Euclidean self-dual if and only if $\mathcal {C}_1$ and $\mathcal {C}_2$ are both Euclidean self-dual over $\mathbb{Z}_4$.
\end{theorem}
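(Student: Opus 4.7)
The plan is to exploit the fact that $v$ and $1-v$ are orthogonal idempotents of $R$ (i.e.\ $v^2=v$, $(1-v)^2=1-v$, and $v(1-v)=0$) so that the Euclidean inner product on $R^n$ decomposes cleanly under the CRT splitting $R=vR\oplus(1-v)R$. Concretely, for any two vectors written in the decomposed form $v\mathbf{x}_1+(1-v)\mathbf{y}_1$ and $v\mathbf{x}_2+(1-v)\mathbf{y}_2$ with $\mathbf{x}_i,\mathbf{y}_i\in\mathbb{Z}_4^n$, a direct calculation using the idempotent relations gives
\begin{equation*}
\bigl(v\mathbf{x}_1+(1-v)\mathbf{y}_1\bigr)\cdot\bigl(v\mathbf{x}_2+(1-v)\mathbf{y}_2\bigr)=v(\mathbf{x}_1\cdot\mathbf{x}_2)+(1-v)(\mathbf{y}_1\cdot\mathbf{y}_2),
\end{equation*}
and by uniqueness of the CRT representation this element of $R$ vanishes if and only if both $\mathbb{Z}_4$-inner products $\mathbf{x}_1\cdot\mathbf{x}_2$ and $\mathbf{y}_1\cdot\mathbf{y}_2$ vanish. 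This single identity is the engine of the whole proof.

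For the containment $v\mathcal{C}_1^\perp\oplus(1-v)\mathcal{C}_2^\perp\subseteq\mathcal{C}^\perp$, I would take an arbitrary $v\mathbf{a}+(1-v)\mathbf{b}$ with $\mathbf{a}\in\mathcal{C}_1^\perp$ and $\mathbf{b}\in\mathcal{C}_2^\perp$ and pair it with a generic codeword $v\mathbf{x}+(1-v)\mathbf{y}\in v\mathcal{C}_1\oplus(1-v)\mathcal{C}_2=\mathcal{C}$; the displayed identity reduces the inner product to $v(\mathbf{a}\cdot\mathbf{x})+(1-v)(\mathbf{b}\cdot\mathbf{y})=0$. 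For the reverse containment, given $\mathbf{z}\in\mathcal{C}^\perp$, write $\mathbf{z}=v\mathbf{a}+(1-v)\mathbf{b}$ and specialize: for every $\mathbf{x}\in\mathcal{C}_1$ the vector $v\mathbf{x}$ lies in $\mathcal{C}$, so orthogonality plus the identity and CRT uniqueness force $\mathbf{a}\cdot\mathbf{x}=0$, i.e.\ $\mathbf{a}\in\mathcal{C}_1^\perp$; symmetrically $\mathbf{b}\in\mathcal{C}_2^\perp$. This establishes the equality $\mathcal{C}^\perp=v\mathcal{C}_1^\perp\oplus(1-v)\mathcal{C}_2^\perp$.

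The self-dual statement then drops out for free: $\mathcal{C}=\mathcal{C}^\perp$ becomes $v\mathcal{C}_1\oplus(1-v)\mathcal{C}_2=v\mathcal{C}_1^\perp\oplus(1-v)\mathcal{C}_2^\perp$, and again by the uniqueness of the CRT decomposition (one can recover each component by multiplying by $v$ or $1-v$) this forces $\mathcal{C}_1=\mathcal{C}_1^\perp$ and $\mathcal{C}_2=\mathcal{C}_2^\perp$; the converse is obvious. The only point that needs a brief verification, and the main (mild) obstacle, is the justification that the \emph{complete} direct sum $v\mathcal{C}_1\oplus(1-v)\mathcal{C}_2$ actually equals $\mathcal{C}$ (not just contains it): one checks that if $v\mathbf{x}+(1-v)\mathbf{y}\in\mathcal{C}$ then multiplying by $v$ and by $1-v$ shows $v\mathbf{x},(1-v)\mathbf{y}\in\mathcal{C}$ separately, so arbitrary pairs $(\mathbf{x}',\mathbf{y}')\in\mathcal{C}_1\times\mathcal{C}_2$ recombine to codewords by $R$-linearity. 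Once that is in hand, every other step is pure bookkeeping with the idempotents.
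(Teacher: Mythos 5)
Your proof is correct, and its engine---the orthogonal-idempotent identity $\bigl(v\mathbf{x}_1+(1-v)\mathbf{y}_1\bigr)\cdot\bigl(v\mathbf{x}_2+(1-v)\mathbf{y}_2\bigr)=v(\mathbf{x}_1\cdot\mathbf{x}_2)+(1-v)(\mathbf{y}_1\cdot\mathbf{y}_2)$ together with uniqueness of the CRT representation---is the same mechanism the paper uses, so for the identity $\mathcal{C}^\perp=v\mathcal{C}_1^\perp\oplus(1-v)\mathcal{C}_2^\perp$ the two arguments differ only in bookkeeping: the paper names the CRT components of $\mathcal{C}^\perp$ as $\widehat{\mathcal{C}}_1,\widehat{\mathcal{C}}_2$ and proves $\widehat{\mathcal{C}}_i=\mathcal{C}_i^\perp$, while you verify the two containments directly, after first checking (correctly, and the paper takes this for granted from the displayed decomposition $\mathcal{C}=v\mathcal{C}_1\oplus(1-v)\mathcal{C}_2$) that $v\mathbf{x}$ and $(1-v)\mathbf{y}$ individually lie in $\mathcal{C}$. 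Where you genuinely diverge is the self-duality equivalence: the paper proves the forward direction by a separate contradiction argument (asserting that $\mathbf{a}\in\mathcal{C}_1^\perp\setminus\mathcal{C}_1$ and $\mathbf{b}\in\mathcal{C}_2$ would give $(v\mathbf{a}+(1-v)\mathbf{b})^2\neq\mathbf{0}$), which as written is murky---it is not clear why such a $\mathbf{b}$ with nonzero self-product must exist. Your route reads the equivalence off for free: once $\mathcal{C}^\perp=v\mathcal{C}_1^\perp\oplus(1-v)\mathcal{C}_2^\perp$ is in hand, $\mathcal{C}=\mathcal{C}^\perp$ forces $\mathcal{C}_i=\mathcal{C}_i^\perp$ because the components of a code are recovered unambiguously by multiplying by $v$ and $1-v$. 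That is cleaner and, in my view, repairs a soft spot in the paper's own write-up; the only thing it costs you is the explicit appeal to uniqueness of the decomposition, which you do supply.
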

\begin{proof}
Define
\begin{equation*}
\widehat{\mathcal {C}}_1=\{ \textbf{x} \in \mathbb{Z}_4^n|~\exists \textbf{y} \in \mathbb{Z}_4^n, v\textbf{x}+(1-v)\textbf{y}\in \mathcal {C}^\perp\}
\end{equation*}
and
\begin{equation*}
\widehat{\mathcal {C}}_2=\{ \textbf{y} \in \mathbb{Z}_4^n|~\exists \textbf{x} \in \mathbb{Z}_4^n, v\textbf{x}+(1-v)\textbf{y}\in \mathcal {C}^\perp\}.
\end{equation*}
Then $\mathcal {C}^\perp=v\widehat{\mathcal {C}}_1+(1-v)\widehat{\mathcal {C}}_2$ and this expression is unique. Clearly, $\widehat{\mathcal {C}}_1\subseteq \mathcal {C}_1^\perp$. Let $\textbf{c}_1$ be an element of $\mathcal {C}_1^\perp$. Then, for any $\textbf{x}\in \mathcal {C}_1$, there exists $\textbf{y}\in \mathbb{Z}_4^n$ such that $\textbf{c}_1\cdot(v\textbf{x}+(1-v)\textbf{y})=\textbf{0}$. Let $\textbf{c}=v\textbf{x}+(1-v)\textbf{y}\in \mathcal {C}$. Then $v\textbf{c}_1\cdot \textbf{c}=\textbf{0}$, which implies that $v\textbf{c}_1\in \mathcal {C}^\perp$. By the unique expression of $\mathcal {C}^\perp$, we have $\textbf{c}_1\in \widehat{\mathcal {C}}_1$, i.e. $\mathcal {C}_1=\widehat{\mathcal {C}}_1$. Similarly, we can prove $\mathcal {C}_2=\widehat{\mathcal {C}}_2$ implying $\mathcal {C}^\perp=v\mathcal {C}_1^\perp+(1-v)\mathcal {C}_2^\perp$.
\par
Clearly, $\mathcal {C}$ is Euclidean self-dual over $R$ if $\mathcal {C}_1$ and $\mathcal {C}_2$ are both Euclidean self-dual over $\mathbb{Z}_4$. If $\mathcal {C}$ is Euclidean self-dual, then $\mathcal {C}_1$ and $\mathcal {C}_2$ are both Euclidean self-orthogonal over $\mathbb{Z}_4$, i.e. $\mathcal {C}_1\subseteq \mathcal {C}_1^\perp$ and $\mathcal {C}_2\subseteq \mathcal {C}_2^\perp$. Next, we will prove $\mathcal {C}_1=\mathcal {C}_1^\perp$ and  $\mathcal {C}_2=\mathcal {C}_2^\perp$. If not, then there are elements $\textbf{a}\in \mathcal {C}_1^\perp \setminus \mathcal {C}_1$ and $\textbf{b}\in \mathcal {C}_2$ such that $(v\textbf{a}+(1-v)\textbf{b})^2\neq \textbf{0}$, which is a contradiction that $\mathcal {C}$ is Euclidean self-dual. Therefore, $\mathcal {C}_1=\mathcal {C}_1^\perp$ and $\mathcal {C}_2=\mathcal {C}_2^\perp$.
\end{proof}

For Euclidean self-dual codes, the conditions of existing are very important for the enumeration.
\begin{theorem}
There exist Euclidean self-dual codes of any length $n$ over $R$.
\end{theorem}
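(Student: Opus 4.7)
The plan is to reduce the problem to the analogous existence question over $\mathbb{Z}_4$ via Theorem 4, and then to exhibit a concrete self-dual code over $\mathbb{Z}_4$ of arbitrary length. Since Theorem 4 asserts that $\mathcal{C}=v\mathcal{C}_1\oplus(1-v)\mathcal{C}_2$ is Euclidean self-dual over $R$ if and only if both $\mathcal{C}_1$ and $\mathcal{C}_2$ are Euclidean self-dual over $\mathbb{Z}_4$, it suffices to produce, for every positive integer $n$, a single Euclidean self-dual $\mathbb{Z}_4$-linear code of length $n$; one can then take $\mathcal{C}_1=\mathcal{C}_2$ to be that code and lift.

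The natural candidate is the $\mathbb{Z}_4$-module $\mathcal{D}_n=(2\mathbb{Z}_4)^n$, i.e.\ the set of all vectors of length $n$ with entries in $\{0,2\}$. I would first verify that $\mathcal{D}_n$ is Euclidean self-dual over $\mathbb{Z}_4$: for any $\textbf{c}=(2a_1,\ldots,2a_n)\in\mathcal{D}_n$ and any $\textbf{x}=(x_1,\ldots,x_n)\in\mathbb{Z}_4^n$, we have $\textbf{x}\cdot\textbf{c}=2\sum x_ia_i$, and requiring this to vanish in $\mathbb{Z}_4$ for every choice of the $a_i$ forces $2x_i=0$ for each $i$, i.e.\ $x_i\in\{0,2\}$. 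Hence $\mathcal{D}_n^{\perp}=\mathcal{D}_n$, and an order count ($|\mathcal{D}_n|=2^n$ while $|\mathbb{Z}_4^n|=4^n$) is consistent with self-duality.

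Having established that $\mathcal{D}_n$ is Euclidean self-dual over $\mathbb{Z}_4$ for all $n$, I would apply Theorem 4 with $\mathcal{C}_1=\mathcal{C}_2=\mathcal{D}_n$ to conclude that
\[
\mathcal{C}=v\mathcal{D}_n\oplus(1-v)\mathcal{D}_n=(2R)^n
\]
is a Euclidean self-dual code of length $n$ over $R$. This proves the existence statement for every $n$.

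There is really no serious obstacle here: the content of the theorem is entirely absorbed by Theorem 4 plus the trivial observation that $(2\mathbb{Z}_4)^n$ is self-dual over $\mathbb{Z}_4$. The only point that requires a sentence of justification is verifying that $\mathcal{D}_n^\perp=\mathcal{D}_n$, which is the short inner-product computation above; everything else is bookkeeping.
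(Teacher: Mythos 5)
Your proof is correct, but it takes a different route from the paper's. The paper stays entirely inside $R$: it observes that the ideal $(2)$ is a Euclidean self-dual code of length $1$ over $R$ (it is self-orthogonal since $2r\cdot 2s=4rs=0$, and has $4=16^{1/2}$ elements), and then proves a small product lemma --- the direct product of two Euclidean self-dual codes over $R$ is again Euclidean self-dual --- so that taking $n$-fold products of $(2)$ settles every length. You instead pass through the CRT decomposition and the easy direction of Theorem 4, reducing the question to exhibiting a single Euclidean self-dual $\mathbb{Z}_4$-code of each length, for which $(2\mathbb{Z}_4)^n$ works; your inner-product verification of $\mathcal{D}_n^\perp=\mathcal{D}_n$ is fine, and $v\mathcal{D}_n\oplus(1-v)\mathcal{D}_n$ is indeed $(2R)^n$. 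The two arguments produce the very same code $(2R)^n$, so neither is more powerful here, but they package the work differently: the paper's product lemma is reusable for combining arbitrary self-dual codes of different lengths (and is stated in a way that does not depend on Theorem 4), whereas your reduction leans on the decomposition machinery already developed and on the elementary $\mathbb{Z}_4$ fact. Both are complete; the only cosmetic point is that you might note explicitly that $\mathcal{D}_n\subseteq\mathcal{D}_n^\perp$ (i.e.\ $2\cdot 2\equiv 0 \pmod 4$) alongside the reverse inclusion you computed, though this is immediate.
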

\begin{proof}
Firstly, the element $2$ of $R$ generates a Euclidean self-dual code of length $1$ over $R$. Secondly, we assert that if $\mathcal {C}$ and $\mathcal {D}$ are both Euclidean self-dual codes of length $n$ and $m$ over $R$ respectively, then the direct product $\mathcal {C}\times \mathcal {D}$ is also a Euclidean self-dual code of length $n+m$ over $R$. In fact, let $(\textbf{c}_1 ,\textbf{d}_1), (\textbf{c}_2, \textbf{d}_2)\in \mathcal {C}\times \mathcal {D}$. Then $(\textbf{c}_1 ,\textbf{d}_1)\cdot(\textbf{c}_2, \textbf{d}_2)=(\textbf{c}_1\cdot \textbf{c}_2, \textbf{d}_1\cdot\textbf{d}_2)=(\textbf{0}, \textbf{0})$, which implies that $\mathcal {C}\times \mathcal {D}$ is Euclidean self-orthogonal. Moreover, since $\mathcal {C}$ and $\mathcal {D}$ are both Euclidean self-dual over $R$, it follows that $|\mathcal {C}|=|R|^{n/2}$ and $|\mathcal {D}|=|R|^{m/2}$. Therefore $|\mathcal {C}\times \mathcal {D}|=|\mathcal {C}||\mathcal {D}|=|R|^{(n+m)/2}$ implying $\mathcal {C}\times \mathcal {D}$ is Euclidean self-dual.
\end{proof}

For a $\mathbb{Z}_4$-linear code $C$, $C$ and its Euclidean dual $C^\perp$ have $G$ and $G^\perp$ as their standard generator matrices, respectively
\begin{equation*}
G=\left(
\begin{array}{ccc}
I_{k_1} & A & B \\
\textbf{0} & 2I_{k_2} & 2 C\\
\end{array}
\right),
\end{equation*}
\begin{equation*}
G^\perp=\left(
\begin{array}{ccc}
-B^t-C^tA^t & C^t & I_{n-k_1-k_2} \\
2A^t        & 2I_{k_2}  & \textbf{0}  \\
\end{array}
\right).
\end{equation*}

Furthermore, $C$ and $C^\perp$ are of the type $4^{k_1}2^{k_2}$ and $4^{n-k_1-k_2}2^{k_2}$, respectively. Therefore $C$ is Euclidean self-dual over $\mathbb{Z}_4$ if and only if $C$ and $C^\perp$ are of the same type, which implies that $C$ is of type $4^k2^{n-2k}$. Then, by Theorem 4, Theorem 5 and Theorem 12.5.7 \cite{Huffuman}, we have the following straightforward result.
\begin{theorem}
For $0\leq k\leq \lfloor n/2 \rfloor $, the total number of Euclidean self-dual code over $R$ of length $n$ is
\begin{equation*}
(\sum_{k=0}^{\lfloor n/2\rfloor }\nu_{n, k}2^{k(k+1)/2})^2,
\end{equation*}
where $\nu_{n, k}$ is the number of $[n, k]$ Euclidean self-orthogonal doubly-even (i.e. the Hamming weight of every codeword is divisible by $4$ ) binary codes.
\end{theorem}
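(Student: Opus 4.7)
The plan is to reduce the count over $R$ to a count over $\mathbb{Z}_4$ via the CRT-style decomposition already established, and then invoke the classical formula from Huffman--Pless. By Theorem 4, the map $\mathcal{C} \mapsto (\mathcal{C}_1, \mathcal{C}_2)$ is a bijection between Euclidean self-dual codes of length $n$ over $R$ and ordered pairs $(\mathcal{C}_1, \mathcal{C}_2)$ of Euclidean self-dual codes of length $n$ over $\mathbb{Z}_4$ (the inverse sends a pair to $v\mathcal{C}_1 \oplus (1-v)\mathcal{C}_2$, which is Euclidean self-dual by the ``if'' direction of Theorem 4). Existence of such pairs is guaranteed by Theorem 5 restricted to the $\mathbb{Z}_4$ factors. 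Consequently, if $N(n)$ denotes the number of Euclidean self-dual codes of length $n$ over $R$ and $M(n)$ the analogous count over $\mathbb{Z}_4$, then $N(n) = M(n)^2$.

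First I would verify the bijection carefully: given ordered $(\mathcal{C}_1, \mathcal{C}_2)$, the uniqueness of the expression $\mathcal{C} = v\mathcal{C}_1 \oplus (1-v)\mathcal{C}_2$ (noted just before Theorem 4) shows the map is injective, and the construction $\mathcal{C} = v\mathcal{C}_1 \oplus (1-v)\mathcal{C}_2$ shows it is surjective, with self-duality transferred in both directions by Theorem 4.

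Next I would compute $M(n)$. Every Euclidean self-dual code over $\mathbb{Z}_4$ of length $n$ has type $4^k 2^{n-2k}$ for some $0 \le k \le \lfloor n/2 \rfloor$, as observed just above the theorem statement via the standard generator matrix. Theorem 12.5.7 of Huffman--Pless then gives
\begin{equation*}
M(n) = \sum_{k=0}^{\lfloor n/2 \rfloor} \nu_{n,k}\, 2^{k(k+1)/2},
\end{equation*}
where $\nu_{n,k}$ counts $[n,k]$ binary doubly-even self-orthogonal codes; the factor $2^{k(k+1)/2}$ accounts for the number of ways a fixed self-orthogonal binary $[n,k]$ doubly-even code lifts to a self-dual $\mathbb{Z}_4$-code.

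Combining, $N(n) = M(n)^2$ yields the claimed formula. The main obstacle is essentially bookkeeping: ensuring the decomposition from Theorem 4 really produces independent self-dual factors over $\mathbb{Z}_4$ (so that the square, rather than some more intricate expression, is correct) and correctly quoting the Huffman--Pless enumeration for $M(n)$. Nothing further needs to be proved beyond these two ingredients.
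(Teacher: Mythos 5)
Your proposal is correct and follows essentially the same route as the paper: the paper likewise combines Theorem 4 (self-duality of $\mathcal{C}=v\mathcal{C}_1\oplus(1-v)\mathcal{C}_2$ over $R$ is equivalent to self-duality of both $\mathcal{C}_1$ and $\mathcal{C}_2$ over $\mathbb{Z}_4$), the type observation $4^k2^{n-2k}$, and Theorem 12.5.7 of Huffman--Pless, so that the count over $R$ is the square of the count over $\mathbb{Z}_4$. Your write-up merely makes explicit the bijection onto ordered pairs that the paper leaves as "straightforward."
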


In the following of this section, we discuss some special class of Euclidean self-dual codes over $R$. It needs the following definition first.
\begin{definition}
Let $r=a+vb$ be an element of $R$. Then the Euclidean weight of $r$ is defined as follows
\begin{equation*}
w_E(r)=w_E(a)+w_E(a+b),
\end{equation*}
where $$w_E(a)={\rm min}\{ |a|^2, |4-a|^2\}$$ and $$w_E(a+b)={\rm min}\{|a+b|^2, |4-a-b|^2\}.$$
The Euclidean weight of a vector $\textbf{c}=(c_0, c_1, \ldots, c_{n-1})\in R^n$ is the rational sum of the Euclidean weight of its components, i.e. $w_E(\textbf{c})=\sum_{i=0}^{n-1}w_E(c_i)$.
\end{definition}
\begin{lemma}
The Gray map $\Phi$ is Euclidean weight-preserving from $R^n$ to $\mathbb{Z}_4^{2n}$.
\end{lemma}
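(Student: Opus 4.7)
The plan is to prove the equality $w_E(\Phi(\mathbf{c})) = w_E(\mathbf{c})$ coordinatewise, so that additivity over the $n$ components finishes the job. This is really just a matter of matching the definition of the Euclidean weight on $R$ with the image under $\Phi$.

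First I would fix an arbitrary $\mathbf{c} = (c_0, c_1, \ldots, c_{n-1}) \in R^n$ with $c_i = a_i + b_i v$, $a_i, b_i \in \mathbb{Z}_4$, and use the definition in Definition 3 to write
\begin{equation*}
w_E(\mathbf{c}) = \sum_{i=0}^{n-1} w_E(c_i) = \sum_{i=0}^{n-1}\bigl(w_E(a_i) + w_E(a_i+b_i)\bigr),
\end{equation*}
where each $w_E(a_i)$ and $w_E(a_i+b_i)$ is the Euclidean weight of an element of $\mathbb{Z}_4$, namely $\min\{|x|^2,|4-x|^2\}$.

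Next I would apply the Gray map to obtain
\begin{equation*}
\Phi(\mathbf{c}) = (a_0, a_0+b_0, a_1, a_1+b_1, \ldots, a_{n-1}, a_{n-1}+b_{n-1}) \in \mathbb{Z}_4^{2n}.
\end{equation*}
Since the Euclidean weight of a vector in $\mathbb{Z}_4^{2n}$ is defined as the (rational) sum of the Euclidean weights of its components, I can read off
\begin{equation*}
w_E(\Phi(\mathbf{c})) = \sum_{i=0}^{n-1}\bigl(w_E(a_i) + w_E(a_i+b_i)\bigr),
\end{equation*}
which is exactly the expression obtained above for $w_E(\mathbf{c})$.

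There is essentially no obstacle here: the definition of the Euclidean weight on $R$ given in Definition 3 was engineered so that the two $\mathbb{Z}_4$-coordinates $a_i$ and $a_i+b_i$ of $\Phi(c_i)$ contribute exactly the summands $w_E(a_i)$ and $w_E(a_i+b_i)$. The only point worth checking carefully is that the convention for $w_E$ on $\mathbb{Z}_4$ used implicitly on the target side (i.e.\ $w_E(0)=0$, $w_E(1)=w_E(3)=1$, $w_E(2)=4$) agrees with the formula $\min\{|a|^2,|4-a|^2\}$ appearing in Definition 3, which is immediate. Hence $\Phi$ is Euclidean weight-preserving.
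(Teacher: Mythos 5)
Your proof is correct and follows the same route as the paper: both arguments simply unpack Definition 3 and the definition of $\Phi$ to match $w_E(c_i)=w_E(a_i)+w_E(a_i+b_i)$ against the contributions of the two $\mathbb{Z}_4$-coordinates $a_i$ and $a_i+b_i$ of $\Phi(c_i)$. Your version just spells out the coordinatewise bookkeeping that the paper leaves implicit.
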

\begin{proof}
It is well known that the Euclidean weight of the element $a$ of $\mathbb{Z}_4$ is defined as $w_E(a)={\rm min}\{ |a|^2, |4-a|^2\}$ and the Euclidean weight of a vector $\textbf{c}=(c_0, c_1, \ldots, c_{n-1})\in \mathbb{Z}_4^n$ is the rational sum of the Euclidean weight of its components, i.e. $w_E(\textbf{c})=\sum_{i=0}^{n-1}w_E(c_i)$. Then, by the definitions of the Gray map $\Phi$ and the Euclidean weight of the element of $R$, we can show that $\Phi$ is a Euclidean weight-preserving map from $R^n$ to $\mathbb{Z}_4^{2n}$.
\end{proof}

A Euclidean self-dual code $\mathcal {C}$ of length $n$ over $R$ is called Type II if the Euclidean weight of every codeword of $\mathcal {C}$ is multiple of $8$, otherwise $\mathcal {C}$ is called Type I.
\begin{theorem}
Let $\mathcal {C}$ be a Euclidean self-dual code of length $n$ over $R$. Then\\
{\rm (i)}~$\mathcal {C}$ is Type II if and only if $n$ is multiple of $4$.\\
{\rm (ii)}~If $\mathcal {C}$ is Type II, so is $\Phi(\mathcal {C})$.\\
{\rm (iii)}~The minimum Euclidean weight of $\mathcal {C}$ satisfies $$d_E\leq8\lfloor n/12\rfloor+8.$$
\end{theorem}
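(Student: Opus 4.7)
The plan is to reduce everything to known facts about $\mathbb{Z}_4$-codes via the Gray map $\Phi$, using Lemma 2 (which says $\Phi$ preserves Euclidean weight) together with Theorem 2 (which says $\Phi(\mathcal{C})$ is a Euclidean self-dual $\mathbb{Z}_4$-linear code of length $2n$ whenever $\mathcal{C}$ is Euclidean self-dual over $R$). Both facts together let us transport the Type II structure between $\mathcal{C}$ and $\Phi(\mathcal{C})$ without loss.

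For part (ii), I would first observe that by Lemma 2, $w_E(\mathbf{c}) = w_E(\Phi(\mathbf{c}))$ for every $\mathbf{c} \in R^n$. Thus if all codewords of $\mathcal{C}$ have Euclidean weight divisible by $8$, the same holds for $\Phi(\mathcal{C})$, and by Theorem 2 $\Phi(\mathcal{C})$ is itself Euclidean self-dual over $\mathbb{Z}_4$; so $\Phi(\mathcal{C})$ is Type II over $\mathbb{Z}_4$. Moreover, since $\Phi$ is a bijection, the converse is also automatic, so $\mathcal{C}$ is Type II over $R$ if and only if $\Phi(\mathcal{C})$ is Type II over $\mathbb{Z}_4$.

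For part (i), I would invoke the standard fact that a Euclidean self-dual code over $\mathbb{Z}_4$ exists as a Type II code if and only if its length is a multiple of $8$ (this is the $\mathbb{Z}_4$-analogue of the classical divisibility condition, e.g.\ from \cite{Huffuman}). Combining this with (ii): the length of $\Phi(\mathcal{C})$ is $2n$, and Type II is equivalent to $8 \mid 2n$, i.e.\ $4 \mid n$. For part (iii), I would similarly apply the Mallows--Sloane--Rains-type Euclidean weight bound for Type II $\mathbb{Z}_4$-codes of length $N$, namely $d_E \le 8\lfloor N/24\rfloor + 8$. Since $\Phi(\mathcal{C})$ has length $N = 2n$ and shares the minimum Euclidean weight with $\mathcal{C}$ (by Lemma 2), the substitution $N = 2n$ yields exactly $d_E \le 8\lfloor n/12\rfloor + 8$.

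The main obstacle is locating and properly citing the two background $\mathbb{Z}_4$ results: the Type II length condition $8 \mid N$ and the Euclidean weight bound $d_E \le 8\lfloor N/24 \rfloor + 8$. Once those are in hand, each of (i), (ii), (iii) is a one-line transfer through $\Phi$. A subtlety worth noting is that (iii) as written is only meaningful when $\mathcal{C}$ is Type II; for Type I the appropriate bound on $\mathbb{Z}_4$ has a different constant, so I would either insert the Type II hypothesis explicitly into (iii) or derive the bound in both cases and note that the Type I bound is no worse.
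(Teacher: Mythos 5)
Your proposal follows essentially the same route as the paper: both reduce everything to $\mathbb{Z}_4$ through the Gray map, using Lemma 2 (Euclidean-weight preservation) and Theorem 2 (self-duality of $\Phi(\mathcal{C})$ of length $2n$), and then invoke the known $\mathbb{Z}_4$ facts that Type II codes occur only in lengths divisible by $8$ and that $d_E \le 8\lfloor N/24\rfloor + 8$ with $N=2n$. Your closing remark that (iii) implicitly requires the Type II hypothesis (and that (i) is really an existence statement) is a fair observation the paper glosses over, but otherwise the two arguments coincide.
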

\begin{proof}
Let $\mathcal {C}$ be a Euclidean self-dual code of length $n$ over $R$. Then, by Theorem 2, $\Phi(\mathcal {C})$ is a Euclidean self-dual code of length $2n$ over $\mathbb{Z}_4$. From Lemma 2, we have that (ii) is valid. For (i), it is well known that there exist self-dual codes of length $n$ over $\mathbb{Z}_4$ if and only if $n$ is multiple of $8$ \cite{Bannai}, which follows (i). (iii) is follows from the Theorem 12.5.1 of \cite{Huffuman}.
\end{proof}

The Euclidean self-dual codes meeting the bound in Theorem 7(iii) are called Euclidean extremal. By Lemma 2 and Theorem 7(ii), if $\mathcal {C}$ is a Euclidean extremal Type II code over $R$, so is $\Phi(\mathcal {C})$ over $\mathbb{Z}_4$. The bound of Theorem 7(iii) is obviously the bound on the minimum Gray weight of Euclidean self-dual codes over $R$, but highly unsatisfactory one. An useful further studying is how to construct Type II codes over $R$.
\par Similarly, by the Theorem 12.5.8 in \cite{Huffuman}, we have the following result on the enumerator of Type II codes over $R$.
\begin{theorem}
Let $n\equiv 0({\rm mod}~4)$ and $N=2n$. Then the total number of Type II codes of length $n$ over $R$ is
\begin{equation*}
\sum_{i=0}^{n}\mu_{N, k}2^{1+k(k-1)/2},
\end{equation*}
where $\mu_{N, k}$ is the number of $[N, k]$ Euclidean self-orthogonal doubly-even binary codes containing the vector $\textbf{1}$.
\end{theorem}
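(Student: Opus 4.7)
The plan is to mirror the proof of Theorem 6, upgrading ``Euclidean self-dual'' to ``Type II'' throughout. The essential additional ingredient, beyond the Chinese Remainder decomposition of Theorem 4 and Huffman's Theorem 12.5.8, is that the Euclidean weight on $R$ is \emph{additive} along the splitting $R = v\mathbb{Z}_4 \oplus (1-v)\mathbb{Z}_4$.

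First I would establish a Type II analogue of Theorem 4. Writing $c = a + bv = (1-v)a + v(a+b)$, the definition of $w_E$ yields $w_E(c) = w_E(a) + w_E(a+b)$ at once, so for a codeword $\mathbf{c} = (1-v)\boldsymbol{\alpha} + v\boldsymbol{\beta}$ of $\mathcal{C} = v\mathcal{C}_1 \oplus (1-v)\mathcal{C}_2$ with $\boldsymbol{\alpha} \in \mathcal{C}_2$ and $\boldsymbol{\beta} \in \mathcal{C}_1$ we obtain
\begin{equation*}
w_E(\mathbf{c}) = w_E(\boldsymbol{\alpha}) + w_E(\boldsymbol{\beta}).
\end{equation*}
Setting in turn $\boldsymbol{\beta} = \mathbf{0}$ and $\boldsymbol{\alpha} = \mathbf{0}$ isolates the two $\mathbb{Z}_4$-pieces, so ``$8 \mid w_E(\mathbf{c})$ for every $\mathbf{c} \in \mathcal{C}$'' is equivalent to the pair of conditions ``$8 \mid w_E(\boldsymbol{\alpha})$ for every $\boldsymbol{\alpha} \in \mathcal{C}_2$'' and ``$8 \mid w_E(\boldsymbol{\beta})$ for every $\boldsymbol{\beta} \in \mathcal{C}_1$''. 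Combined with Theorem 4, this shows that $\mathcal{C}$ is Type II over $R$ if and only if both $\mathcal{C}_1$ and $\mathcal{C}_2$ are Type II over $\mathbb{Z}_4$.

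Since the correspondence $\mathcal{C} \leftrightarrow (\mathcal{C}_1, \mathcal{C}_2)$ is bijective, this identifies Type II codes of length $n$ over $R$ with ordered pairs of Type II $\mathbb{Z}_4$-codes of length $n$. Invoking Theorem 12.5.8 of \cite{Huffuman} then provides the explicit enumeration of Type II $\mathbb{Z}_4$-codes in terms of the constants $\mu_{N,k}$, and multiplying the two factors (one for $\mathcal{C}_1$, one for $\mathcal{C}_2$) produces the stated formula. The main technical point is the additivity of $w_E$ across the CRT decomposition; once that is isolated, the enumeration reduces mechanically to the $\mathbb{Z}_4$-case applied twice, and the remaining work is bookkeeping to align the summation range and the definition of $\mu_{N,k}$ in \cite{Huffuman} with the parameters in the statement (notably the relationship $N = 2n$ and the index over which one sums).
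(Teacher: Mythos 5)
Your CRT reduction is sound and, as far as it goes, more rigorous than what the paper offers: the additivity of $w_E$ across $\mathcal{C}=v\mathcal{C}_1\oplus(1-v)\mathcal{C}_2$ does show that $\mathcal{C}$ is Type II over $R$ exactly when $\mathcal{C}_1$ and $\mathcal{C}_2$ are both Type II over $\mathbb{Z}_4$, so the number of Type II codes of length $n$ over $R$ is the \emph{square} of the number of Type II $\mathbb{Z}_4$-codes of length $n$. The gap is your final sentence. That squared quantity is $\bigl(\sum_{k}\mu_{n,k}2^{1+k(k-1)/2}\bigr)^2$ when $8\mid n$, and it is $0$ when $n\equiv 4\ ({\rm mod}~8)$ because Type II $\mathbb{Z}_4$-codes exist only in lengths divisible by $8$. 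Neither of these is the stated expression $\sum_{k}\mu_{N,k}2^{1+k(k-1)/2}$ with $N=2n$, which is a single unsquared sum indexed by binary codes of length $2n$; no bookkeeping about summation ranges reconciles a square of length-$n$ counts with a single length-$2n$ count (contrast Theorem 6, where the paper does write the length-$n$ count squared). So your argument, carried out honestly, proves a different formula from the one in the statement and in fact shows the statement cannot be obtained this way.

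The paper's own proof is a one-line appeal to Theorem 12.5.8 of \cite{Huffuman} applied at length $N=2n$: it implicitly identifies Type II codes over $R$ of length $n$ with Type II $\mathbb{Z}_4$-codes of length $2n$ via the Gray map $\Phi$. That identification is the step you would need and do not have (and which the paper does not justify): $\Phi$ carries Type II $R$-codes to Type II $\mathbb{Z}_4$-codes of length $2n$, but nothing shows that every Type II $\mathbb{Z}_4$-code of length $2n$ is the image of an $R$-linear code, so injectivity of the count in one direction is all that is available. Your CRT computation actually exposes the tension here (and with the ``if'' direction of Theorem 7(i), since for $n\equiv 4\ ({\rm mod}~8)$ your argument forces the count to be zero). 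If you want a provable enumeration, state and prove the squared length-$n$ formula for $8\mid n$; as a proof of the theorem as written, your attempt does not close.
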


A Euclidean self-dual code has the obvious property that its weight distribution is the same as that of its dual. Some of the results proved for Euclidean self-dual codes require only this equality of weight distributions. This has led to a broader class of codes known as Euclidean formally self-dual codes that include Euclidean self-dual codes. A code $\mathcal {C}$ is called \emph{Euclidean formally self-dual} provided $\mathcal {C}$ and $\mathcal {C}^\perp$ have the same weight distribution. A subfamily of formally Euclidean self-dual codes is the class of Euclidean isodual codes. A code $\mathcal {C}$ is called \emph{Euclidean isodual} if it is equivalent to its dual. Clearly, any Euclidean isodual code is Euclidean formally self-dual; however, a Euclidean formally self-dual code need not to be Euclidean isodual. In the rest of this subsection, we will extend three methods described in \cite{Huffuman} to construct Euclidean isodual codes over $R$.
\vskip 3mm \noindent
{\bf Construction A.}\emph{ Let $M$ be an $n\times n$ matrix over $R$ such that $M^T=M$. Then the code generated by the matrix $[I_n|M]$ is a Euclidean isodual code of length $2n$ over $R$.}
\begin{proof}
Consider the matrix $G=[-M^T|I_n]=[-M|I_n]$ and $\mathcal {G}=[I_n|M]$. Let $\mathscr{C}$ and $\mathcal {C}$ be the codes generated by $G$ and $\mathcal {G}$, respectively. Clearly, $\mathcal {C}$ and $\mathscr{C}$ are equivalent and with size $16^n$. Then we just need to show $\mathscr{C}=\mathcal {C}^\perp$.
\par
Let $\textbf{v}$ be the $i$-th row of $G$ and $\textbf{w}$ be the $j$-th row of $\mathcal {G}$, respectively. Then $\textbf{v}\cdot \textbf{w}=0$ since $M^T=M$, which implies that $\mathscr{C}=\mathcal {C}^\perp$ and $\mathcal {C}$ is equivalent to its dual $\mathcal {C}^\perp$.
\end{proof}
\vskip 3mm \noindent
{\bf Construction B.} \emph{Let $M$ be an $n\times n$ circulant matrix over $R$. Then the matrix $[I_n|M]$ generates a Euclidean isodual code of length $2n$ over $R$.}
\begin{proof}
Let $\mathcal {C}$ and $\mathscr{C}$ be the codes generated by $[I_n|M]=G$ and $[-M^T|I_n]=\mathcal {G}$, respectively. Then $\mathscr{C}=\mathcal {C}^\perp$. Obviously, $\mathscr{C}$ is equivalent to the code $\mathcal {D}$ generated by $[M^T|I_n]$. Therefore, we just need to show that $\mathcal {C}$ is equivalent to $\mathcal {D}$ as follows.

(i)~Apply a row permutation $\sigma$ such that the first column of $\sigma(M^T)$ is the same as the first column of $M$. Since $M$ is circulant, it follows that every column of $M$ is then equal to a column of $\sigma(M^T)$.

(ii)~Apply a column permutation $\tau$ such that $\tau(\sigma(M^T))=M$.

(iii)~Use another column permutation $\rho$ such that $\rho(\sigma(I_n))=I_n$.

Then we obtain the matrix $G$ from $[M^T|I_n]$ by the consecutive applications of $\sigma,\tau,\rho$, which implies that $\mathcal {C}$ is equivalent to $\mathcal {D}$. Therefore $\mathcal {C}$ is equivalent to $\mathscr{C}=\mathcal {C}^\perp$.
\end{proof}
\vskip 3mm \noindent
{\bf Construction C.} \emph{Let $B$ be an $n\times n$ bordered circulant matrix as follows
$$B=\left[\begin{array}{cccc}\alpha&\beta&\cdots&\beta\\ \gamma& & &\\\vdots& & \huge{M} &\\\gamma& & & \end{array}\right],$$
where $M$ is an $(n-1)\times (n-1)$ circulant matrix over $R$ and $\alpha,\beta,\gamma\in R$ such that $\gamma=\beta$ or $\gamma=-\beta$. Then the matrix $[I_n|B]$ generates a Euclidean isodual code $\mathcal {C}$ of length $2n$ over $R$.}
\begin{proof}
Let $\mathcal {C}$ and $\mathscr{C}$ be the codes generated by $G=[I_n|B]$ and $\mathcal {G}=[-B^T|I_n]$, respectively. Then $\mathscr{C}=\mathcal {C}^\perp$. By the same method was done in construction B, we have the parts of $G$ and $\mathcal {G}$ except $\beta$ and $\gamma$ can be made equivalent. Multiplying all the columns except the $I_n$ by $-1$, we have $\mathscr{C}$ is equivalent to a code $\mathcal {D}$ generated by the matrix $[I_n|A]$, where
$$A=\left[\begin{array}{cccc}\alpha&\gamma&\cdots&\gamma\\ \beta& & &\\\vdots& & \huge{M} &\\\beta& & & \end{array}\right].$$
If $\beta=\gamma$, then $\mathcal {C}=\mathcal {D}$. If $\beta=-\gamma$, then multiply all but the first row of $G$ by $-1$, we have $\mathcal {C}$ is equivalent to $\mathcal {D}$. Therefore $\mathcal {C}$ is equivalent to $\mathscr{C}$.
\end{proof}

Let $\mathcal {C}$ be a Euclidean isodual code over $R$. Then, by Theorem 2, $\Phi(\mathcal {C})$ is $\mathbb{Z}_4$-Euclidean isodual. Therefore, Construction A, B, C and the Gray map $\Phi$ lead to constructions of $\mathbb{Z}_4$-Euclidean isodual actually.
\subsection{Hermitian self-dual codes and Complex lattices}
In this subsection, we will discuss how to construct the unimodular complex lattice from the Hermitian self-dual code over $R$. Firstly, we need another inner product on $R^n$ called Hermitian inner-product.
\begin{definition}
Let $\textbf{w}, \textbf{u}\in R^n$. Then the Hermitian inner-product of $\textbf{w}, \textbf{u}$ is defined as $\langle \textbf{w}, \textbf{u}\rangle=\sum_{i=0}^{n-1}w_i\overline{u}_i$, where $\overline{v}=1-v$. For any code $\mathcal {C}$ of length $n$ over $R$, the Hermitian dual of $\mathcal {C}$ is $\mathcal {C}^H=\{\textbf{w}\in R^n|~\langle \textbf{w}, \textbf{u}\rangle=0, ~\forall \textbf{u}\in\mathcal {C}\}$.
\end{definition}

Let $2\ell+1$ be a square free integer with $\ell\equiv7({\rm mod~}8)$. Define $K=\mathbb{Q}(\sqrt{-2\ell-1})$. Let $\omega=\frac{1+\sqrt{-2\ell-1}}{2}$. Define $\mathcal {O}_K=\mathbb{Z}[\omega]$, we have that $\mathcal {O}_K$ is the ring of integers of the field $K$. The $\omega$ satisfies the equation $X^2-X+\frac{\ell+1}{2}$. Notice that $\ell\equiv7({\rm mod~}8)$ so that $\frac{\ell+1}{2}$ is an integer divisible by $4$.
\par
Consider the canonical homomorphism $\rho:~\mathcal {O}_K\rightarrow \mathcal {O}_K/4\mathcal {O}_K$. Now the image of $\omega$ satisfies the equation $X^2-X=0$.
\begin{lemma}
The ring $\mathcal {O}_K/4\mathcal {O}_K$ is ring isomorphic to $R$.
\end{lemma}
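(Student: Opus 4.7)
The plan is to exhibit the isomorphism by writing $\mathcal{O}_K$ as a polynomial ring modulo its defining relation, reducing mod $4$, and then observing that the constant term of the defining polynomial vanishes mod $4$ under the hypothesis $\ell\equiv 7\pmod 8$.

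First I would use the fact that since $\omega$ is an algebraic integer with minimal polynomial $f(X)=X^2-X+\frac{\ell+1}{2}\in\mathbb{Z}[X]$, the evaluation map $\mathbb{Z}[X]\to\mathcal{O}_K=\mathbb{Z}[\omega]$ sending $X\mapsto\omega$ induces an isomorphism $\mathcal{O}_K\simeq \mathbb{Z}[X]/(f(X))$. Quotienting both sides by $4$, I get
\begin{equation*}
\mathcal{O}_K/4\mathcal{O}_K\;\simeq\;\mathbb{Z}[X]/\bigl(4,\,X^2-X+\tfrac{\ell+1}{2}\bigr)\;\simeq\;\mathbb{Z}_4[X]/\bigl(X^2-X+\tfrac{\ell+1}{2}\bigr),
\end{equation*}
where the last step is the standard identification obtained by quotienting by the ideal $(4)$ first.

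Next I would exploit the congruence condition. Since $\ell\equiv 7\pmod 8$, we have $\ell+1\equiv 0\pmod 8$, so $\frac{\ell+1}{2}\equiv 0\pmod 4$; the excerpt already states that $\frac{\ell+1}{2}$ is an integer divisible by $4$. Therefore, when reduced modulo $4$, the defining polynomial becomes $X^2-X$, and
\begin{equation*}
\mathcal{O}_K/4\mathcal{O}_K\;\simeq\;\mathbb{Z}_4[X]/(X^2-X).
\end{equation*}
Finally, the ring on the right is, by the very definition given at the beginning of Section~2, isomorphic to $R=\mathbb{Z}_4+v\mathbb{Z}_4$ via $X\mapsto v$, which yields the claimed isomorphism.

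There is no real obstacle in this argument; the only point that needs care is justifying the two quotient identifications in the first displayed equation (i.e.\ that quotients by two ideals can be iterated in either order) and verifying the divisibility of $\frac{\ell+1}{2}$ by $4$, both of which are routine. One should also briefly note that, for completeness, the image $\rho(\omega)$ indeed corresponds to $v$ under the composite isomorphism, so that the condition $\rho(\omega)^2=\rho(\omega)$ mentioned just before the lemma is consistent with $v^2=v$ in $R$.
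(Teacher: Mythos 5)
Your proof is correct and rests on the same key fact as the paper's: since $\ell\equiv 7\ ({\rm mod~}8)$, the constant term $\tfrac{\ell+1}{2}$ of the minimal polynomial of $\omega$ is divisible by $4$, so $\omega^2=\omega$ in $\mathcal{O}_K/4\mathcal{O}_K$. The paper simply writes down the explicit map $a+b\omega\mapsto a+bv$ and checks it is a bijective homomorphism, whereas you reach the same conclusion through the presentation $\mathcal{O}_K\simeq\mathbb{Z}[X]/(X^2-X+\tfrac{\ell+1}{2})$ and iterated quotients; this is essentially the same argument, packaged so that the well-definedness and bijectivity of the paper's map come for free.
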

\begin{proof}
Define the map $\Psi:~\mathcal {O}_K/4\mathcal {O}_K\rightarrow R$ by $\Psi(a+b\omega)=a+bv$, where $a, b \in \mathbb{Z}_4$. The map is bijective and the fact that it is a homomorphism follows that $\omega^2=\omega$ in $\mathcal {O}_K/4\mathcal {O}_K$.
\end{proof}

Furthermore, we notice that $\Psi(\overline{\omega})=\Psi(\overline{\frac{1}{2}+\frac{\sqrt{-2\ell-1}}{2}})=\Psi(\overline{\frac{1}{2}+\frac{\sqrt{2\ell+1}}{2}i})=
\Psi(\frac{1}{2}-\frac{\sqrt{2\ell+1}}{2}i)=1-v=\overline{v}$. Therefore complex conjugation corresponds to conjugation in $R$ via the isomorphism $\Psi$.
\par
A lattice $\Lambda$ over $K$ is an $\mathcal {O}_K$-submodule of $K^n$ with full rank. The Hermitian dual of $\Lambda$ is defined as
\begin{equation*}
\Lambda^*=\{\textbf{v}\in K^n|~\langle \textbf{v}, \textbf{w}\rangle\in \mathcal {O}_k,~\forall \textbf{w}\in\Lambda\}.
\end{equation*}

If $\Lambda=\Lambda^*$, we say $\Lambda$ is unimodular and if $\Lambda\subseteq \Lambda^*$, we say $\Lambda$ is integral.
\begin{lemma}
Let $\mathcal {C}$ be a linear code of length $n$ over $R$. Then we have the following results\\
{\rm (i)}~$\Lambda(\mathcal {C})=\{\textbf{v}\in \mathcal {O}_K^n|~\rho(\textbf{v})\in \mathcal {C}\}$ is an $\mathcal {O}_K$-lattice. \\
{\rm (ii)}~$\Lambda(\mathcal {C}^H)=4\Lambda(\mathcal {C})^*$.\\
{\rm (iii)}~$(\frac{1}{2}\Lambda(\mathcal {C}))^*=2\Lambda(\mathcal {C})^*$.
\end{lemma}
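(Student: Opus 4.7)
The plan is to exploit the ring isomorphism $\Psi:\mathcal{O}_K/4\mathcal{O}_K\to R$ from the previous lemma, together with the key compatibility $\Psi\circ\rho$ intertwines complex conjugation on $\mathcal{O}_K$ with the involution $v\mapsto 1-v$ on $R$ (already checked just before the statement). Throughout, I would freely identify $\mathcal{O}_K/4\mathcal{O}_K$ with $R$ via $\Psi$ and write $\rho$ both for the scalar map and its componentwise extension $\mathcal{O}_K^n\to R^n$.

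For (i), I would observe that $\Lambda(\mathcal{C}) = \rho^{-1}(\mathcal{C})$ in $\mathcal{O}_K^n$. Since $\mathcal{C}$ is an $R$-submodule of $R^n$ and $\rho$ is an $\mathcal{O}_K$-linear ring homomorphism (with $\mathcal{O}_K$ acting on $R$ through $\rho$), the preimage is automatically an $\mathcal{O}_K$-submodule of $\mathcal{O}_K^n$. Full rank is immediate because $4\mathcal{O}_K^n\subseteq\Lambda(\mathcal{C})$ (every element of $4\mathcal{O}_K^n$ reduces to $0\in\mathcal{C}$), and $4\mathcal{O}_K^n$ already has rank $2n$ over $\mathbb{Z}$.

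For (ii), the main step is the identity $\rho(\langle\mathbf{w},\mathbf{v}\rangle)=\langle\rho(\mathbf{w}),\rho(\mathbf{v})\rangle$, which holds because conjugation commutes with $\rho$. For the inclusion $\Lambda(\mathcal{C}^H)\subseteq 4\Lambda(\mathcal{C})^{*}$, I take $\mathbf{w}\in\Lambda(\mathcal{C}^H)$ and any $\mathbf{v}\in\Lambda(\mathcal{C})$; applying $\rho$ to $\langle\mathbf{w},\mathbf{v}\rangle$ gives $0$ in $R$, so $\langle\mathbf{w},\mathbf{v}\rangle\in 4\mathcal{O}_K$, hence $\tfrac{1}{4}\mathbf{w}\in\Lambda(\mathcal{C})^{*}$. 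For the reverse inclusion, I take $\mathbf{w}\in 4\Lambda(\mathcal{C})^{*}$. Testing against the vectors $4\mathbf{e}_i\in\Lambda(\mathcal{C})$ forces each coordinate of $\mathbf{w}$ into $\mathcal{O}_K$, so $\mathbf{w}\in\mathcal{O}_K^n$. Then for any $\mathbf{c}\in\mathcal{C}$, choose a lift $\mathbf{v}\in\Lambda(\mathcal{C})$ with $\rho(\mathbf{v})=\mathbf{c}$; the hypothesis $\langle\mathbf{w},\mathbf{v}\rangle\in 4\mathcal{O}_K$ translates under $\rho$ into $\langle\rho(\mathbf{w}),\mathbf{c}\rangle=0$ in $R$, so $\rho(\mathbf{w})\in\mathcal{C}^H$ and $\mathbf{w}\in\Lambda(\mathcal{C}^H)$. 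The delicate point I would be careful about is that showing $\mathbf{w}\in\mathcal{O}_K^n$ really does need the test vectors $4\mathbf{e}_i$; this is where the shift by the factor $4$ enters, and it is the one place where a routine argument could go wrong if one forgot to verify integrality of the coordinates.

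For (iii), this is a general lattice identity that I would verify directly from the definition of the Hermitian dual. Unwinding $(\tfrac{1}{2}\Lambda(\mathcal{C}))^{*}$ gives the set of $\mathbf{w}\in K^n$ with $\tfrac{1}{2}\langle\mathbf{w},\mathbf{v}\rangle\in\mathcal{O}_K$ for every $\mathbf{v}\in\Lambda(\mathcal{C})$, i.e., $\langle\mathbf{w},\mathbf{v}\rangle\in 2\mathcal{O}_K$; unwinding $2\Lambda(\mathcal{C})^{*}$ gives exactly the same set after writing $\mathbf{w}=2\mathbf{u}$ and using $\mathbb{Z}$-bilinearity of $\langle\cdot,\cdot\rangle$. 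Hence the two lattices coincide. Part (ii) is clearly the main obstacle, and within it the integrality verification via the test vectors $4\mathbf{e}_i$ is the step requiring the most care.
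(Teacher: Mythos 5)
Your proposal is correct and follows essentially the same route as the paper: part (i) as a preimage of an $R$-submodule, part (ii) by transporting the Hermitian form through $\rho$ using the compatibility $\Psi(\overline{\omega})=\overline{v}$, and part (iii) by directly unwinding both dual lattices. The only difference is that you explicitly verify, via the test vectors $4\mathbf{e}_i$, that an element of $4\Lambda(\mathcal{C})^{*}$ has coordinates in $\mathcal{O}_K$ before applying $\rho$ -- a point the paper's proof silently assumes -- which is a small but genuine improvement in rigor.
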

\begin{proof}
(i)~It is immediate from the definition of $\Lambda(\mathcal {C})$ and $\mathcal {C}$ is an $R$-submodule of $R^n$.
\par\vskip 2mm
(ii)~If $\textbf{v}\in4\Lambda(\mathcal {C})^*$, then $\langle \frac{1}{4}\textbf{v}, \textbf{w}\rangle\in \mathcal {O}_K$ for all $\textbf{w}\in \Lambda(\mathcal {C})$. Therefore, we have $\sum_{i=0}^{n-1}\frac{1}{4}v_i\overline{w}_i\in \mathcal {O}_K\Rightarrow \sum_{i=0}^{n-1}v_i\overline{w}_i\in 4\mathcal {O}_k\Rightarrow \langle \rho(\textbf{v}), \rho(\textbf{w})\rangle=0$, which implies that $\textbf{v}\in \Lambda(\mathcal {C}^H)$. Then $4\Lambda(\mathcal {C})\subseteq \Lambda(\mathcal {C}^H)$.
\par
Let $\textbf{v}\in \Lambda(\mathcal {C}^H)$. Then $\rho(\textbf{v})\in \mathcal {C}^H$ and $\langle \rho(\textbf{v}), \rho(\textbf{w})\rangle=0$ for all $\textbf{w}\in \Lambda(\mathcal {C})$. Then we have $\sum_{i=0}^{n-1}v_i\overline{w}_i\in 4\mathcal {O}_K\Rightarrow \sum_{i=0}^{n-1}\frac{1}{4}v_i\overline{w}_i\in \mathcal {O}_K\Rightarrow \langle \frac{1}{4}\textbf{v}, \textbf{w}\rangle\in \mathcal {O}_K$, which implies that $\textbf{v}\in 4 \Lambda(\mathcal {C})^*$. Therefore $\Lambda(\mathcal {C}^H)=4\Lambda(\mathcal {C})^*$.
\par\vskip 2mm
(iii)~Let $\textbf{v}\in (\frac{1}{2}\Lambda(\mathcal {C}))^*$, that is $\langle \textbf{v}, \textbf{w}\rangle\in \mathcal {O}_K$ for all $\textbf{w}\in \frac{1}{2}\Lambda(\mathcal {C})$. This implies that $(\frac{1}{2}\times 2)\langle \textbf{v}, \textbf{w}\rangle \in \mathcal {O}_K$. Then we have $\langle \frac{1}{2}\textbf{v}, 2\textbf{w}\rangle\in \mathcal {O}_K$ for all $\textbf{w}\in \frac{1}{2}\Lambda(\mathcal {C})$, that is for all $2\textbf{w}\in \Lambda(\mathcal {C})$. Then we have $\frac{1}{2}\textbf{v}\in \Lambda(\mathcal {C})^*$, which implies that $\textbf{v}\in 2\Lambda(\mathcal {C})^*$. Therefore $(\frac{1}{2}\Lambda(\mathcal {C}))^*\subseteq 2 \Lambda(\mathcal {C})^*$.
\par
Now, assume that $\textbf{v}\in 2\Lambda(\mathcal {C})^*$. Then $\langle \textbf{v}, \textbf{w}\rangle\in \mathcal {O}_K$ for all $\textbf{w}\in 2\Lambda(\mathcal {C})$. That is $\frac{1}{2}\langle \textbf{v}, \textbf{w}\rangle\in \mathcal {O}_K$ for all $\textbf{w}\in \Lambda(\mathcal {C})$, which implies that $\textbf{v}\in (\frac{1}{2}\Lambda(\mathcal {C}))^*$. Then $2\Lambda(\mathcal {C})^*=(\frac{1}{2}\Lambda(\mathcal {C}))^*$.
\end{proof}
\begin{theorem}
The linear code $\mathcal {C}$ over $R$ is Hermitian self-dual if and only if $\frac{1}{2}\Lambda(\mathcal {C})$ is unimodular.
\end{theorem}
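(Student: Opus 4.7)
The plan is to reduce the statement to the two explicit identities in Lemma 4, namely
$\Lambda(\mathcal{C}^H) = 4\Lambda(\mathcal{C})^*$ and $(\tfrac{1}{2}\Lambda(\mathcal{C}))^* = 2\Lambda(\mathcal{C})^*$, and then use that the construction $\mathcal{D}\mapsto \Lambda(\mathcal{D})$ is injective on codes (so that equality of the two $\mathcal{O}_K$-lattices obtained from $\mathcal{C}$ and $\mathcal{C}^H$ is equivalent to equality of the codes themselves).

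First I would record the observation that $\Lambda$ is a bijection between $R$-submodules of $R^n$ and $\mathcal{O}_K$-lattices containing $4\mathcal{O}_K^n$: indeed the canonical map $\rho\colon \mathcal{O}_K^n\to R^n$ (obtained componentwise from $\mathcal{O}_K/4\mathcal{O}_K\simeq R$) is surjective with kernel $4\mathcal{O}_K^n$, and by definition $\Lambda(\mathcal{D}) = \rho^{-1}(\mathcal{D})$. Hence for any two codes $\mathcal{C}_1,\mathcal{C}_2\subseteq R^n$ one has $\mathcal{C}_1=\mathcal{C}_2$ if and only if $\Lambda(\mathcal{C}_1)=\Lambda(\mathcal{C}_2)$. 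This is the bridge that will let me pass between the code level and the lattice level throughout.

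For the forward direction, assume $\mathcal{C} = \mathcal{C}^H$. Applying $\Lambda$ and invoking part (ii) of Lemma 4 gives $\Lambda(\mathcal{C}) = 4\Lambda(\mathcal{C})^*$, equivalently $\Lambda(\mathcal{C})^* = \tfrac{1}{4}\Lambda(\mathcal{C})$. Multiplying this identity by $2$ and combining with part (iii) yields
\begin{equation*}
\bigl(\tfrac{1}{2}\Lambda(\mathcal{C})\bigr)^* \;=\; 2\Lambda(\mathcal{C})^* \;=\; \tfrac{1}{2}\Lambda(\mathcal{C}),
\end{equation*}
so $\tfrac{1}{2}\Lambda(\mathcal{C})$ is unimodular. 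For the converse, starting from $\tfrac{1}{2}\Lambda(\mathcal{C}) = \bigl(\tfrac{1}{2}\Lambda(\mathcal{C})\bigr)^*$ and again using (iii) gives $\tfrac{1}{2}\Lambda(\mathcal{C}) = 2\Lambda(\mathcal{C})^*$, hence $\Lambda(\mathcal{C}) = 4\Lambda(\mathcal{C})^*$. By (ii) the right-hand side equals $\Lambda(\mathcal{C}^H)$, and then the injectivity observation from the first paragraph forces $\mathcal{C}=\mathcal{C}^H$.

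I do not expect any serious obstacle: the proof is essentially algebraic bookkeeping with scalar multiplication of lattices, and all of the nontrivial content (the duality formula $\Lambda(\mathcal{C}^H)=4\Lambda(\mathcal{C})^*$ and the rescaling identity for the Hermitian dual under $\tfrac{1}{2}$) has been absorbed into Lemma 4. The only point requiring a line of justification is that $\rho$ is surjective and that equality of the $\mathcal{O}_K$-lattices implies equality of the codes, which follows immediately since both lattices contain $4\mathcal{O}_K^n = \ker\rho$. The rest is chaining the equalities $\Lambda(\mathcal{C}) = 4\Lambda(\mathcal{C})^*$ and $(\tfrac12\Lambda(\mathcal{C}))^* = 2\Lambda(\mathcal{C})^*$ in both directions.
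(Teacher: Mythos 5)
Your proposal is correct and follows essentially the same route as the paper: both directions are obtained by chaining Lemma 4(ii) and 4(iii), and the only cosmetic difference is that you justify the final step $\Lambda(\mathcal{C})=\Lambda(\mathcal{C}^H)\Rightarrow\mathcal{C}=\mathcal{C}^H$ by noting $\Lambda(\mathcal{D})=\rho^{-1}(\mathcal{D})$ is injective on codes, whereas the paper verifies the same fact by a short element-chasing argument using the surjectivity of $\rho$.
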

\begin{proof}
If $\mathcal {C}=\mathcal {C}^H$, then by Lemma 4(iii), $(\frac{1}{2}\Lambda(\mathcal {C}))^*=2\Lambda(\mathcal {C})^*$. Furthermore, by Lemma 4(ii), we have $\Lambda(\mathcal {C}^H)=4\Lambda(\mathcal {C})^*$, which implies that $2\Lambda(\mathcal {C})^*=\frac{1}{2}\Lambda(\mathcal {C}^H)=\frac{1}{2}\Lambda(\mathcal {C})$. Therefore $\frac{1}{2}\Lambda(\mathcal {C})=(\frac{1}{2}\Lambda(\mathcal {C}))^*$.
\par
Next, let $\frac{1}{2}\Lambda(\mathcal {C})=(\frac{1}{2}\Lambda(\mathcal {C}))^*$. Then $(\frac{1}{2}\Lambda(\mathcal {C}))^*=2\Lambda(\mathcal {C})^*$ by Lemma 4(iii). Furthermore, $2\Lambda(\mathcal {C})^*=\frac{1}{2}\Lambda(\mathcal {C}^H)$ by Lemma 4(ii). Therefore, we have $\frac{1}{2}\Lambda(\mathcal {C}^H)=\frac{1}{2}\Lambda(\mathcal {C})$. In the following, we show $\mathcal {C}=\mathcal {C}^H$. Let $\textbf{v}\in \mathcal {C}$. Then there exists $\textbf{w}\in \Lambda(\mathcal {C})$ such that $\rho(\textbf{w})=\textbf{v}$. But $\Lambda(\mathcal {C})=\Lambda(\mathcal {C}^H)$, which implies that $\rho(\textbf{w})\in \mathcal {C}^H$. This yields $\mathcal {C}\subseteq \mathcal {C}^H$. Similarly, we can prove $\mathcal {C}^H\subseteq \mathcal {C}$. Thus $\mathcal {C}=\mathcal {C}^H$ implying $\mathcal {C}$ is Hermitian self-dual.
\end{proof}
\subsection{MDS codes over $R$}
In this subsection, we discuss another important class of linear codes over $R$ called MDS codes. For any Frobenius ring $R$, the Singleton bound for a code of length $n$ over $R$ states that
\begin{equation*}
d_{H}(\mathcal {C})\leq n-\log_{|R|}{|\mathcal {C}|}+1,
\end{equation*}
where $d_H(\mathcal {C})$ denotes the minimum Hamming distance of $\mathcal {C}$. A code meeting this bound is said to be a MDS code over $R$.
\begin{theorem}
Let $\mathcal {C}=v\mathcal {C}_1\oplus (1-v)\mathcal {C}_2$ be a linear code of length $n$ over $R$. Then we have\\
{\rm (i)}~$d_H(\mathcal {C})={\rm min}\{ d_H(\mathcal {C}_1), d_H(\mathcal {C}_2)\}$;\\
{\rm (ii)}~$\mathcal {C}$ is an $(n, M, d)$ MDS code over $R$ if and only if $\mathcal {C}_1$ and $\mathcal {C}_2$ are both $(n, \sqrt{M}, d)$ MDS codes over $\mathbb{Z}_4$.
\end{theorem}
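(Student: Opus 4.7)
The plan is to exploit the idempotent decomposition $R = vR \oplus (1-v)R$, which carries over to codewords: every $\mathbf{c}\in \mathcal{C}$ has the unique form $\mathbf{c}=v\mathbf{x}+(1-v)\mathbf{y}$ with $\mathbf{x}\in \mathcal{C}_1$ and $\mathbf{y}\in\mathcal{C}_2$. Because $v$ and $1-v$ are orthogonal idempotents, the $i$-th component $vx_i+(1-v)y_i$ vanishes in $R$ if and only if $x_i=0$ and $y_i=0$. Hence $\mathrm{supp}(\mathbf{c})=\mathrm{supp}(\mathbf{x})\cup\mathrm{supp}(\mathbf{y})$, so $w_H(\mathbf{c})=|\mathrm{supp}(\mathbf{x})\cup\mathrm{supp}(\mathbf{y})|$. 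This is the single structural fact driving the whole proof.

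For part (i), I would first get the upper bound by lifting: if $\mathbf{x}_0\in\mathcal{C}_1$ realises the minimum weight $d_H(\mathcal{C}_1)$, then $v\mathbf{x}_0+(1-v)\mathbf{0}\in\mathcal{C}$ is a nonzero codeword of Hamming weight $w_H(\mathbf{x}_0)$, giving $d_H(\mathcal{C})\le d_H(\mathcal{C}_1)$; symmetrically $d_H(\mathcal{C})\le d_H(\mathcal{C}_2)$. For the reverse inequality, take any nonzero $\mathbf{c}=v\mathbf{x}+(1-v)\mathbf{y}\in\mathcal{C}$: at least one of $\mathbf{x},\mathbf{y}$ is nonzero, and the support observation above gives $w_H(\mathbf{c})\ge w_H(\mathbf{x})$ when $\mathbf{x}\ne\mathbf{0}$ (respectively $\ge w_H(\mathbf{y})$), hence $w_H(\mathbf{c})\ge\min\{d_H(\mathcal{C}_1),d_H(\mathcal{C}_2)\}$.

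For part (ii), I would use $|\mathcal{C}|=|\mathcal{C}_1|\cdot|\mathcal{C}_2|=M$ (from uniqueness of the decomposition) together with $|R|=16=|\mathbb{Z}_4|^2$, so $\log_{16}M=\tfrac12(\log_4|\mathcal{C}_1|+\log_4|\mathcal{C}_2|)$. The ($\Leftarrow$) direction is then immediate: if each $\mathcal{C}_i$ is $(n,\sqrt{M},d)$ MDS over $\mathbb{Z}_4$, then $d=n-\log_4\sqrt{M}+1=n-\log_{16}M+1$, and by (i) the Hamming distance of $\mathcal{C}$ is exactly $d$, so the $R$-Singleton bound is met. The ($\Rightarrow$) direction is where I expect the only real subtlety. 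Apply the $\mathbb{Z}_4$-Singleton bound to each factor:
\begin{equation*}
d_H(\mathcal{C}_1)+d_H(\mathcal{C}_2)\le 2n-\log_4(|\mathcal{C}_1||\mathcal{C}_2|)+2=2n-\log_4 M+2=2d.
\end{equation*}
By (i), $\min\{d_H(\mathcal{C}_1),d_H(\mathcal{C}_2)\}=d$, forcing both to equal $d$ and both $\mathbb{Z}_4$-Singleton bounds to be tight. Tightness then gives $|\mathcal{C}_i|=4^{n-d+1}$ for $i=1,2$; multiplying yields $|\mathcal{C}_1||\mathcal{C}_2|=16^{n-d+1}=M$, so $|\mathcal{C}_i|=\sqrt{M}$.

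The main obstacle, and in fact the only non-routine step, is the converse in (ii): knowing only that $\mathcal{C}$ is MDS does not obviously force each $\mathcal{C}_i$ to individually meet the Singleton bound with matching sizes, and the argument hinges on the observation that the sum of the two $\mathbb{Z}_4$-Singleton bounds equals $2d$, so the minimum can only equal $d$ when both are exactly $d$ and both bounds are tight. Everything else is bookkeeping using the idempotent decomposition.
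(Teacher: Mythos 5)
Your proof is correct and follows essentially the same route as the paper: part (i) via the componentwise idempotent decomposition of supports (which the paper leaves as ``straightforward''), and part (ii) by combining the component $\mathbb{Z}_4$-Singleton bounds with $|\mathcal{C}|=|\mathcal{C}_1||\mathcal{C}_2|$ to force both bounds to be tight with $|\mathcal{C}_1|=|\mathcal{C}_2|=\sqrt{M}$. Your bookkeeping (summing the two bounds and comparing with $2d$) is a minor repackaging of the paper's comparison of each $\log_4 M_i$ with $\log_4\sqrt{M}$, and if anything is stated more carefully.
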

\begin{proof}
(i)~It is straightforward from the fact that for any codeword $\textbf{c}=v\textbf{c}_1+(1-v)\textbf{c}_2\in \mathcal {C}$, $\textbf{c}=\textbf{0}$ if and only if $\textbf{c}_1=\textbf{c}_2=\textbf{0}$.
\par\vskip 2mm
(ii)~Denote $d_H^{(1)}(\mathcal {C})$ and $d_H^{(2)}(\mathcal {C})$ as the minimum Hamming distances of $\mathcal {C}_1$ and $\mathcal {C}_2$, respectively. If $d_H(\mathcal {C})=d_H^{(1)}(\mathcal {C})$, then $d_H^{(2)}(\mathcal {C})\geq d_H^{(1)}(\mathcal {C})$ by (i). Let $\mathcal {C}$ be an $(n, M, d)$ MDS code. Then $d=n-\log_{16}|\mathcal {C}|$+1. Let $M_1$ and $M_2$ be the codewords number of $\mathcal {C}_1$ and $\mathcal {C}_2$, respectively. Then, by the Singleton bound, we have
\begin{equation*}
d_H^{(1)}\leq n-\log_{4}M_1+1
\end{equation*}
and
\begin{equation*}
d_H^{(2)}\leq n-\log_{4}M_2+1.
\end{equation*}
From $d=d_H^{(1)}\leq d_H^{(2)}$, we have that
\begin{equation}
\log_4\sqrt{M}\geq \log_4M_1
\end{equation}
and
\begin{equation}
\log_4\sqrt{M}\geq \log_4M_2.
\end{equation}
Therefore the equalities in the above equations (1) and (2) hold if and only if $M_1=M_2=\sqrt{M}$. From the Singleton bound and $\mathcal {C}$ is an MDS code, we deduce $\mathcal {C}_1$ and $\mathcal {C}_2$ are both MDS codes with the same parameters. The necessary part is straightforward by the Singleton bound.
\end{proof}
\begin{corollary}
There are no non-trivial MDS codes over $R$.
\end{corollary}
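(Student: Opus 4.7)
The plan is to reduce the corollary to the analogous statement over $\mathbb{Z}_4$ via the decomposition $\mathcal{C}=v\mathcal{C}_1\oplus(1-v)\mathcal{C}_2$ developed in Section 3.1, and then invoke the known nonexistence of non-trivial MDS codes over $\mathbb{Z}_4$.

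First, I would apply Theorem 9(ii): any MDS code $\mathcal{C}$ of length $n$ over $R$ with $|\mathcal{C}|=M$ decomposes as $\mathcal{C}=v\mathcal{C}_1\oplus(1-v)\mathcal{C}_2$, where $\mathcal{C}_1$ and $\mathcal{C}_2$ are both $(n,\sqrt{M},d)$ MDS codes over $\mathbb{Z}_4$. So the question of existence of non-trivial MDS codes over $R$ is equivalent to the existence of non-trivial MDS codes over $\mathbb{Z}_4$ of the same length and minimum Hamming distance.

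Next, I would invoke the standard classification of MDS codes over $\mathbb{Z}_4$: the only MDS codes over $\mathbb{Z}_4$ are the trivial ones, namely the zero code, the whole space $\mathbb{Z}_4^n$, the repetition code, and its dual (the single parity check code). This is a classical fact; it follows, for example, from the Singleton bound together with the constraint that every $\mathbb{Z}_4$-linear code has type $4^{k_1}2^{k_2}$, forcing $d\le 2$ once one passes beyond the trivial parameters. One can also argue this directly using the generator matrix form recalled just before Theorem 6.

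Finally, combining these two ingredients: if $\mathcal{C}$ were a non-trivial MDS code over $R$, then $\mathcal{C}_1$ and $\mathcal{C}_2$ would be non-trivial MDS codes over $\mathbb{Z}_4$ of equal parameters, contradicting the classification. Hence $\mathcal{C}$ must itself be trivial (the zero code, the full space $R^n$, the repetition code, or its dual over $R$). The main obstacle here is really just deciding how much detail to include for the $\mathbb{Z}_4$ step, since the decomposition step is a one-line application of Theorem 9(ii); I would either cite the classification of MDS codes over $\mathbb{Z}_4$ or give a short Singleton-bound argument using the type structure $4^{k_1}2^{k_2}$.
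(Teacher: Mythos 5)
Your proposal matches the paper's proof: the authors likewise apply the MDS decomposition theorem (Theorem 10(ii) in the paper's numbering) to reduce to $\mathbb{Z}_4$ and then cite the well-known fact that there are no non-trivial MDS codes over $\mathbb{Z}_4$. The extra detail you sketch for the $\mathbb{Z}_4$ step is not present in the paper, which simply asserts that fact, so your argument is if anything slightly more complete.
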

\begin{proof}
By Theorem 10(ii), we know that there exist non-trivial MDS codes over $R$ if and only if there exist non-trivial MDS codes over $\mathbb{Z}_4$. But it is well known that there are no non-trivial MDS codes over $\mathbb{Z}_4$.
\end{proof}

In the following, we give the bound of Gray distance of the linear code over $R$.
\begin{theorem}
Let $\mathcal {C}=v\mathcal {C}_1\oplus (1-v)\mathcal {C}_2$ be a linear code of length $n$ over $R$. Then we have\\
{\rm (i)}~$d_G(\mathcal {C})={\rm min}\{ d_L(\mathcal {C}_1), d_L(\mathcal {C}_2)\}$;\\
{\rm (ii)}~$d_G(\mathcal {C})\leq 2n-\log_2^{{\rm min}\{|\mathcal {C}_1|, |\mathcal {C}_2|\}}+1$.
\end{theorem}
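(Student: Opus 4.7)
The plan is to prove (i) by a direct computation of the Gray weight of a generic codeword in terms of the Lee weights of its $\mathcal{C}_1$ and $\mathcal{C}_2$ components, and then derive (ii) by combining (i) with the standard Singleton-type bound for $\mathbb{Z}_4$-linear codes in the Lee metric.

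For (i), I would take an arbitrary $\mathbf{c} = v\mathbf{c}_1 + (1-v)\mathbf{c}_2 \in \mathcal{C}$ with $\mathbf{c}_j = (c_{j,0},\ldots,c_{j,n-1}) \in \mathcal{C}_j$, and rewrite each component as
\[
c_i = v\,c_{1,i} + (1-v)\,c_{2,i} = c_{2,i} + (c_{1,i} - c_{2,i})v.
\]
Reading this in the form $c_i = a_i + b_i v$ of Definition 1 gives $a_i = c_{2,i}$ and $a_i + b_i = c_{1,i}$, so $w_G(c_i) = w_L(c_{2,i}) + w_L(c_{1,i})$, and summing yields $w_G(\mathbf{c}) = w_L(\mathbf{c}_1) + w_L(\mathbf{c}_2)$. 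Since the decomposition $\mathcal{C} = v\mathcal{C}_1 \oplus (1-v)\mathcal{C}_2$ is direct, $\mathbf{c} = \mathbf{0}$ iff $\mathbf{c}_1 = \mathbf{c}_2 = \mathbf{0}$; minimizing over $\mathbf{c} \neq \mathbf{0}$ is then the same as minimizing $w_L(\mathbf{c}_1) + w_L(\mathbf{c}_2)$ over $(\mathbf{c}_1,\mathbf{c}_2) \neq (\mathbf{0},\mathbf{0})$, which is achieved by setting one component to $\mathbf{0}$ and the other to a minimum-Lee-weight codeword. This gives $d_G(\mathcal{C}) = \min\{d_L(\mathcal{C}_1), d_L(\mathcal{C}_2)\}$.

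For (ii), the key ingredient is the Singleton-type bound for $\mathbb{Z}_4$-linear codes in the Lee metric: if $C$ is $\mathbb{Z}_4$-linear of length $n$, the usual $\mathbb{Z}_4$-to-$\mathbb{F}_2^2$ Gray map sends $C$ to a (possibly nonlinear) binary code of length $2n$ and cardinality $|C|$ whose minimum Hamming distance equals $d_L(C)$; the binary Singleton bound then gives $d_L(C) \leq 2n - \log_2|C| + 1$. Applying this to $\mathcal{C}_1$ and $\mathcal{C}_2$ and combining with (i) gives, for either $j \in \{1,2\}$,
\[
d_G(\mathcal{C}) \leq d_L(\mathcal{C}_j) \leq 2n - \log_2|\mathcal{C}_j| + 1.
\]
Taking $j$ such that $|\mathcal{C}_j| = \min\{|\mathcal{C}_1|, |\mathcal{C}_2|\}$ yields the stated inequality.

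Neither part looks to be a serious obstacle. The only mild care in (i) is keeping the correspondence between $(a_i, a_i+b_i)$ and $(c_{2,i}, c_{1,i})$ straight when unpacking Definition 1; in (ii), I would note in passing that choosing $j$ so that $|\mathcal{C}_j| = \max\{|\mathcal{C}_1|, |\mathcal{C}_2|\}$ actually yields a strictly tighter Singleton-type bound, so the statement as written with the minimum is a valid (though weaker) consequence of the same pair of Singleton estimates and is the one we prove.
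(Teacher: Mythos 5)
Your proposal is correct and follows essentially the same route as the paper: for (i) the key fact is the componentwise identity $w_G(\textbf{c})=w_L(\textbf{c}_1)+w_L(\textbf{c}_2)$, which you derive directly from Definition 1 and which the paper obtains equivalently (and a bit less explicitly) by writing $d_G(\mathcal{C})=\min\{d_G(v\mathcal{C}_1),d_G((1-v)\mathcal{C}_2)\}$ and passing through the Gray map $\Phi$; for (ii) both arguments apply the binary Singleton bound to the Gray/Lee images of $\mathcal{C}_1$ and $\mathcal{C}_2$ and combine with (i). Your side remark that taking $\max\{|\mathcal{C}_1|,|\mathcal{C}_2|\}$ gives a tighter bound is also accurate.
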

\begin{proof}
(i)~Clearly, $d_G(\mathcal {C})={\rm min}\{ d_G(v\mathcal {C}_1), d_G((1-v)\mathcal {C}_2)\}={\rm min}\{ d_L(\Phi(v\mathcal {C}_1)), d_L(\Phi((1-v)\mathcal {C}_2))\}$. Denote by $*$ the componentwise multiplication of two vectors, i.e. $$(x_1, x_2, \ldots, x_n)*(y_1, y_2, \ldots, y_n)=(x_1y_1, x_2y_2, \ldots, x_ny_n).$$ Then, by the definition of the Gray map $\Phi$, we have $\Phi(v\mathcal {C}_1)=(0,1)*\Phi(\mathcal {C}_1)$ and $\Phi((1-v)\mathcal {C}_2)=(1,0)*\Phi(\mathcal {C}_2)$, which implies that $d_L(\mathcal {C}_1)=d_L(\Phi(v\mathcal {C}_1))$ and $d_L((1-v)\mathcal {C}_2)=d_L(\Phi((1-v)\mathcal {C}_2))$ respectively.\\
(ii)~From the Hamming weight Singleton bound for binary codes, we have $$d_L(\mathcal {C}_i)\leq 2n-\log_2^{|\mathcal {C}_i|}+1$$ for $i=1,2$. Then, by (i), we have $$d_G(\mathcal {C})\leq 2n-\log_2^{{\rm min}\{|\mathcal {C}_1|, |\mathcal {C}_2|\}}+1.$$
\end{proof}

We shall refer to codes meeting the bound in Theorem 11(ii) as \emph{maximal Gray distance separable (MGDS) codes}. Clearly, $\mathcal {C}$ is a MGDS code over $R$ if and only if $\mathcal {C}_1$ and $\mathcal {C}_2$ are both quaternary \emph{maximal Lee distance separable (MLDS) codes} and with the same parameters. A quaternary code $\mathscr{C}$ is called MLDS code if $d_L(\mathscr{C})= 2n-\log_2^{|\mathscr{C}|}+1$. Therefore if $\mathcal {C}$ is a MGDS code of length $n$ over $R$, then $\mathcal {C}$ is either $(\textbf{2})$, or the whole space, where the symbol $\textbf{2}$ denote the all $2$-vectors of length $n$ over $R$.
\section{Cyclic codes over $R$}
As a special class of linear codes, cyclic codes play very important roles in the coding theory. In this section, we give some useful results on cyclic codes over $R$.
\par
Let $T$ be the cyclic shift operator on $R^n$, i.e. for any vector $\textbf{c}=(c_0, c_1, \ldots, c_{n-1})$ of $ R^n$, $T(\textbf{c})=(c_{n-1}, c_0, \ldots, c_{n-2})$.
\par
A linear code $\mathcal {C}$ of length $n$ over $R$ is called cyclic if and only if $T(\mathcal {C})=\mathcal {C}$.
Define the polynomial ring $R_n=R[X]/(X^n-1)=\{c_0+c_1X+\cdots+c_{n-1}X^{n-1}+(X^n-1)|~c_0, c_1, \ldots, c_{n-1}\in R\}$. For any polynomial $c(X)+(X^n-1)\in R_n$, we denote it as $c(X)$ for simplicity.
\par
Define a map as follows
\begin{equation*}
\begin{split}
\varphi:~~R^n&\rightarrow R_n=R[X]/(X^n-1)\\
 (c_0, c_1, \ldots, c_{n-1})&\mapsto c(X)=c_0+c_1X+\cdots +c_{n-1}X^{n-1}.
 \end{split}
 \end{equation*}
Clearly, $\varphi$ is an $R$-module isomorphism from $R^n$ to $R_n$. And a linear code $\mathcal {C}$ of length $n$ is cyclic over $R$ if and only if $\varphi(\mathcal {C})$ is an ideal of $R_n$. Sometimes, we identify the cyclic code $\mathcal {C}$ to the ideal of $R_n$.
\begin{theorem}
A linear code $\mathcal {C}=v\mathcal {C}_1\oplus (1-v)\mathcal {C}_2$ is cyclic over $R$ if and only if $\mathcal {C}_1$ and $\mathcal {C}_2$ are both cyclic over $\mathbb{Z}_4$.
\end{theorem}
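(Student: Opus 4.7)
The plan is to leverage the $R$-linearity of the cyclic shift $T$ together with the uniqueness of the CRT decomposition $R^n = vR^n \oplus (1-v)R^n$ established in Section 3.1. The key observation I will use throughout is that, because $T$ permutes coordinates and scalar multiplication by $R$ acts coordinatewise, $T$ commutes with multiplication by the orthogonal idempotents $v$ and $1-v$. Consequently, for any $\mathbf{c} = v\mathbf{c}_1 + (1-v)\mathbf{c}_2 \in R^n$ with $\mathbf{c}_1,\mathbf{c}_2 \in \mathbb{Z}_4^n$, one has $T(\mathbf{c}) = vT(\mathbf{c}_1) + (1-v)T(\mathbf{c}_2)$, and this is again the unique CRT decomposition of $T(\mathbf{c})$.

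First I would handle the ``if'' direction. Assume $\mathcal{C}_1$ and $\mathcal{C}_2$ are both cyclic over $\mathbb{Z}_4$. For any codeword $\mathbf{c} = v\mathbf{c}_1 + (1-v)\mathbf{c}_2$ of $\mathcal{C}$ with $\mathbf{c}_i \in \mathcal{C}_i$, the identity above gives $T(\mathbf{c}) = vT(\mathbf{c}_1) + (1-v)T(\mathbf{c}_2)$, and since $T(\mathbf{c}_i) \in \mathcal{C}_i$ by hypothesis, $T(\mathbf{c}) \in v\mathcal{C}_1 \oplus (1-v)\mathcal{C}_2 = \mathcal{C}$. Therefore $T(\mathcal{C}) \subseteq \mathcal{C}$, so $\mathcal{C}$ is cyclic.

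Next I would prove the ``only if'' direction. Assume $\mathcal{C}$ is cyclic, and pick any $\mathbf{c}_1 \in \mathcal{C}_1$. By the definition of $\mathcal{C}_1$ given just before Theorem~3, there exists $\mathbf{c}_2 \in \mathbb{Z}_4^n$ such that $\mathbf{c} := v\mathbf{c}_1 + (1-v)\mathbf{c}_2 \in \mathcal{C}$. Applying $T$ and using $R$-linearity yields $T(\mathbf{c}) = vT(\mathbf{c}_1) + (1-v)T(\mathbf{c}_2) \in \mathcal{C}$, and then the defining property of $\mathcal{C}_1$ (witnessed by $\mathbf{y} = T(\mathbf{c}_2)$) forces $T(\mathbf{c}_1) \in \mathcal{C}_1$. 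Hence $\mathcal{C}_1$ is cyclic; the argument for $\mathcal{C}_2$ is obtained by swapping the roles of $v$ and $1-v$.

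Because both implications reduce to straightforward bookkeeping with the CRT decomposition, I do not anticipate a genuine obstacle. The only point worth stating explicitly is the commutation of $T$ with multiplication by $v$ and $1-v$, which is immediate since both operations act coordinatewise on $R^n$; everything else is a consequence of the uniqueness of the decomposition $\mathcal{C} = v\mathcal{C}_1 \oplus (1-v)\mathcal{C}_2$ already recorded in Section~3.1.
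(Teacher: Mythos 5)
Your proposal is correct and follows essentially the same route as the paper: both directions rest on the observation that the cyclic shift commutes with multiplication by the idempotents $v$ and $1-v$, so that $T(v\mathbf{c}_1+(1-v)\mathbf{c}_2)=vT(\mathbf{c}_1)+(1-v)T(\mathbf{c}_2)$, combined with the (unique) decomposition $\mathcal{C}=v\mathcal{C}_1\oplus(1-v)\mathcal{C}_2$. The paper's proof is just a coordinatewise spelling-out of exactly this argument.
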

\begin{proof}
Let $(a_0, a_1, \ldots, a_{n-1}) \in \mathcal {C}_1$ and $(b_0, b_1, \ldots, b_{n-1})\in \mathcal {C}_2$. Assume that $c_i=va_i+(1-v)b_i$ for $i=0,1,\ldots, n-1$. Then the vector $(c_0, c_1, \ldots, c_{n-1})$ belongs to $\mathcal {C}$. Since $\mathcal {C}$ is a cyclic code, it follows that $(c_{n-1}, c_0, \ldots, c_{n-2})\in \mathcal {C}$. Note that $(c_{n-1}, c_0, \ldots, c_{n-2})=v(a_{n-1}, a_0, \ldots, a_{n-2})+(1-v)(b_{n-1}, b_0, \ldots, b_{n-2})$. Hence $(a_{n-1}, a_0, \ldots, a_{n-2})\in \mathcal {C}_1$ and $(b_{n-1}, b_0, \ldots, b_{n-2})\in \mathcal {C}_2$, which implies that $\mathcal {C}_1$ and $\mathcal {C}_2$ are both cyclic codes over $\mathbb{Z}_4$.
\par
Conversely, let $\mathcal {C}_1$ and $\mathcal {C}_2$ be both cyclic codes over $\mathbb{Z}_4$. Let $(c_0, c_1, \ldots, c_{n-1})\in \mathcal {C}$, where $c_i=va_i+(1-v)b_i$ for $i=0,1,\ldots,n-1$. Then $(a_0, a_1, \ldots, a_{n-1})\in\mathcal {C}_1$ and $(b_0, b_1, \ldots, b_{n-1})\in\mathcal {C}_2$. Note that $(c_{n-1}, c_0, \ldots, c_{n-2})=v(a_{n-1}, a_0, \ldots, a_{n-2})+(1-v)(b_{n-1}, b_0, \ldots, b_{n-2})\in v\mathcal {C}_1\oplus(1-v)\mathcal {C}_2=\mathcal {C}$. Therefore, $\mathcal {C}$ is a cyclic code over $R$.
\end{proof}

In the following of this section, we assume that $n$ is an odd positive integer. Let $C$ be a cyclic code of length $n$ over $\mathbb{Z}_4$. Then there exist unique monic polynomials $f(X), g(X), h(X)$ such that $X^n-1=f(X)g(X)h(X)$ and $C=(f(X)g(X))\oplus (2f(X)h(X))$. See \cite{Wan} for the details.
\begin{theorem}
Let $\mathcal {C}=v\mathcal {C}_1\oplus (1-v)\mathcal {C}_2$ be a cyclic code of length $n$ over $R$. Then $\mathcal {C}=(vf_1(X)g_1(X)+(1-v)f_2(X)g_2(X))\oplus (2vf_1(X)h_1(X)+2(1-v)f_2(X)h_2(X))$, where $f_1(X)g_1(X)h_1(X)=f_2(X)g_2(X)h_2(X)=X^n-1$ and $\mathcal {C}_1=(f_1(X)g_1(X))\oplus (2f_1(X)h_1(X))$, $\mathcal {C}_2=(f_2(X)g_2(X))\oplus (2f_2(X)h_2(X))$ over $\mathbb{Z}_4$, respectively.
\end{theorem}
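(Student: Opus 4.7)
The plan is to reduce the claim to the known structure theorem for $\mathbb{Z}_4$-cyclic codes of odd length and then reassemble the pieces using the orthogonal idempotents $v$ and $1-v$ in $R$. First, by Theorem 12, the hypothesis that $\mathcal{C}$ is cyclic over $R$ is equivalent to $\mathcal{C}_1$ and $\mathcal{C}_2$ both being cyclic over $\mathbb{Z}_4$. Since $n$ is odd, the cited structure theorem from \cite{Wan} applies to each, yielding monic polynomials $f_i,g_i,h_i$ with $f_i(X)g_i(X)h_i(X)=X^n-1$ and $\mathcal{C}_i=(f_i(X)g_i(X))\oplus(2f_i(X)h_i(X))$ as ideals of $\mathbb{Z}_4[X]/(X^n-1)$ for $i=1,2$.

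Next, I would transport these generators into $R[X]/(X^n-1)$ through the multiplications by $v$ and $1-v$. Because $v,(1-v)\in R$, the sets $v\mathcal{C}_1$ and $(1-v)\mathcal{C}_2$ are ideals of $R_n$ generated, respectively, by $\{vf_1g_1,\,2vf_1h_1\}$ and $\{(1-v)f_2g_2,\,2(1-v)f_2h_2\}$. From the decomposition $\mathcal{C}=v\mathcal{C}_1\oplus(1-v)\mathcal{C}_2$, it follows immediately that
\[
\mathcal{C}=\bigl(vf_1g_1,\; 2vf_1h_1,\;(1-v)f_2g_2,\;2(1-v)f_2h_2\bigr)
\]
as an ideal of $R_n$.

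The remaining step is to show that the four generators above can be coalesced into the two generators claimed in the theorem. This uses only the identities $v^2=v$, $(1-v)^2=1-v$, and $v(1-v)=0$. Set $p(X)=vf_1g_1+(1-v)f_2g_2$ and $q(X)=2vf_1h_1+2(1-v)f_2h_2$. Multiplying $p$ by $v$ recovers $vf_1g_1$, and multiplying $p$ by $1-v$ recovers $(1-v)f_2g_2$; likewise $vq=2vf_1h_1$ and $(1-v)q=2(1-v)f_2h_2$. Hence the ideal $(p,q)$ contains all four original generators, so $(p,q)\supseteq\mathcal{C}$; the reverse inclusion is obvious since $p,q\in\mathcal{C}$. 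Thus $\mathcal{C}=(p(X))\oplus(q(X))$, which is exactly the statement of the theorem. The only mild subtlety — and thus the part I would write most carefully — is noting that the sum in the statement is a direct sum, which follows because the $v$-component and the $(1-v)$-component of any element of $R_n$ are uniquely determined by the CRT decomposition $R=vR\oplus(1-v)R$ established in Section~3.
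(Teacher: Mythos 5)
Your proposal is correct and follows essentially the same route as the paper: both establish the trivial inclusion of the two-generator ideal into $\mathcal{C}$ and then recover the reverse inclusion by multiplying the combined generators by the idempotents $v$ and $1-v$ to peel off $v\mathcal{C}_1$ and $(1-v)\mathcal{C}_2$. Your write-up is in fact more careful than the paper's (which compresses this into the remark that $v\mathcal{C}_1=v\mathcal{C}$ and $(1-v)\mathcal{C}_2=(1-v)\mathcal{C}$), but the underlying argument is identical.
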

\begin{proof}
Let $\widetilde{\mathcal {C}}=(vf_1(X)g_1(X)+(1-v)f_2(X)g_2(X))\oplus (2vf_1(X)h_1(X)+2(1-v)f_2(X)h_2(X))$, $\mathcal {C}_1=(f_1(X)g_1(X))\oplus (2f_1(X)h_1(X))$ and $\mathcal {C}_2=(f_2(X)g_2(X))\oplus (2f_2(X)h_2(X))$. Clearly, $\widetilde{\mathcal {C}}\subseteq \mathcal {C}$. For $v\mathcal {C}_1$, we have $v\mathcal {C}_1=v\mathcal {C}$ since $v^2=v$ over $\mathbb{Z}_4$. Similarly, $(1-v)\mathcal {C}_2=(1-v)\mathcal {C}$. Therefore $v\mathcal {C}_1\oplus (v-1)\mathcal {C}_2\subseteq \mathcal {C}$. Thus $\mathcal {C}=\widetilde{\mathcal {C}}$.
\end{proof}
\begin{corollary}
The quotient polynomial ring $R[X]/(X^n-1)$ is principal.
\end{corollary}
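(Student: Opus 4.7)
The plan is to leverage the decomposition of cyclic codes established in Theorem 12 together with the classical principality of cyclic codes over $\mathbb{Z}_4$ of odd length. An ideal of $R[X]/(X^n-1)$ is exactly a cyclic code of length $n$ over $R$, so it suffices to show that every such code is singly generated as an ideal. By Theorem 12, any cyclic code $\mathcal{C}$ over $R$ decomposes uniquely as $\mathcal{C} = v\mathcal{C}_1 \oplus (1-v)\mathcal{C}_2$ with $\mathcal{C}_1,\mathcal{C}_2$ cyclic codes over $\mathbb{Z}_4$, i.e.\ ideals of $\mathbb{Z}_4[X]/(X^n-1)$.

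The next step is to invoke the classical fact that, for odd $n$, every ideal of $\mathbb{Z}_4[X]/(X^n-1)$ is principal. This follows because $X^n-1$ splits into pairwise coprime basic irreducible factors over $\mathbb{Z}_4[X]$ via Hensel lifting from $\mathbb{Z}_2[X]$; the Chinese Remainder Theorem then identifies $\mathbb{Z}_4[X]/(X^n-1)$ with a finite product of Galois rings, each of which is a chain ring, and a finite product of principal rings is principal. Hence one may write $\mathcal{C}_1 = (g_1(X))$ and $\mathcal{C}_2 = (g_2(X))$ for some polynomials $g_1(X), g_2(X) \in \mathbb{Z}_4[X]/(X^n-1)$.

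I would then claim that $\mathcal{C} = (g(X))$, where $g(X) := v\,g_1(X) + (1-v)\,g_2(X)$. The verification uses only the orthogonal-idempotent relations $v^2 = v$, $(1-v)^2 = 1-v$ and $v(1-v) = 0$ inside $R$. Multiplying $g(X)$ by $v$ gives $v\,g_1(X)$, while multiplying by $1-v$ gives $(1-v)\,g_2(X)$, so the ideal $(g(X))$ contains both $v\mathcal{C}_1$ and $(1-v)\mathcal{C}_2$, hence $\mathcal{C}\subseteq (g(X))$. Conversely, $g(X)\in v\mathcal{C}_1\oplus (1-v)\mathcal{C}_2=\mathcal{C}$, so $(g(X))\subseteq \mathcal{C}$.

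The main (and really the only) obstacle is the explicit appeal to the principality of cyclic codes of odd length over $\mathbb{Z}_4$; everything else reduces to the one-line idempotent computation above. If one wished to avoid the external citation, a self-contained argument could instead start from the two-generator form $\mathcal{C}_j=(f_j g_j,\,2f_j h_j)$ supplied by Theorem 13 and combine the generators into a single polynomial using coprimality of $g_j$ and $h_j$; but invoking the standard Hensel/CRT structure is the cleanest path.
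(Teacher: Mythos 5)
Your proof is correct and follows essentially the same route as the paper: decompose $\mathcal{C}=v\mathcal{C}_1\oplus(1-v)\mathcal{C}_2$, invoke the principality of cyclic codes of odd length over $\mathbb{Z}_4$ (the paper cites Wan's explicit single generator $f(X)g(X)+2f(X)$ where you cite the Hensel/CRT structure), and combine the two generators into $vg_1(X)+(1-v)g_2(X)$. The only difference is that you spell out the idempotent verification that the paper leaves implicit in its appeal to Theorem 13.
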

\begin{proof}
Let $C=(f(X)g(X))\oplus (2f(X)h(X))$ be a cyclic code of length $n$ over $\mathbb{Z}_4$, where $X^n-1=f(X)g(X)h(X)$. Then $C=(f(X)g(X)+2f(X))$. (See Theorem 7.25 and Theorem 7.26 in \cite{Wan} for the details.) By Theorem 13, we have  any cyclic code $\mathcal {C}$ is principal over $R$, which implies the results.
\end{proof}

Furthermore, the number of distinct cyclic codes of odd length $n$ over $R$ is $9^r$, where $r$ is the number of the basic irreducible factors of $X^n-1$ over $\mathbb{Z}_4$.
\par We have observed numerous times that Euclidean cyclic self-dual codes over $R$ exist. (See Example 2 in Section 6.) Theorem 13 gives the generating polynomials for cyclic codes over $R$. The next result gives the conditions on these polynomials that lead to Euclidean cyclic self-dual codes.
\begin{theorem}
Let $\mathcal {C}=(vf_1(X)g_1(X)+(1-v)f_2(X)g_2(X))\oplus (2vf_1(X)h_1(X)+2(1-v)f_2(X)h_2(X))$, where $f_1(X)g_1(X)h_1(X)=f_2(X)g_2(X)h_2(X)=X^n-1$ and $\mathcal {C}_1=(f_1(X)g_1(X))\oplus (2f_1(X)h_1(X))$, $\mathcal {C}_2=(f_2(X)g_2(X))\oplus (2f_2(X)h_2(X))$ over $\mathbb{Z}_4$, respectively. Then $\mathcal {C}$ is Euclidean self-dual if and only if $f_1(X)=h_1^*(X), g_1(X)=g_1^*(X)$ and $f_2(X)=h_2^*(X), g_2(X)=g_2^*(X)$, where $f^*(X)=X^{{\rm deg}f(X)}f(X^{-1})$.
\end{theorem}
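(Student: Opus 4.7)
The plan is to reduce the question to the two $\mathbb{Z}_4$-components $\mathcal{C}_1,\mathcal{C}_2$ and then apply the known duality formula for cyclic codes over $\mathbb{Z}_4$ component-wise. By Theorem 4, $\mathcal{C} = v\mathcal{C}_1 \oplus (1-v)\mathcal{C}_2$ is Euclidean self-dual over $R$ if and only if $\mathcal{C}_1$ and $\mathcal{C}_2$ are simultaneously Euclidean self-dual cyclic codes of length $n$ over $\mathbb{Z}_4$. So the whole proof is essentially two copies of the same argument, one for each index $i\in\{1,2\}$.

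First I would fix $i$ and recall the structural theorem for cyclic codes of odd length over $\mathbb{Z}_4$ (Wan \cite{Wan}, cited in the paragraph preceding Theorem~13): writing $X^n-1 = f_i(X)g_i(X)h_i(X)$ with pairwise coprime monic basic irreducible factors, one has $\mathcal{C}_i = \bigl(f_i(X)g_i(X)\bigr) \oplus \bigl(2 f_i(X) h_i(X)\bigr)$, and the Euclidean dual is given by
\begin{equation*}
\mathcal{C}_i^{\perp} = \bigl(h_i^{*}(X) g_i^{*}(X)\bigr) \oplus \bigl(2 h_i^{*}(X) f_i^{*}(X)\bigr),
\end{equation*}
where $p^{*}(X) = X^{\deg p(X)} p(X^{-1})$. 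This is the standard reciprocal-polynomial duality over $\mathbb{Z}_4$. Now $\mathcal{C}_i = \mathcal{C}_i^{\perp}$ forces both summands to match: comparing the ``free part'' generators $f_i g_i$ and $h_i^{*} g_i^{*}$ and the ``torsion part'' generators $2 f_i h_i$ and $2 h_i^{*} f_i^{*}$. Using the uniqueness of the triple $(f_i,g_i,h_i)$ coming from the factorization of $X^n-1$ over $\mathbb{Z}_4$, these equalities are equivalent to $f_i = h_i^{*}$ and $g_i = g_i^{*}$. Conversely, if $f_i = h_i^{*}$ and $g_i = g_i^{*}$, substituting directly into the dual formula gives $\mathcal{C}_i^{\perp} = \mathcal{C}_i$.

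Finally I would assemble the two cases. Applying the above to $i=1$ and $i=2$ and then invoking Theorem~4, we conclude that $\mathcal{C}$ is Euclidean self-dual over $R$ if and only if $f_1 = h_1^{*},\ g_1 = g_1^{*},\ f_2 = h_2^{*},\ g_2 = g_2^{*}$, which is exactly the claimed statement. The main obstacle, and really the only non-routine step, is justifying the formula for the Euclidean dual of a cyclic code over $\mathbb{Z}_4$ in the desired generator form; since this is already a cited classical result in \cite{Wan}, the rest is bookkeeping (uniqueness of the factorization $X^n-1 = f_i g_i h_i$ and matching the two summands), and no delicate estimate or new idea is needed.
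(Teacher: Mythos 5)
Your proposal is correct and follows essentially the same route as the paper: reduce to the components $\mathcal{C}_1,\mathcal{C}_2$ via Theorem~4 and then invoke the known characterization of Euclidean self-dual cyclic codes over $\mathbb{Z}_4$ in terms of the generator polynomials. The only cosmetic difference is that the paper cites that characterization directly (Theorem~12.5.10 of Huffman--Pless), whereas you re-derive it from the dual-generator formula $\mathcal{C}_i^{\perp}=(h_i^*g_i^*)\oplus(2h_i^*f_i^*)$ together with uniqueness of the factorization of $X^n-1$.
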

\begin{proof}
Firstly, by $\mathcal {C}^\perp=v\mathcal {C}_1^\perp\oplus (v-1)\mathcal {C}_2^\perp$, we have $\mathcal {C}^\perp$ is also a cyclic code if $\mathcal {C}$ is a cyclic code. Moreover, by Theorem 4, we have $\mathcal {C}$ is Euclidean self-dual over $R$ if and only if $\mathcal {C}_1$ and $\mathcal {C}_2$ are both Euclidean self-dual over $\mathbb{Z}_4$. Then, by Theorem 12.5.10 in \cite{Huffuman}, we deduce the result.
\end{proof}

When do there exist non-zero Euclidean cyclic self-dual codes of odd length $n$ over $R$? By Theorem 4 and Theorem 3 \cite{Pless1}, we give an answer about this problem.
\begin{theorem}
Non-zero Euclidean cyclic self-dual codes of odd length $n$ exist over $R$ if and only if $2^j\not\equiv -1~({\rm mod}n)$ for any $j$.
\end{theorem}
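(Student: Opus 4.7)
The plan is to reduce the existence question over $R$ to the analogous question over $\mathbb{Z}_4$ via the $v/(1-v)$ decomposition, and then invoke the quoted theorem of Pless.

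First, I would combine Theorem 4 with Theorem 12: a cyclic code $\mathcal{C}=v\mathcal{C}_1\oplus(1-v)\mathcal{C}_2$ of length $n$ over $R$ is Euclidean self-dual if and only if both $\mathcal{C}_1$ and $\mathcal{C}_2$ are Euclidean self-dual cyclic codes of length $n$ over $\mathbb{Z}_4$. Thus the existence of a non-zero Euclidean cyclic self-dual code over $R$ is equivalent to the simultaneous existence of two (possibly equal) Euclidean cyclic self-dual codes over $\mathbb{Z}_4$ of the same odd length $n$.

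Next, I would note that for odd $n\ge 1$, any Euclidean self-dual $\mathbb{Z}_4$-code of length $n$ has size $|\mathcal{C}_i|=2^n\neq 1$, so the zero code is automatically excluded on both sides. Hence the question collapses to: does there exist a non-zero Euclidean cyclic self-dual code of length $n$ over $\mathbb{Z}_4$? Once such a code $\mathcal{C}_0$ is in hand, setting $\mathcal{C}_1=\mathcal{C}_2=\mathcal{C}_0$ produces a non-zero Euclidean cyclic self-dual code $\mathcal{C}=v\mathcal{C}_0\oplus(1-v)\mathcal{C}_0$ over $R$; conversely, from a non-zero $\mathcal{C}$ over $R$ we project to obtain the required $\mathcal{C}_1$ over $\mathbb{Z}_4$. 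Applying Theorem 3 of \cite{Pless1}, which characterizes the existence of Euclidean cyclic self-dual codes of odd length $n$ over $\mathbb{Z}_4$ precisely by the condition $2^j\not\equiv -1\pmod{n}$ for every $j$, yields the theorem.

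The only subtle point to check is that no degenerate case slips through the reduction step; this is handled by the size count above, which ensures the zero code never masquerades as a Euclidean self-dual code for odd length. Apart from this bookkeeping, the proof is essentially an invocation of previously established decomposition results together with the known $\mathbb{Z}_4$ classification, so I expect no genuine computational obstacle.
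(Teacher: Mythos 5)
Your overall strategy is the same as the paper's: the paper offers no written proof beyond the remark that the result follows from Theorem 4 (the decomposition of duals) together with Theorem 3 of Pless, Sol\'{e} and Qian, and your reduction via $\mathcal{C}=v\mathcal{C}_1\oplus(1-v)\mathcal{C}_2$, Theorem 12, and that citation is exactly the intended route. So there is no divergence of method to report.

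There is, however, a genuine problem in your ``bookkeeping'' step, and it is precisely the step you single out as the only subtle point. You argue that any Euclidean self-dual $\mathbb{Z}_4$-code of length $n$ has $2^n>1$ elements, conclude that ``non-zero'' is vacuous, and collapse the question to: does \emph{any} Euclidean cyclic self-dual code of odd length $n$ exist over $\mathbb{Z}_4$? But the answer to that collapsed question is \emph{always} yes: the code $(2)=2\mathbb{Z}_4^n$ is cyclic, non-zero, and Euclidean self-dual for every odd $n$, and correspondingly $(2)\subset R^n$ is a non-zero cyclic Euclidean self-dual code over $R$ for every odd $n$. Taken literally, your reduction therefore proves that such codes always exist, which contradicts the ``only if'' direction of the theorem. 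The resolution is that Theorem 3 of \cite{Pless1} characterizes the existence of \emph{non-trivial} cyclic self-dual $\mathbb{Z}_4$-codes, i.e.\ codes other than $(2)$ (equivalently, in the notation $C=(fg)\oplus(2fh)$, codes with $f\neq 1$); the word ``non-zero'' in the statement should be read as ``non-trivial'' in this sense, consistent with the paper's Example 2, which counts ``non-trivial'' Euclidean cyclic self-dual codes. Your proof needs to track this through the decomposition: $\mathcal{C}=v\mathcal{C}_1\oplus(1-v)\mathcal{C}_2$ equals the trivial code $(2)$ over $R$ if and only if both $\mathcal{C}_1$ and $\mathcal{C}_2$ equal $2\mathbb{Z}_4^n$, so a non-trivial example over $R$ exists if and only if a non-trivial example over $\mathbb{Z}_4$ exists, and only then does the Pless--Sol\'{e}--Qian criterion $2^j\not\equiv-1\pmod{n}$ apply. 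Without this adjustment the ``only if'' direction fails.
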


For example, if $n=7$, then $n$ satisfies the condition in Theorem 14. And then, there exist non-zero Euclidean self-dual codes of length $7$ over $R$. The Example 2 in  Section 6 shows that there exist non-zero Euclidean cyclic self-dual codes of length $7$ over $R$ indeed.
\par
In the following, we consider some properties of the generating idempotents of cyclic codes over $R$. An element $e(X)\in \mathcal {C}$ is called an idempotent element if $e(X)^2=e(X)$ in $R_n$.
\begin{theorem}
Let $\mathcal {C}$ be a cyclic code of odd length $n$. Then there exists a unique idempotent element $e(X)=ve_1(X)+(1-v)e_2(X)\in R[X]$ such that $\mathcal {C}=(e(X))$.
\end{theorem}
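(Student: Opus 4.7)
The plan is to reduce to the known existence and uniqueness of generating idempotents for cyclic codes of odd length over $\mathbb{Z}_4$ (as in the Wan reference already invoked in the paper), via the CRT decomposition $R = v\mathbb{Z}_4 \oplus (1-v)\mathbb{Z}_4$ and the companion decomposition $\mathcal{C} = v\mathcal{C}_1 \oplus (1-v)\mathcal{C}_2$ from Theorem~12.

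First I would invoke Theorem~12 to write $\mathcal{C}_1, \mathcal{C}_2$ as cyclic codes over $\mathbb{Z}_4$ of length $n$, and then appeal to the classical fact that, since $n$ is odd, each $\mathcal{C}_i$ has a unique generating idempotent $e_i(X) \in \mathbb{Z}_4[X]/(X^n-1)$ with $\mathcal{C}_i = (e_i(X))$ and $e_i(X)^2 = e_i(X)$. Set
\begin{equation*}
e(X) = v e_1(X) + (1-v) e_2(X) \in R[X]/(X^n-1).
\end{equation*}
Because $v^2 = v$, $(1-v)^2 = 1-v$ and $v(1-v) = 0$ in $R$, a direct expansion gives
\begin{equation*}
e(X)^2 = v^2 e_1(X)^2 + 2v(1-v)e_1(X)e_2(X) + (1-v)^2 e_2(X)^2 = v e_1(X) + (1-v) e_2(X) = e(X),
\end{equation*}
so $e(X)$ is idempotent in $R_n$.

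Next I would establish $(e(X)) = \mathcal{C}$. For the inclusion $(e(X)) \subseteq \mathcal{C}$, note that $e(X) = v e_1(X) + (1-v) e_2(X) \in v\mathcal{C}_1 \oplus (1-v)\mathcal{C}_2 = \mathcal{C}$, so the principal ideal it generates sits inside the ideal $\mathcal{C}$. For the reverse inclusion, observe that $v \cdot e(X) = v e_1(X)$ and $(1-v) \cdot e(X) = (1-v) e_2(X)$ both lie in $(e(X))$; multiplying by elements of $\mathbb{Z}_4[X]/(X^n-1) \subseteq R_n$ then shows that $v\mathcal{C}_1$ and $(1-v)\mathcal{C}_2$ are both contained in $(e(X))$, hence so is their sum $\mathcal{C}$.

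For uniqueness, suppose $e'(X) \in \mathcal{C}$ is another idempotent with $\mathcal{C} = (e'(X))$. Using the canonical decomposition $R_n \cong v R_n \oplus (1-v) R_n$ induced by the orthogonal idempotents $v$ and $1-v$, write $e'(X) = v e_1'(X) + (1-v) e_2'(X)$ with $e_i'(X) \in \mathbb{Z}_4[X]/(X^n-1)$. Multiplying $e'(X)^2 = e'(X)$ by $v$ (respectively by $1-v$) and using $v(1-v)=0$ yields $e_i'(X)^2 = e_i'(X)$ in $\mathbb{Z}_4[X]/(X^n-1)$; moreover the projection of $(e'(X))$ onto $v R_n$ equals $v\mathcal{C}_1$, so $e_i'(X)$ generates $\mathcal{C}_i$ in $\mathbb{Z}_4[X]/(X^n-1)$. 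By uniqueness of the generating idempotent for odd-length cyclic codes over $\mathbb{Z}_4$, $e_i'(X) = e_i(X)$ for $i=1,2$, hence $e'(X) = e(X)$.

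The main obstacle is essentially bookkeeping: making the CRT-style identifications between $R_n$, $vR_n$, $(1-v)R_n$, and $\mathbb{Z}_4[X]/(X^n-1)$ precise enough that the invocation of the $\mathbb{Z}_4$ result gives existence and uniqueness at both coordinates simultaneously. The oddness of $n$ enters exactly here, as it is the hypothesis ensuring the classical $\mathbb{Z}_4$ idempotent theory applies in each factor; no other genuinely new idea is required.
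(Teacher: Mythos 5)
Your proof is correct, and the existence half is essentially identical to the paper's: both decompose $\mathcal{C}=v\mathcal{C}_1\oplus(1-v)\mathcal{C}_2$, lift the unique generating idempotents $e_i(X)$ of the odd-length cyclic codes $\mathcal{C}_i$ over $\mathbb{Z}_4$, and check that $e(X)=ve_1(X)+(1-v)e_2(X)$ is idempotent using $v(1-v)=0$ (you spell out the inclusion $(e(X))=\mathcal{C}$ more carefully than the paper, which simply cites its Theorem~13). Where you diverge is uniqueness: the paper uses the classical ring-theoretic argument entirely inside $R_n$ --- if $d(X)$ is another generating idempotent then $d(X)=a(X)e(X)$ gives $d(X)e(X)=a(X)e(X)^2=d(X)$, and by symmetry $d(X)e(X)=e(X)$, so $d(X)=e(X)$ --- whereas you push the competing idempotent back through the CRT decomposition, show each component is an idempotent generating $\mathcal{C}_i$, and invoke uniqueness over $\mathbb{Z}_4$ componentwise. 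Both are valid; the paper's version is shorter and works in any commutative ring without reference to the decomposition, while yours keeps everything parallel to the existence argument at the cost of the bookkeeping you flag (in particular the small check that $vx=vy$ with $x,y\in\mathbb{Z}_4[X]/(X^n-1)$ forces $x=y$, which does hold since $v\mathbb{Z}_4\cong\mathbb{Z}_4$ additively). No gap either way.
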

\begin{proof}
If $n$ is odd, then there exist unique idempotent elements $e_1(X), e_2(X) \in \mathbb{Z}_4[X]$ such that $\mathcal {C}_1=(e_1(X))$ and $\mathcal {C}_2=(e_2(X))$. By Theorem 13, we have $\mathcal {C}=(ve_1(X)+(1-v)e_2(X))$. Let $e(X)=ve_1(X)+(1-v)e_2(X)$. Then $e(X)^2=ve_1(X)^2+(1-v)e_2(X)^2=ve_1(X)+(1-v)e_2(X)=e(X)$, which implies that $e(X)$ is an idempotent element of $\mathcal {C}$. If there is another $d(X)\in \mathcal {C}$ such that $\mathcal {C}=(d(X))$ and $d(X)^2=d(X)$. Since $d(X)\in \mathcal {C}=(e(X))$, we have that $d(X)=a(X)e(X)$ for some $a(X)\in R_n$. And then, $d(X)e(X)=a(X)e(X)^2=d(X)$. Similarly, we can prove $d(X)e(X)=e(X)$, which implies that $e(X)$ is unique.
\end{proof}

The idempotent element $e(X)$ in above theorem is called the generating idempotent of $\mathcal {C}$.
\begin{theorem}
Let $\mathcal {C}=v\mathcal {C}_1\oplus (1-v)\mathcal {C}_2$ be a cyclic code of length $n$ over $R$. Let $e(X)=ve_1(X)+(1-v)e_2(X)$, where $e_1(X)$ and $e_2(X)$ are generating idempotents of $\mathcal {C}_1$ and $\mathcal {C}_2$ over $\mathbb{Z}_4$, respectively. Then the Euclidean dual code $\mathcal {C}^\perp$ has $1-e(X^{-1})$ as its generating idempotent.
\end{theorem}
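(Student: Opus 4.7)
The plan is to reduce the claim to the analogous, already-known statement over $\mathbb{Z}_4$, then reassemble via the CRT-style decomposition $R = vR \oplus (1-v)R$ used throughout Section 3. Concretely, I would first invoke Theorem 4, which says $\mathcal{C}^\perp = v\mathcal{C}_1^\perp \oplus (1-v)\mathcal{C}_2^\perp$, and Theorem 12, which says $\mathcal{C}^\perp$ is again cyclic (since each $\mathcal{C}_i^\perp$ is cyclic over $\mathbb{Z}_4$). Theorem 16 then guarantees that $\mathcal{C}^\perp$ has a unique generating idempotent, so it suffices to exhibit one candidate and verify it.

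The candidate is produced componentwise. The classical fact over $\mathbb{Z}_4$ (an immediate analogue of the well-known result for cyclic codes over finite fields, provable by the same argument used for $e_1, e_2$ in Theorem 16) is that if $e_i(X)$ is the generating idempotent of $\mathcal{C}_i$, then $1 - e_i(X^{-1})$ is the generating idempotent of $\mathcal{C}_i^\perp$. Applying this for $i = 1, 2$ and lifting through the decomposition of Theorem 13, the generating idempotent of $\mathcal{C}^\perp = v\mathcal{C}_1^\perp \oplus (1-v)\mathcal{C}_2^\perp$ is
\begin{equation*}
E(X) = v\bigl(1 - e_1(X^{-1})\bigr) + (1-v)\bigl(1 - e_2(X^{-1})\bigr).
\end{equation*}

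A short algebraic simplification using $v + (1-v) = 1$ gives
\begin{equation*}
E(X) = \bigl(v + (1-v)\bigr) - \bigl(ve_1(X^{-1}) + (1-v)e_2(X^{-1})\bigr) = 1 - e(X^{-1}),
\end{equation*}
which is the desired expression. To finish, I would note that $E(X)^2 = E(X)$ in $R_n$ (either directly, using the idempotency of each $1 - e_i(X^{-1})$ together with $v^2 = v$ and $v(1-v) = 0$, or by quoting Theorem 16 applied to $\mathcal{C}^\perp$) and that $E(X)$ generates $\mathcal{C}^\perp$ because its two components generate $\mathcal{C}_1^\perp$ and $\mathcal{C}_2^\perp$ respectively.

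The main obstacle, such as it is, lies in the $\mathbb{Z}_4$ step: one has to be sure that the generating-idempotent-of-the-dual formula from finite-field coding theory really does carry over to $\mathbb{Z}_4$ for odd length $n$. This is standard (it follows because $X^n - 1$ is still separable over $\mathbb{Z}_4$ when $\gcd(n,2)=1$, so $R_n$ decomposes as a product of local rings and idempotents behave exactly as in the field case), and the paper has implicitly relied on analogous $\mathbb{Z}_4$ facts (cf.\ the use of \cite{Wan} in Theorem 13 and Corollary 2). Everything else in the argument is a one-line application of results already established in the excerpt.
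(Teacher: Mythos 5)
Your proposal is correct and follows essentially the same route as the paper: decompose $\mathcal{C}^\perp = v\mathcal{C}_1^\perp \oplus (1-v)\mathcal{C}_2^\perp$ via Theorem 4, apply the known $\mathbb{Z}_4$ result that $1-e_i(X^{-1})$ generates $\mathcal{C}_i^\perp$ (the paper cites Lemma 12.3.23(i) of Huffman--Pless for this), assemble the idempotent componentwise using Theorem 16, and simplify to $1-e(X^{-1})$. No substantive differences.
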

\begin{proof}
By Theorem 4, we have $\mathcal {C}^\perp=v\mathcal {C}_1^\perp \oplus (v-1)\mathcal {C}_2^\perp$. Moreover, $\mathcal {C}^\perp$ is also a cyclic code since $\mathcal {C}_1^\perp$ and $\mathcal {C}_2^\perp$ are both cyclic codes. Let $e_1(X)$ and $e_2(X)$ be generating idempotents of $\mathcal {C}_1$ and $\mathcal {C}_2$, respectively. Then $\mathcal {C}_1^\perp$ and $\mathcal {C}_2^\perp$ have $1-e_1(X^{-1})$ and $1-e_2(X^{-1})$ as their generating idempotents respectively. (See Lemma 12.3.23(i) in \cite{Huffuman} for the details.) Let $\widetilde{e}(X)$ be the generating idempotent of $\mathcal {C}^\perp$. Then, by Theorem 16, $\widetilde{e}(X)=v(1-e_1(X^{-1}))+(1-v)(1-e_2(X^{-1}))=1-e(X^{-1})$.
\end{proof}

\section{Quadratic residue codes over $R$}
In this section, let $p$ be a prime number with $p\equiv \pm 1 ({\rm mod}8)$. Let $\mathcal {Q}_p$ denote the set of nonzero quadratic residues modulo $p$, and let $\mathcal {N}_p$ be the set of quadratic non-residues modulo $p$.
\par
Let $Q(X)=\sum_{i\in \mathcal {Q}_p}X^i$, $N(X)=\sum_{i\in \mathcal {N}_p}X^i$ and $J(X)=p\sum_{i=0}^{p-1}X^i$. By Theorem 16 and Theorem 8 \cite{Pless2}, we have the following results immediately.
\begin{lemma}
Define $r$ by $p=8r\pm 1$. If $r$ is odd, denote the set $\mathcal {S}_0=\{ Q(X)+2N(X), N(X)+2Q(X), 1-Q(X)+2N(X), 1-N(X)+2Q(X)\}$. If $r$ is even, denote the set $\mathcal {S}_e=\{-Q(X), -N(X), 1+Q(X), 1+N(X)\}$. Then\\
{\rm (i)}~For any $e_1(X), e_2(X)\in \mathcal {S}_0$ or $e_1(X), e_2(X)\in \mathcal {S}_e$, we have $e(X)=ve_1(X)+(1-v)e_2(X)$ is the idempotent of $R_p$.\\
{\rm (ii)}~$J(X)$ is an idempotent of $R_p$.
\end{lemma}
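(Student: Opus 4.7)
The plan is to exploit the orthogonal idempotent decomposition $1 = v + (1-v)$ in $R$, together with the relations $v^{2}=v$, $(1-v)^{2}=1-v$, $v(1-v)=0$, which lift unchanged to $R_p = R[X]/(X^{p}-1)$. The key external fact I would invoke is Theorem 8 of \cite{Pless2}: every element of $\mathcal{S}_{o}$ (when $r$ is odd) and every element of $\mathcal{S}_{e}$ (when $r$ is even) is a generating idempotent of a quadratic residue code of length $p$ over $\mathbb{Z}_{4}$, hence satisfies $e_i(X)^{2}=e_i(X)$ in $\mathbb{Z}_{4}[X]/(X^{p}-1)$. Granting this, part (i) is purely mechanical.

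For part (i), I would take $e_{1}(X),e_{2}(X)$ from the same set and expand
\[
e(X)^{2}=v^{2}e_{1}(X)^{2}+2v(1-v)e_{1}(X)e_{2}(X)+(1-v)^{2}e_{2}(X)^{2}.
\]
The cross term vanishes because $v(1-v)=0$, and each surviving term collapses because $v^{2}=v$ and $(1-v)^{2}=1-v$ and $e_i(X)^{2}=e_i(X)$. The result is $ve_{1}(X)+(1-v)e_{2}(X)=e(X)$, so $e(X)$ is idempotent in $R_p$. This is essentially the same computation used inside the proof of Theorem~16, so nothing new is required.

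For part (ii), set $h(X)=\sum_{i=0}^{p-1}X^{i}$ so that $J(X)=p\,h(X)$. A direct counting argument in $R_p$ gives $h(X)^{2}=p\,h(X)$: for each residue $k\in\{0,1,\ldots,p-1\}$, the number of pairs $(i,j)$ with $0\le i,j\le p-1$ and $i+j\equiv k\pmod{p}$ is exactly $p$, since $j$ is determined by $i$ modulo $p$. Consequently $J(X)^{2}=p^{2}h(X)^{2}=p^{3}h(X)=p^{2}\,J(X)$. Because $p\equiv\pm1\pmod{8}$ in particular forces $p^{2}\equiv 1\pmod{4}$, we have $p^{2}J(X)=J(X)$ in $R_p$, hence $J(X)^{2}=J(X)$. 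The only mild subtlety is remembering that the coefficient $p$ lives in $\mathbb{Z}_{4}\subseteq R$ and must be simplified modulo $4$ at the end; there is no real obstacle, since the hypothesis on $p$ modulo $8$ is far stronger than what is needed.
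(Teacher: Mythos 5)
Your proposal is correct and follows essentially the same route as the paper, which gives no explicit proof but simply states that the lemma follows ``immediately'' from its Theorem 16 (the $v$/$(1-v)$ idempotent decomposition) and Theorem 8 of Pless--Qian, i.e.\ exactly the two ingredients you invoke. Your direct verification of part (ii) via $\bigl(\sum_{i=0}^{p-1}X^i\bigr)^2=p\sum_{i=0}^{p-1}X^i$ and $p^2\equiv 1\ ({\rm mod}\ 4)$ is a correct filling-in of a detail the paper leaves unstated.
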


We now discuss the quadratic residue codes over $R$. Firstly, we give the definitions of these codes. The definitions depend upon the value $p$ modulo $8$.
 \vskip 3mm \noindent
   {\bf Case I:~~$p\equiv -1({\rm mod}8)$}
\begin{definition}
Let $p+1=8r$. If $r$ is odd, define
\begin{equation*}
\mathcal {D}_1=(v(Q(X)+2N(X))+(1-v)(N(X)+2Q(X))),
\end{equation*}
\begin{equation*}
\mathcal {D}_2=(v(N(X)+2Q(X))+(1-v)(Q(X)+2N(X))),
\end{equation*}
and
\begin{equation*}
\mathcal {E}_1=(v(1-N(X)+2Q(X))+(1-v)(1-Q(X)+2N(X))),
\end{equation*}
\begin{equation*}
\mathcal {E}_2=(v(1-Q(X)+2N(X))+(1-v)(1-N(X)+2Q(X))).
\end{equation*}
If $r$ is even, define
\begin{equation*}
\mathcal {D}_1=(v(-Q(X))+(1-v)(-N(X))),
\end{equation*}
\begin{equation*}
\mathcal {D}_2=(v(-N(X))+(1-v)(-Q(X))),
\end{equation*}
and
\begin{equation*}
\mathcal {E}_1=(v(1+N(X))+(1-v)(1+Q(X))),
\end{equation*}
\begin{equation*}
\mathcal {E}_2=(v(1+Q(X))+(1-v)(1+N(X))).
\end{equation*}
These cyclic codes of length $p$ are called the quadratic residue codes over $R$ at the case I.
\end{definition}
\par\vskip 2mm
Let $a$ be a non-zero positive integer defined as $\mu_a(i)=ai$ for any positive integer $i$. This map acts on polynomials as
\begin{equation*}
\mu_a(\sum_iX^i)=\sum_iX^{ai}.
\end{equation*}
\begin{theorem}
Let $p\equiv-1({\rm mod}8)$. Then the quadratic residue codes defined above satisfy the following:\\
{\rm (i)}~$\mathcal {D}_i\mu_a=\mathcal {D}_i$ and $\mathcal {E}_i\mu_a=\mathcal {E}_i$ for $i=1,2$ and $a\in \mathcal {Q}_p$; $\mathcal {D}_1\mu_a=\mathcal {D}_2$ and $\mathcal {E}_1\mu_a=\mathcal {E}_2$ for $a\in \mathcal {N}_p$.\\
{\rm (ii)}~$\mathcal {D}_1\cap \mathcal {D}_2=(J(X))$ and $\mathcal {D}_1+\mathcal {D}_2=R_p$.\\
{\rm (iii)}~$\mathcal {E}_1\cap\mathcal {E}_2=\{0\}$ and $\mathcal {E}_1 + \mathcal {E}_2=(J(X))^\perp$.\\
{\rm (iv)}~$|\mathcal {D}_1|=|\mathcal {D}_2|=4^{p+1}$ and $|\mathcal {E}_1|=|\mathcal {E}_2|=4^{p-1}$.\\
{\rm (v)}~$\mathcal {D}_i=\mathcal {E}_i+(J(X))$ for $i=1,2$.\\
{\rm (vi)}~$\mathcal {E}_1$ and $\mathcal {E}_2$ are Euclidean self-orthogonal and $\mathcal {E}_i^\perp=\mathcal {D}_i$ for $i=1,2$.
\end{theorem}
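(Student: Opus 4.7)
The decisive observation is that, via $R_p = vR_p \oplus (1-v)R_p$, each of the eight codes in the statement splits into a direct sum of two classical $\mathbb{Z}_4$-quadratic-residue codes, so every assertion reduces to its $\mathbb{Z}_4$ counterpart, for which I invoke Theorem 8 of \cite{Pless2}. Concretely, in the $r$-odd case write $D_Q = (Q(X)+2N(X))$, $D_N = (N(X)+2Q(X))$, $E_Q = (1-N(X)+2Q(X))$, $E_N = (1-Q(X)+2N(X))$ for the classical $\mathbb{Z}_4$-QR ideals (and the analogous four ideals in the $r$-even case); the definitions then read $\mathcal{D}_1 = vD_Q \oplus (1-v)D_N$, $\mathcal{D}_2 = vD_N \oplus (1-v)D_Q$, $\mathcal{E}_1 = vE_Q \oplus (1-v)E_N$, $\mathcal{E}_2 = vE_N \oplus (1-v)E_Q$. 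Throughout I rely on three elementary distributivity properties of this splitting for ideals $A = vA_1 \oplus (1-v)A_2$ and $B = vB_1 \oplus (1-v)B_2$: $A \cap B = v(A_1 \cap B_1) \oplus (1-v)(A_2 \cap B_2)$, $A + B = v(A_1 + B_1) \oplus (1-v)(A_2 + B_2)$, and $|A| = |A_1|\,|A_2|$.

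For (i), the multiplier $\mu_a$ acts only on $X$ and hence commutes with the $v/(1-v)$ decomposition; combined with $\mu_a(Q(X)) = Q(X)$, $\mu_a(N(X)) = N(X)$ for $a \in \mathcal{Q}_p$ and $\mu_a(Q(X)) = N(X)$, $\mu_a(N(X)) = Q(X)$ for $a \in \mathcal{N}_p$, this immediately yields the claimed invariance or swap on both components simultaneously. For (ii)--(v), I apply the three distributivity rules to the $\mathbb{Z}_4$ facts $D_Q \cap D_N = (J(X))$, $D_Q + D_N = R_p^{\mathbb{Z}_4}$, $E_Q \cap E_N = \{0\}$, $E_Q + E_N = (J(X))^\perp$, $D_Q = E_Q + (J(X))$, $D_N = E_N + (J(X))$, together with $|D_Q| = |D_N| = 4^{(p+1)/2}$ and $|E_Q| = |E_N| = 4^{(p-1)/2}$. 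For example, $\mathcal{D}_1 \cap \mathcal{D}_2 = v(D_Q \cap D_N) \oplus (1-v)(D_N \cap D_Q) = (J(X))$ as an $R$-ideal, which gives the first half of (ii); the cardinalities multiply to give (iv); (v) is the identity $E_Q + (J(X)) = D_Q$ on each component; and (iii) works identically using the small-code relations.

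Finally, for (vi) I combine Theorem 4 with the $\mathbb{Z}_4$-duality $E_Q^\perp = D_Q$ and $E_N^\perp = D_N$ from \cite{Pless2} to obtain $\mathcal{E}_1^\perp = vE_Q^\perp \oplus (1-v)E_N^\perp = vD_Q \oplus (1-v)D_N = \mathcal{D}_1$, and likewise $\mathcal{E}_2^\perp = \mathcal{D}_2$; self-orthogonality is then immediate from $\mathcal{E}_i \subseteq \mathcal{D}_i = \mathcal{E}_i^\perp$, the inclusion being part (v). The main subtlety lies in the bookkeeping here: one must check that the duality genuinely pairs $\mathcal{E}_1$ with $\mathcal{D}_1$ (and not with $\mathcal{D}_2$), which requires the $\mathbb{Z}_4$-pairing $E_Q \leftrightarrow D_Q$ rather than $E_Q \leftrightarrow D_N$. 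This in turn hinges on the sign convention in the definitions and on the fact that $-1 \in \mathcal{N}_p$ when $p \equiv -1 \pmod 8$, which is precisely what aligns the definitions of $\mathcal{D}_i, \mathcal{E}_i$ with the componentwise duality coming from Theorem 4. No deep new calculation is needed; the work is purely the translation of six standard $\mathbb{Z}_4$-QR-code identities across the $v/(1-v)$ splitting.
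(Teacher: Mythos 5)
Your proposal is correct, but it is organized differently from the paper's proof. The paper works directly with the generating idempotents inside $R_p$: for (ii), (iii) and (v) it multiplies and adds the idempotents explicitly (e.g.\ it computes that the product of the idempotents of $\mathcal{D}_1$ and $\mathcal{D}_2$ equals $J(X)$, that their sum minus the product equals $1$, and that $e_{\mathcal{E}_1}+J-e_{\mathcal{E}_1}J=e_{\mathcal{D}_1}$), it deduces (iv) from (ii) and (iii) via the formula $|\mathcal{D}_1+\mathcal{D}_2|=|\mathcal{D}_1||\mathcal{D}_2|/|\mathcal{D}_1\cap\mathcal{D}_2|$, and it proves (vi) by applying Theorem 17 in $R_p$ together with $Q(X)\mu_{-1}=N(X)$. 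You instead push everything through the splitting $\mathcal{D}_1=vD_Q\oplus(1-v)D_N$ etc., reduce each assertion to its $\mathbb{Z}_4$ counterpart in Pless--Qian, and transport the answer back using the distributivity of $\cap$, $+$, cardinality and (via Theorem 4) duality over the $v/(1-v)$ decomposition; in particular your (iv) comes from $|\mathcal{D}_1|=|D_Q|\,|D_N|$ rather than from the sum--intersection cardinality identity. The two routes are logically equivalent in content (both ultimately rest on the quaternary QR facts the paper cites before Lemma 5), but yours is more systematic and avoids the idempotent arithmetic, at the cost of having to justify the three distributivity rules and the identifications $(J(X))_R=v(J(X))\oplus(1-v)(J(X))$ and $(J(X))^{\perp}_R=v(J(X))^{\perp}\oplus(1-v)(J(X))^{\perp}$, which you state but do not prove; these are routine, and your remark that the pairing $\mathcal{E}_i^{\perp}=\mathcal{D}_i$ (rather than $\mathcal{D}_{3-i}$) hinges on $-1\in\mathcal{N}_p$ is exactly the point the paper also makes in its proof of (vi).
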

\begin{proof}
Let $p+1=8r$. We only verify when $r$ is odd. The case of $r$ is even can be proved similarly.

(i)~If $a\in \mathcal {Q}_p$, then $(v(Q(X)+2N(X))+(1-v)(N(X)+2Q(X)))\mu_a=v(Q(X)+2N(X))+(1-v)(N(X)+2Q(X))$, which implies that $\mathcal {D}_1\mu_a=\mathcal {D}_1$. Similarly, $\mathcal {D}_2\mu_a=\mathcal {D}_2$.

If $a\in \mathcal {N}_p$, then $(v(Q(X)+2N(X))+(1-v)(N(X)+2Q(X)))\mu_a=v(N(X)+2Q(X))+(1-v)(Q(X)+2N(X))$, which implies that $\mathcal {D}_1\mu_a=\mathcal {D}_2$.

The parts of (i) involving $\mathcal {E}_i$ are similar.

(ii)~Since $p\equiv -1 ({\rm mod}8)$, it follows that $J(X)=3\sum_{i=0}^{p-1}X^i=3+3Q(X)+3N(X)$. Therefore $(v(Q(X)+2N(X))+(1-v)(N(X)+2Q(X)))(v(N(X)+2Q(X))+(1-v)(Q(X)+2N(X)))=(Q(X)+2N(X))(N(X)+2Q(X))=J(X)$, which implies that $\mathcal {D}_1\cap \mathcal {D}_2=(J(X))$. Moreover, $v(Q(X)+2N(X))+(1-v)(N(X)+2Q(X))+v(N(X)+2Q(X))+(1-v)(Q(X)+2N(X))-J(X)=3Q(X)+3N(X)-J(X)=1$, which implies that $\mathcal {D}_1+\mathcal {D}_2=R_p$.

(iii)~For $\mathcal {E}_1\cap \mathcal {E}_2$, we have $(v(1-N(X)+2Q(X))+(1-v)(1-Q(X)+2N(X)))(v(1-Q(X)+2N(X))+(1-v)(1-N(X)+2Q(X)))=(1-N(X)+2Q(X))(1-Q(X)+2N(X))=1+N(X)+Q(X)+J(X)=0$, which implies that $\mathcal {E}_1\cap \mathcal {E}_2=\{0\}$.

For $\mathcal {E}_1+\mathcal {E}_2$, it has generating idempotent $1-N(X)+2Q(X)+1-Q(X)+2N(X)=2+N(X)+Q(X)=1-J(X)=1-J(X)\mu_{-1}$ as $J(X){\mu_{-1}}=J(X)$. Then, by Theorem 17, $\mathcal {E}_1+\mathcal {E}_2=(J(X))^\perp$.

(iv)~We use the fact that $|\mathcal {D}_1+\mathcal {D}_2|=|\mathcal {D}_1||\mathcal {D}_2|/|\mathcal {D}_1\cap \mathcal {D}_2|$. By (i), $|\mathcal {D}_1|=|\mathcal {D}_2|$, and by (ii), $|\mathcal {D}_1+\mathcal {D}_2|=16^p$ and $|\mathcal {D}_1\cap \mathcal {D}_2|=16$. Therefore, $|\mathcal {D}_1|=|\mathcal {D}_2|=16^{(p+1)/2}=4^{p+1}$. Similarly, by (i) and (iii), we can prove $|\mathcal {E}_1|=|\mathcal {E}_2|=4^{p-1}$.

(v)~From (ii), we have $J(X)\in \mathcal {D}_2$ implying that $(v(N(X)+2Q(X))+(1-v)(Q(X)+2N(X)))J(X)=J(X)$ as $v(N(X)+2Q(X))+(1-v)(Q(X)+2N(X))$ is the multiplicative identity of $\mathcal {D}_2$. Then the generating idempotent for $\mathcal {E}_1+(J(X))$ is $v(1-N(X)+2Q(X))+(1-v)(1-Q(X)+2N(X))+J(X)-(v(1-N(X)+2Q(X))+(1-v)(1-Q(X)+2N(X)))J(X)=v(1-N(X)+2Q(X))+(1-v)(1-Q(X)+2N(X))+J(X)+(J(X)-J(X))
=v(Q(X)+2N(X))+(1-v)(N(X)+2Q(X))$, which implies that $\mathcal {E}_1+(J(X))=\mathcal {D}_1$. Similarly, $\mathcal {E}_2+(J(X))=\mathcal {D}_2$.

(vi)~From Theorem 17, the generating idempotent for $\mathcal {E}_1^\perp$ is $1-(v(1-N(X)+2Q(X))+(1-v)(1-Q(X)+2N(X)))\mu_{-1}=v(N(X)+2Q(X))\mu_{-1}+(1-v)(Q(X)+2N(X))\mu_{-1}$. Since $-1\in \mathcal {N}_p$ as $p\equiv -1({\rm mod}8)$, it follows that $N(X)\mu_{-1}=Q(X)$ and $Q(X)\mu_{-1}=N(X)$. Therefore the generating idempotent for $\mathcal {E}_1^\perp$ is $v(Q(X)+2N(X))+(1-v)(N(X)+2Q(X))$ implying that $\mathcal {E}_1^\perp=\mathcal {D}_1$. Similarly, $\mathcal {E}_2^\perp=\mathcal {D}_2$. From (v), we have $\mathcal {E}_i\subseteq \mathcal {D}_i$ implying that $\mathcal {E}_i$ is Euclidean self-orthogonal for $i=1,2$.
\end{proof}
\vskip 3mm \noindent
   {\bf Case II:~~$p\equiv1({\rm mod}8)$}
\begin{definition}
Let $p-1=8r$. If $r$ is odd, define
\begin{equation*}
\mathcal {D}_1=(v(1-N(X)+2Q(X))+(1-v)(1-Q(X)+2N(X)))
\end{equation*}
\begin{equation*}
\mathcal {D}_2=(v(1-Q(X)+2N(X))+(1-v)(1-N(X)+2Q(X)))
\end{equation*}
and
\begin{equation*}
\mathcal {E}_1=(v(Q(X)+2N(X))+(1-v)(N(X)+2Q(X)))
\end{equation*}
\begin{equation*}
\mathcal {E}_1=(v(N(X)+2Q(X))+(1-v)(Q(X)+2N(X))).
\end{equation*}
If $r$ is even, define
\begin{equation*}
\mathcal {D}_1=(v(1+N(X))+(1-v)(1+Q(X))),
\end{equation*}
\begin{equation*}
\mathcal {D}_2=(v(1+Q(X))+(1-v)(1+N(X))),
\end{equation*}
and
\begin{equation*}
\mathcal {E}_1=(v(-Q(X))+(1-v)(-N(X))),
\end{equation*}
\begin{equation*}
\mathcal {E}_2=(v(-N(X))+(1-v)(-Q(X))).
\end{equation*}
These cyclic codes of length $p$ are called the quadratic residue codes over $R$ at the case II.
\end{definition}

Similar to Theorem 18, we also have the following results. Here we omit the proof.
\begin{theorem}
Let $p\equiv1({\rm mod}8)$. Then the quadratic residue codes defined above satisfy the following:\\
{\rm (i)}~$\mathcal {D}_i\mu_a=\mathcal {D}_i$ and $\mathcal {E}_i\mu_a=\mathcal {E}_i$ for $i=1,2$ and $a\in \mathcal {Q}_p$; $\mathcal {D}_1\mu_a=\mathcal {D}_2$ and $\mathcal {E}_1\mu_a=\mathcal {E}_2$ for $a\in \mathcal {N}_p$.\\
{\rm (ii)}~$\mathcal {D}_1\cap \mathcal {D}_2=(J(X))$ and $\mathcal {D}_1+\mathcal {D}_2=R_p$.\\
{\rm (iii)}~$\mathcal {E}_1\cap\mathcal {E}_2=\{0\}$ and $\mathcal {E}_1+ \mathcal {E}_2=(J(X))^\perp$.\\
{\rm (iv)}~$|\mathcal {D}_1|=|\mathcal {D}_2|=4^{p+1}$ and $|\mathcal {E}_1|=|\mathcal {E}_2|=4^{p-1}$.\\
{\rm (v)}~$\mathcal {D}_i=\mathcal {E}_i+(J(X))$ for $i=1,2$.\\
{\rm (vi)}~$\mathcal {E}_1^\perp=\mathcal {D}_2$ and  $\mathcal {E}_2^\perp=\mathcal {D}_1$.
\end{theorem}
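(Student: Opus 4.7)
The plan is to mirror the proof of Theorem~18 almost verbatim, since the formal structure of $\mathcal{D}_i$ and $\mathcal{E}_i$ is identical; only the roles of ``idempotent'' and ``its complement plus $J(X)$'' are swapped between the two cases, and the behavior of the map $\mu_{-1}$ changes. I would again split into the two subcases $r$ odd and $r$ even and verify only one subcase in detail (say $r$ odd), since the other is analogous. In what follows, I would use the basic identities $Q(X)+N(X)=\sum_{i=1}^{p-1}X^i$, $J(X)=p\sum_{i=0}^{p-1}X^i$ in $R_p$, together with $Q(X)N(X)$ computed from the standard quadratic residue relations (this is where $p\bmod 8$ enters through the constant term of $Q(X)N(X)$), and most importantly the fact that, since $p\equiv 1\pmod 8$, we have $-1\in\mathcal{Q}_p$, so $Q(X)\mu_{-1}=Q(X)$ and $N(X)\mu_{-1}=N(X)$.

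For parts (i)--(v), I would simply repeat the computations from Theorem~18 with the new definitions: permutation invariance under $\mu_a$ reduces to the statement that for $a\in\mathcal{Q}_p$, $Q(X)\mu_a=Q(X)$, $N(X)\mu_a=N(X)$, while for $a\in\mathcal{N}_p$ these are interchanged; part (ii) follows from multiplying the two generating idempotents of $\mathcal{D}_1$ and $\mathcal{D}_2$ and noting that the $v$ and $1-v$ cross-terms vanish since $v(1-v)=0$, so the product reduces to the product of the $\mathbb{Z}_4$-idempotents, which gives $J(X)$ by the known quadratic residue idempotent identities over $\mathbb{Z}_4$; the sum statement $\mathcal{D}_1+\mathcal{D}_2=R_p$ follows from adding these idempotents and subtracting $J(X)$ to obtain $1$. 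Parts (iii), (iv), (v) are then formal consequences: (iii) again from an idempotent product computation together with Theorem~17 applied to $J(X)$ (for which $J(X)\mu_{-1}=J(X)$ still holds), (iv) from $|\mathcal{D}_1+\mathcal{D}_2|=|\mathcal{D}_1||\mathcal{D}_2|/|\mathcal{D}_1\cap\mathcal{D}_2|$ together with (i), (ii) and the analogous computation for $\mathcal{E}_i$ from (iii), and (v) from manipulating generating idempotents exactly as in the Case I proof.

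The only genuinely new point, and the one I expect to require care, is part (vi): because $-1\in\mathcal{Q}_p$ in this case, the action of $\mu_{-1}$ fixes $Q(X)$ and $N(X)$ rather than swapping them. By Theorem~17, the generating idempotent of $\mathcal{E}_1^\perp$ is
\begin{equation*}
1-\bigl(v(Q(X)+2N(X))+(1-v)(N(X)+2Q(X))\bigr)\mu_{-1},
\end{equation*}
which, after applying $\mu_{-1}$ componentwise and using $Q\mu_{-1}=Q$, $N\mu_{-1}=N$, simplifies to $v(1-Q(X)+2N(X))+(1-v)(1-N(X)+2Q(X))$. This is exactly the generating idempotent of $\mathcal{D}_2$, hence $\mathcal{E}_1^\perp=\mathcal{D}_2$; symmetrically $\mathcal{E}_2^\perp=\mathcal{D}_1$. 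Note that this swap is precisely where Case II diverges from Case I, where the inclusion $-1\in\mathcal{N}_p$ produced $\mathcal{E}_i^\perp=\mathcal{D}_i$ instead. The main obstacle, therefore, is not any deep new idea but the bookkeeping: one must track which of the $v$-component and the $(1-v)$-component each piece lands in after applying $\mu_{-1}$, and confirm that the resulting idempotent matches the one in Definition~4 rather than being off by the $Q\leftrightarrow N$ swap. Once this is done carefully, the remaining parts follow exactly as in Theorem~18, so no additional machinery is required.
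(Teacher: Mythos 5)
Your proposal is correct and matches the paper's intent exactly: the paper omits this proof with the remark that it is ``similar to Theorem 18,'' and your adaptation is precisely that, with the one genuinely new point (part (vi)) handled correctly --- since $p\equiv 1\pmod 8$ gives $-1\in\mathcal{Q}_p$, applying Theorem 17 to $\mathcal{E}_1$ yields the idempotent $v(1-Q(X)+2N(X))+(1-v)(1-N(X)+2Q(X))$ of $\mathcal{D}_2$ rather than that of $\mathcal{D}_1$. No gaps; the remaining parts reduce to the same $\mathbb{Z}_4$ idempotent identities (now with $J(X)=1+Q(X)+N(X)$) used in the Case I proof.
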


Let $\mathcal {D}_1$ and $\mathcal {D}_2$ be the quadratic residue codes defined above. In the following, we discuss two extensions of $\mathcal {D}_i$ denoted as $\widehat{\mathcal {D}}_i$ and $\widetilde{\mathcal {D}}_i$.
\begin{definition}
Let $G_i$ be the generator matrix for the quadratic residue codes $\mathcal {E}_i$. Then we define $\widehat{\mathcal {D}}_i$ and $\widetilde{\mathcal {D}}_i$ with $\widehat{G}_i$ and $\widetilde{G}_i$ as their generator matrices as follows, respectively.\\
   {\rm (i)}~If $p\equiv -1({\rm mod}8)$, then
$$\widehat{G}_i=\left[\begin{array}{cccc}3&3&\cdots&3\\ 0& & &\\\vdots& & \huge{G}_i &\\0& & & \end{array}\right]~~and~~\widetilde{G}_i=\left[\begin{array}{cccc}1&3&\cdots&3\\ 0& & &\\\vdots& & \huge{G}_i &\\0& & & \end{array}\right].$$
{\rm(ii)}~If $p\equiv 1({\rm mod}8)$, then
$$\widehat{G}_i=\left[\begin{array}{cccc}3&1&\cdots&1\\ 0& & &\\\vdots& & \huge{G}_i &\\0& & & \end{array}\right]~~and~~\widetilde{G}_i=\left[\begin{array}{cccc}1&1&\cdots&1\\ 0& & &\\\vdots& & \huge{G}_i &\\0& & & \end{array}\right].$$
\end{definition}
\begin{theorem}
Let $\mathcal {D}_i$ be the quadratic residue codes of length $p$ over $R$. The following hold\\
{\rm (i)}~If $p\equiv -1({\rm mod}8)$, then $\widehat{\mathcal {D}}_i$ and $\widetilde{\mathcal {D}}_i$ are Euclidean self-dual. \\
{\rm (ii)}~If $p\equiv 1({\rm mod}8)$, then $\widehat{\mathcal {D}}_1^\perp=\widetilde{\mathcal {D}}_2$ and $\widehat{\mathcal {D}}_2^\perp=\widetilde{\mathcal {D}}_1$.
\end{theorem}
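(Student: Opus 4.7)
The plan is to reduce Theorem 20 to the classical self-duality theory of extended quadratic residue codes over $\mathbb{Z}_4$ via the CRT decomposition $R=v\mathbb{Z}_4\oplus(1-v)\mathbb{Z}_4$, and then lift the conclusions back to $R$ using Theorem 4.

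First I would write $\mathcal{E}_i=vE_i^{(1)}\oplus(1-v)E_i^{(2)}$, where $E_i^{(j)}$ is the $\mathbb{Z}_4$-QR code whose idempotent is the $v$- or $(1-v)$-coefficient of the defining idempotent in Definitions 5 and 6; the analogous decomposition $\mathcal{D}_i=vD_i^{(1)}\oplus(1-v)D_i^{(2)}$ holds, and the Theorem 18/19 relations pass to each component (in particular $(E_i^{(j)})^\perp=D_i^{(j)}$ in case I and $(E_1^{(j)})^\perp=D_2^{(j)}$ in case II). Because every entry of the border row of $\widehat{G}_i$ and $\widetilde{G}_i$ lies in the scalar copy $\mathbb{Z}_4\subseteq R$, this border row reproduces itself identically in both $v$- and $(1-v)$-components, so
$$\widehat{\mathcal{D}}_i=v\widehat{E}_i^{(1)}\oplus(1-v)\widehat{E}_i^{(2)},\qquad \widetilde{\mathcal{D}}_i=v\widetilde{E}_i^{(1)}\oplus(1-v)\widetilde{E}_i^{(2)},$$
where $\widehat{E}_i^{(j)}$ and $\widetilde{E}_i^{(j)}$ are the obvious $\mathbb{Z}_4$-extensions of $E_i^{(j)}$ by the given border rows.

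For part (i), Theorem 4 reduces the claim to the self-duality of each $\widehat{E}_i^{(j)}$ and $\widetilde{E}_i^{(j)}$ over $\mathbb{Z}_4$. These are the classical extended quaternary quadratic residue codes for $p\equiv-1\pmod 8$, whose self-duality is proved in \cite{Pless2}; a direct proof splits into two checks. First, the cardinality: $|E_i^{(j)}|=4^{(p-1)/2}$ by Theorem 18(iv), and the border row contributes an independent generator because its first entry is a unit, giving $|\widehat{E}_i^{(j)}|=4^{(p+1)/2}$. Second, self-orthogonality: the border row has self inner product $a^2+pb^2\equiv 0\pmod 4$ for both $(a,b)=(3,3)$ and $(1,3)$, since $p\equiv 3\pmod 4$; and its inner product with any row $g$ of $G_i^{(j)}$ equals $b\sum g$, which vanishes modulo $4$ because $J(X)=3\cdot\mathbf{1}\in\mathcal{D}_i$ in case I gives $\mathbf{1}\in D_i^{(j)}=(E_i^{(j)})^\perp$, whence $\sum g=\mathbf{1}\cdot g\equiv 0\pmod 4$.

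For part (ii) the same CRT reduction turns $\widehat{\mathcal{D}}_1^\perp=\widetilde{\mathcal{D}}_2$ into the pair of identities $(\widehat{E}_1^{(j)})^\perp=\widetilde{E}_2^{(j)}$ for $j=1,2$. The key structural change is that now $-1\in\mathcal{Q}_p$ forces the index swap $(E_1^{(j)})^\perp=D_2^{(j)}$ in Theorem 19(vi), so the self-duality of case I is broken and the extension of $\mathcal{D}_1$ by $(3,1,\ldots,1)$ must pair with the extension of $\mathcal{D}_2$ by $(1,1,\ldots,1)$ instead. The required orthogonalities reduce to the border-row computation $(3,1,\ldots,1)\cdot(1,1,\ldots,1)=3+p\equiv 0\pmod 4$ (using $p\equiv 1\pmod 4$), together with the cross products against $G_i^{(j)}$ handled exactly as in (i) — now via $J(X)=\mathbf{1}\in\mathcal{D}_i$ — and a cardinality count $|\widehat{\mathcal{D}}_1|\cdot|\widetilde{\mathcal{D}}_2|=|R|^{p+1}$; the identity $\widehat{\mathcal{D}}_2^\perp=\widetilde{\mathcal{D}}_1$ is symmetric. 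The main obstacle throughout is the divisibility-by-$4$ check of $\sum g$ for generators $g$ of $E_i^{(j)}$: this is not a formality but becomes routine once one recognises $\mathbf{1}\in\mathcal{D}_i$ via the idempotent $J(X)$, after which the remainder of the argument is cardinality bookkeeping and direct inner-product computation.
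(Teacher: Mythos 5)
Your argument is correct and, at its core, runs on the same two pillars as the paper's proof: orthogonality of the generating rows of the extended codes, followed by the cardinality count $|\widehat{\mathcal{D}}_i|=|\widetilde{\mathcal{D}}_i|=4^{p+1}=|R|^{(p+1)/2}$. The difference is one of packaging and of completeness. The paper works directly over $R$: for (i) it simply asserts that the sum of the components of any codeword of $\mathcal{E}_i$ vanishes and concludes self-orthogonality; for (ii) it uses $\mathcal{E}_i^\perp=\mathcal{D}_j$ ($j\neq i$) from Theorem 19(vi), the single product $(3,1,\ldots,1)\cdot(1,1,\ldots,1)=3+p\equiv 0\ ({\rm mod}\ 4)$, and the same count. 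You interpose the CRT splitting $\widehat{\mathcal{D}}_i=v\widehat{E}_i^{(1)}\oplus(1-v)\widehat{E}_i^{(2)}$ and reduce via Theorem 4 to the $\mathbb{Z}_4$ components; this buys you the option of simply citing the Pless--Qian self-duality of extended quaternary QR codes for $p\equiv -1\ ({\rm mod}\ 8)$, though the hand verification you then give is the same inner-product computation the paper does over $R$, so the extra layer is not strictly needed. What your version genuinely adds is the justification the paper leaves implicit: that $\sum g\equiv 0$ for $g\in\mathcal{E}_i$ (and that the border row is orthogonal to the relevant $\mathcal{E}$) because $J(X)$ is a unit multiple of the all-one vector lying in $\mathcal{D}_i=\mathcal{E}_i^\perp$ (case I), respectively $J(X)=\mathbf{1}\in\mathcal{D}_i=\mathcal{E}_j^\perp$ (case II), together with the explicit border self-products $a^2+pb^2\equiv 0\ ({\rm mod}\ 4)$. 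Both routes are sound; yours is longer but self-contained, while the paper's is a terse appeal to Theorems 18 and 19.
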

\begin{proof}
If $p\equiv -1({\rm mod}8)$, by the fact that the sum of the components of any codeword in $\mathcal {E}_i$ is zero, we have $\widehat{\mathcal {D}}_i$ and $\widetilde{\mathcal {D}}_i$ are Euclidean self-orthogonal. Furthermore, $|\mathcal {D}_i|=|\widehat{\mathcal {D}}_i|=|\widetilde{\mathcal {D}}_i|=4^{p+1}$ implying $\widehat{\mathcal {D}}_i$ and $\widetilde{\mathcal {D}}_i$ are Euclidean self-dual.
\vskip 2mm
If $p\equiv 1({\rm mod}8)$, then $\mathcal {E}_1^\perp=\mathcal {D}_2$ and $\mathcal {E}_2^\perp=\mathcal {D}_1$. Hence the extended codewords arising from $\mathcal {E}_i$ are orthogonal to all codewords in either $\widehat{\mathcal {D}}_j$ and $\widetilde{\mathcal {D}}_j$ where $j\neq i$. Since the product of the vectors $(3,1,\ldots, 1)$ and $(1,1,\ldots,1)$ is $3+p\equiv 0({\rm mod}4)$, we have $\widehat{\mathcal {D}}_j^\perp\subseteq \widetilde{\mathcal {D}}_i$ where $j\neq i$. Furthermore, $|\mathcal {D}_i|=|\widehat{\mathcal {D}}_i|=|\widetilde{\mathcal {D}}_i|=4^{p+1}$ implying $\widehat{\mathcal {D}}_j^\perp=\widetilde{\mathcal {D}}_i$ where $j\neq i$.
\end{proof}
\section{Examples}
\begin{example}
In this example, we illustrate some $\mathbb{Z}_4$-Euclidean isodual codes obtained by the construction methods A, B, C and the Gray map $\Phi$.
\par
(i)~Consider the matrix $$G=\left[\begin{array}{cc|cc}1&0&2+v&2\\ 0&1&2&2+v \end{array}\right].$$ Let $\mathcal {C}$ be a linear code generated by $G$ over $R$. Then, by Construction A, we see that $\mathcal {C}$ is a Euclidean isodual code of length $4$ over $R$. As a $\mathbb{Z}_4$-module, $\mathcal {C}$ is generated by
$$\mathbb{G}=\left[\begin{array}{cccc}v&0&3v&2v\\ 0&v&2v&3v \\ 1-v & 0 & 2(1-v) & 2(1-v)\\ 0 & 1-v & 2(1-v) & 2(1-v)\end{array}\right],$$
which implies that
$$\Phi(\mathbb{G})=\left[\begin{array}{cccccccc}0&1&0&0&0&3&0&2\\ 0&0&0&1&0&2&0&3 \\ 1&0&0&0&2&0&2&0\\ 0 & 0 & 1 & 0&2&0&2&0\end{array}\right].$$ The linear code $\Phi(\mathcal {C})$ generated by $\Phi(\mathbb{G})$ is a $\mathbb{Z}_4$-Euclidean isodual code of length $8$ with type $4^4$. The Lee weight distribution, Euclidean weight distribution and Hamming weight distribution of $\Phi(\mathcal {C})$ are given as follows, reslectively.
$$W_L(y)=1+6y^2+15y^4+4y^5+84y^6+4y^7+15y^8+\cdots.$$ $$W_E(y)=1+4y^2+6y^4+24y^6+43y^8+\cdots.$$ $$W_H(y)=1+2y+7y^2+16y^3+35y^4+58y^5+65y^6+52y^7+20y^8.$$
\par
(ii)~Consider the matrix $$G=\left[\begin{array}{ccc|ccc}1&0&0&2+v&1+v&1\\ 0&1&0&1&2+v&1+v\\0&0&1&1+v&1&2+v \end{array}\right].$$ Let $\mathcal {C}$ be a linear code generated by $G$ over $R$. Then, by Construction B, we see that $\mathcal {C}$ is a Euclidean isodual code of length $6$ over $R$. The linear code $\Phi(\mathcal {C})$ is a $\mathbb{Z}_4$-Euclidean isodual code length $12$ with type $4^6$. The Lee weight distribution, Euclidean weight distribution and Hamming weight distribution of $\Phi(\mathcal {C})$ are given as follows, reslectively.
$$W_L(y)=1+2y^3+12y^4+42y^5+32y^6+18y^7+102y^8+\cdots.$$ $$W_E(y)=1+2y^3+12y^4+54y^7+60y^8+\cdots.$$ $$W_H(y)=1+10y^3+60y^4+30y^5+50y^6+306y^7+1035y^8+\cdots.$$
\par
(iii)~Consider the matrix $$G=\left[\begin{array}{cccc|cccc}1&0&0&0&2+v&2&2&2\\ 0&1&0&0&2&2+v&1+v&1\\0&0&1&0&2&1&2+v&1+v\\0&0&0&1&2&1+v&1&2+v \end{array}\right].$$ Let $\mathcal {C}$ be a linear code generated by $G$ over $R$. Then, by Construction C, we see that $\mathcal {C}$ is a Euclidean isodual code of length $8$ over $R$. The linear code $\Phi(\mathcal {C})$ is a $\mathbb{Z}_4$-Euclidean isodual code of length $16$ with type $4^8$. The Lee weight distribution, Euclidean weight distribution and Hamming weight distribution of $\Phi(\mathcal {C})$ are given as follows, reslectively.
$$W_L(y)=1+y^2+25y^4+18y^5+75y^6+102y^7+268y^8+\cdots.$$ $$W_E(y)=1+25y^4+16y^5+12y^6+2y^7+157y^8+\cdots.$$ $$W_H(y)=1+y+y^2+9y^3+52y^4+168y^5+254y^6+426y^7+1321y^8+\cdots.$$
\end{example}

 \begin{example}
 In this example, we consider the Euclidean cyclic self-dual codes of length $n\leq 39$ over $R$. By Theorem 14, we have $n=7, 15, 21, 23, 31, 35,$ and $39$.
   \par
   (i)~$n=7$. It is well known that $$X^7-1=(X-1)f(X)(3f^*(X)),$$ where $f(X)=X^3+3X^2+2X+3$. There is only one non-trivial Euclidean cyclic self-dual code over $R$. It is
   \begin{equation*}
   \mathcal {C}=((X-1)f(X), 2f(X)f^*(X)).
   \end{equation*}
   By Theorem 2, the Gray image $\Phi(\mathcal {C})$ is a Euclidean self-dual code of length $14$ with type $4^62^2$ over $\mathbb{Z}_4$. Moreover, $\Phi(\mathcal {C})$ is with minimum Lee distance $4$, i.e., $\Phi(\mathcal {C})$ is a quaternary $(n, M, d_L)=(14, 4^62^2, 4)$ Type I code.
   \par
   (ii)~$n=15$. It is well known that $$X^{15}-1=(X-1)(X^4+X^3+X^2+X+1)(X^2+X+1)f(X)f^*(X),$$ where $f(X)=X^4+2X^2+3X+1$. There is only one non-trivial Euclidean cyclic self-dual code of length $15$ over $R$. It is
   \begin{equation*}
   \mathcal {C}=(f(X)h(X), 2f(X)g(X)),
   \end{equation*}
   where $h(X)=(X-1)(X^4+X^3+X^2+X+1)(X^2+X+1)$ and $g(X)=X^4+3X^3+2X^2+1$. By Theorem 2, the Gray image $\Phi(\mathcal {C})$ is a Euclidean self-dual code of length $30$ with type $4^{8}2^{14}$ over $\mathbb{Z}_4$. Moreover, $\Phi(\mathcal {C})$ is with minimum Lee distance $6$, i.e., $\Phi(\mathcal {C})$ is a quaternary $(n, M, d_L)=(30, 4^82^{14}, 6)$ Type I code.
   \par
   (iii)~$n=21.$ It is well known that $$X^{21}-1=(X-1)(X^2+X+1)f_1(X)f_1^*(X)f_2(X)(3f_2^*(X)),$$ where $f_1(X)=X^6+2X^5+3X^4+3X^2+X+1$, $f_2(X)=X^3+2X^2+X+3$, $h_1(X)=X^9+X^8+X^7+3X^2+3X+3$, $h_2(X)=X^{15}+3X^{14}+X^8+3X^7+X+3$ and $h_3(X)=X^3+3$. There are $9$ different non-trivial Euclidean cyclic self-dual codes of length $21$ over $R$. We illustrate them in Table 1.
   \begin{table}
   \caption{Euclidean cyclic self-dual codes of length $21$ over $R$}
   \begin{center}
   \begin{small}
   \begin{tabular}{ccc}
   \hline
   Codes number &  Generators of cyclic self-dual codes &  Gray images \\
   \hline
   $\mathcal {C}_1$ & \{ $f_1h_1, 2f_1f^*_1$\} & $(42, 4^{12}2^{18}, 6)$\\
   $\mathcal {C}_2$ & \{ $vf_1h_1+(1-v)f_2h_2, 2vf_1f^*_1+2(1-v)f_2f^*_2$\} & $(42, 4^{9}2^{24}, 4)$\\
   $\mathcal {C}_3$ & \{ $vf_1h_1+(1-v)f_1f_2h_3, 2vf_1f^*_1+2(1-v)f_1f_2f^*_1f^*_2$\} & $(42, 4^{15}2^{12}, 4)$\\
   $\mathcal {C}_4$ & \{ $f_2h_2, 2f_2f^*_2$\} & $(42, 4^{6}2^{30}, 4)$\\
   $\mathcal {C}_5$ & \{ $vf_2h_2+(1-v)f_1h_1, 2vf_2f^*_2+2(1-v)f_1f^*_1$\} & $(42, 4^{9}2^{24}, 4)$\\
   $\mathcal {C}_6$ & \{ $vf_2h_2+(1-v)f_1f_2h_3, 2vf_2f^*_2+2(1-v)f_1f_2f^*_1f^*_2$\} & $(42, 4^{12}2^{18}, 4)$\\
   $\mathcal {C}_7$ & \{ $f_1f_2h_3, 2f_1f_2f^*_1f^*_2$\} & $(42, 4^{18}2^6, 4)$\\
   $\mathcal {C}_8$ & \{ $vf_1f_2h_3+(1-v)f_1h_1, 2vf_1f_2f^*_1f^*_2+2(1-v)f_1f^*_1$\} & $(42, 4^{15}2^{12}, 4)$\\
   $\mathcal {C}_9$ & \{ $vf_1f_2h_3+(1-v)f_2h_2, 2vf_1f_2f^*_1f^*_2+2(1-v)f_2f^*_2$\} & $(42, 4^{12}2^{24}, 4)$\\
   \hline
   \end{tabular}
   \end{small}
   \end{center}
   \end{table}
\par
(iv)~$n=23.$ It is well known that $$X^{23}-1=(X-1)f(X)(3f^*(X)),$$ where $f(X)=X^{11}+2X^{10}+3X^9+3X^7+3X^6+3X^5+2X^4+X+3$. There is only one non-trivial Euclidean cyclic self-dual code of length $23$ over $R$. It is
\begin{equation*}
\mathcal {C}=((X-1)f(X), 2f(X)f^*(X)).
\end{equation*}
By Theorem 2, the Gray image $\Phi(\mathcal {C})$ is a Euclidean self-dual code of length $46$ with type $4^{22}2^2$ over $\mathbb{Z}_4$. Moreover, $\Phi(\mathcal {C})$ is with minimum Lee distance $7$, i.e., $\Phi(\mathcal {C})$ is a quaternary $(46, 4^{22}2^2, 7)$ Type I code.
\par
(v)~$n=31.$ It is well known that $$X^{31}-1=(X-1)f_1(X)(3f^*_1(X))f_2(X)(3^*_2(X))f_3(X)(3f^*_3(X)),$$ where $f_1(X)=X^5+3X^2+2X+3$, $f_2(X)=X^5+2X^4+3X^3+X^2+3X+3$ and $f_3(X)=X^5+3X^4+X^2+3X+3$. Let $h_1(X)=(X-1)f_2(X)f^*_2(X)f_3(X)f^*_3(X)$, $h_2(X)=h_3(X)=(X-1)f_3(X)f^*_3(X)$ and $h_4(X)=h_5(X)=X-1$. There are $25$ different non-trivial Euclidean cyclic self-dual codes of length $31$ over $R$. We illustrate them in Table 2.
    \begin{table}
   \caption{Euclidean cyclic self-dual codes of length $31$ over $R$}
   \begin{center}
   \begin{small}
   \begin{tabular}{ccc}
   \hline
   Codes number &  Generators of cyclic self-dual codes &  Gray images \\
   \hline
   $\mathcal {C}_1$ & \{ $f_1h_1, 2f_1f^*_1$\} & $(62, 4^{10}2^{42}, 6)$\\
   $\mathcal {C}_2$ & \{ $vf_1h_1+(1-v)f_1f_2h_2, 2vf_1f^*_1+2(1-v)f_1f_2f^*_1f^*_2$\} & $(62, 4^{15}2^{32}, 12)$\\
   $\mathcal {C}_3$ & \{ $vf_1h_1+(1-v)f_1f^*_2h_3, 2vf_1f^*_1+2(1-v)f_1f_2f^*_1f^*_2$\} & $(62, 4^{15}2^{32}, 12)$\\
   $\mathcal {C}_4$ & \{ $vf_1h_1+(1-v)f_1f_2f_3h_4, 2vf_1f^*_1+2(1-v)f_1f_2f_3f^*_1f^*_2f^*_3$\} & $(62, 4^{20}2^{22},6)$\\
   $\mathcal {C}_5$ & \{ $vf_1h_1+(1-v)f_1f^*_2f_3h_5, 2vf_1f^*_1+2(1-v)f_1f_2f_3f^*_1f^*_2f^*_3$\} & $(62, 4^{20}2^{22},6)$\\
   $\mathcal {C}_6$ & \{ $f_1f_2h_2, 2f_1f_2f^*_1f^*_2$\} & $(62, 4^{20}2^{22}, 10)$\\
   $\mathcal {C}_7$ & \{ $vf_1f_2h_2+(1-v)f_1h_1, 2vf_1f_2f^*_1f^*_2+2(1-v)f_1f^*_1$\} & $(62, 4^{15}2^{32}, 12)$\\
   $\mathcal {C}_8$ & \{ $vf_1f_2h_2+(1-v)f_1f^*_2h_3, 2f_1f_2f^*_1f^*_2$\} & $(62, 4^{20}2^{22}, 10)$\\
   $\mathcal {C}_9$ & \{ $vf_1f_2h_2+(1-v)f_1f_2f_3h_4, 2vf_1f_2f^*_1f^*_2+2(1-v)f_1f_2f_3f^*_1f^*_2f^*_3$\} & $(62, 4^{25}2^{12}, 10)$\\
   $\mathcal {C}_{10}$ & \{ $vf_1f_2h_2+(1-v)f_1f^*_2f_3h_5, 2vf_1f_2f^*_1f^*_2+2(1-v)f_1f_2f_3f^*_1f^*_2f^*_3$\} & $(62, 4^{25}2^{12}, 10)$\\
   $\mathcal {C}_{11}$ & \{ $f_1f^*_2h_3, 2f_1f_2f^*_1f^*_2$\} & $(62, 4^{20}2^{22}, 10)$\\
   $\mathcal {C}_{12}$ & \{ $vf_1f^*_2h_3+(1-v)f_1h_1, 2vf_1f_2f^*_1f^*_2+2(1-v)f_1f^*_1$\} & $(62, 4^{15}2^{32}, 14)$\\
   $\mathcal {C}_{13}$ & \{ $vf_1f^*_2h_3+(1-v)f_1f_2h_2, 2f_1f_2f^*_1f^*_2$\} & $(62, 4^{20}2^{22}, 10)$\\
   $\mathcal {C}_{14}$ & \{ $vf_1f^*_2h_3+(1-v)f_1f_2f_3h_4, 2vf_1f_2f^*_1f^*_2+2(1-v)f_1f_2f_3f^*_1f^*_2f^*_3$\} & $(62,4^{25}2^{12}, 10)$\\
   $\mathcal {C}_{15}$ & \{ $vf_1f^*_2h_3+(1-v)f_1f^*_2f_3h_5, 2vf_1f_2f^*_1f^*_2+2(1-v)f_1f_2f_3f^*_1f^*_2f^*_3$\} & $(62, 4^{25}2^{12}, 10)$\\
   $\mathcal {C}_{16}$ & \{ $f_1f_2f_3h_4, 2f_1f_2f_3f^*_1f^*_2f^*_3$\} & $(62, 4^{30}2^2, 12)$\\
   $\mathcal {C}_{17}$ & \{ $vf_1f_2f_3h_4+(1-v)f_1h_1, 2vf_1f_2f_3f^*_1f^*_2f^*_3+2(1-v)f_1f^*_1$\} & $(62, 4^{20}2^{22}, 6)$\\
   $\mathcal {C}_{18}$ & \{ $vf_1f_2f_3h_4+(1-v)f_1f_2h_2, 2vf_1f_2f_3f^*_1f^*_2f^*_3+2(1-v)f_1f_2f^*_1f^*_2$\} & $(62, 4^{25}2^{12}, 10)$\\
   $\mathcal {C}_{19}$ & \{ $vf_1f_2f_3h_4+(1-v)f_1f^*_2h_3, 2vf_1f_2f_3f^*_1f^*_2f^*_3+2(1-v)f_1f_2f^*_1f^*_2$\} & $(62, 4^{25}2^{12}, 10)$\\
   $\mathcal {C}_{20}$ & \{ $vf_1f_2f_3h_4+(1-v)f_1f^*_2f_3h_5, 2vf_1f_2f_3f^*_1f^*_2f^*_3+2(1-v)f_1f_2f_3f^*_1f^*_2f^*_3$\} & $(62, 4^{30}2^2, 12)$\\
   $\mathcal {C}_{21}$ & \{ $f_1f^*_2f_3h_5, 2f_1f_2f_3f^*_1f^*_2f^*_3$\} & $(62, 4^{30}2^2, 12)$\\
   $\mathcal {C}_{22}$ & \{ $vf_1f^*_2f_3h_5+(1-v)f_1h_1, 2vf_1f_2f_3f^*_1f^*_2f^*_3+2(1-v)f_1f^*_1$\} & $(62, 4^{20}2^{22}, 6)$\\
   $\mathcal {C}_{23}$ & \{ $vf_1f^*_2f_3h_5+(1-v)f_1f_2h_2, 2vf_1f_2f_3f^*_1f^*_2f^*_3+2(1-v)f_1f_2f^*_1f^*_2$\} & $(62, 4^{25}2^{12}, 10)$\\
   $\mathcal {C}_{24}$ & \{ $vf_1f^*_2f_3h_5+(1-v)f_1f^*_2h_3, 2vf_1f_2f_3f^*_1f^*_2f^*_3+2(1-v)f_1f_2f^*_1f^*_2$\} & $(62, 4^{25}2^{12}, 10)$\\
   $\mathcal {C}_{25}$ & \{ $vf_1f^*_2f_3h_5+(1-v)f_1f_2f_3h_4, 2f_1f_2f_3f^*_1f^*_2f^*_3$\} & $(62, 4^{30}2^2, 12)$\\
   \hline
   \end{tabular}
   \end{small}
   \end{center}
   \end{table}
\par
   (vi)~$n=35.$ It is well known that $$X^{35}-1=f_1(X)f^*_1(X)f_{2}(X)f^*_{2}(X)h(X),$$ where $f_1(X)=X^3+2X^2+X+3$, $f_2(X)=X^{12}+2X^{11}+
   3X^{10}+X^9+X^8+3X^7+2X^6+2X^5+X^4+2X^3+3X^2+X+1$ and $h(X)=(X-1)(X^4+X^3+X^2+X+1)$. There are $16$ different non-trivial Euclidean cyclic self-dual codes over $R$. We illustrate them in Table 3.
   \begin{table}
   \caption{Euclidean cyclic self-dual codes of length $35$ over $R$}
   \begin{center}
   \begin{small}
   \begin{tabular}{ccc}
   \hline
   Codes number &  Generators of cyclic self-dual codes &  Gray images \\
   \hline
   $\mathcal {C}_1$ & \{ $f_1f_2h, 2f_1f_2f^*_1f^*_2$\} & $(70, 4^{30}2^{10}, 4)$\\
   $\mathcal {C}_2$ & \{ $vf_1f_2h+(1-v)f^*_1f_2h, 2f_1f_2f^*_1f^*_2$\} & $(70, 4^{30}2^{10}, 4)$\\
   $\mathcal {C}_3$ & \{ $vf_1f_2h+(1-v)f_1f^*_1f_2h, 2vf_1f_2f^*_1f^*_2+2(1-v)f_2f^*_2$\} & $(70, 4^{27}2^{16}, 4)$\\
   $\mathcal {C}_4$ & \{ $vf_1f_2h+(1-v)f_1f^*_2f_2h, 2vf_1f_2f^*_1f^*_2+2(1-v)f_1f^*_1$\} & $(70, 4^{18}2^{34}, 4)$\\
   $\mathcal {C}_5$ & \{ $f^*_1f_2h, 2f_1f_2f^*_1f^*_2$\} & $(70, 4^{30}2^{10}, 8)$\\
   $\mathcal {C}_6$ & \{ $vf^*_1f_2h+(1-v)f_1f_2h, 2f_1f_2f^*_1f^*_2$\} & $(70, 4^{30}2^{10}, 4)$\\
   $\mathcal {C}_7$ & \{ $vf^*_1f_2h+(1-v)f_1f^*_1f_2h, 2vf_1f_2f^*_1f^*_2+2(1-v)f_2f^*_2$\} & $(70, 4^{27}2^{16}, 6)$\\
   $\mathcal {C}_8$ & \{ $vf^*_1f_2h+(1-v)f_1f_2f^*_2h, 2vf_1f_2f^*_1f^*_2+2(1-v)f_1f^*_1$\} & $(70, 4^{18}2^{34}, 4)$\\
   $\mathcal {C}_9$ & \{ $f_1f^*_1f_2h, 2f_2f^*_2$\} & $(70, 4^{24}2^{22}, 6)$\\
   $\mathcal {C}_{10}$ & \{ $vf_1f^*_1f_2h+(1-v)f_1f_2h, 2vf_2f^*_2+2(1-v)f_1f_2f^*_1f^*_2$\} & $(70, 4^{27}2^{16}, 4)$\\
   $\mathcal {C}_{11}$ & \{ $vf_1f^*_1f_2h+(1-v)f^*_1f_2h, 2vf_2f^*_2+2(1-v)f_1f_2f^*_1f^*_2$\} & $(70, 4^{27}2^{16}, 6)$\\
   $\mathcal {C}_{12}$ & \{ $vf_1f^*_1f_2h+(1-v)f_1f_2f^*_2h, 2vf_2f^*_2+2(1-v)f_1f^*_1$\} & $(70, 4^{15}2^{40}, 4)$\\
   $\mathcal {C}_{13}$ & \{ $f_1f_2f^*_2h, 2f_1f^*_1$\} & $(70, 4^{6}2^{58}, 6)$\\
   $\mathcal {C}_{14}$ & \{ $vf_1f_2f^*_2h+(1-v)f_1f_2h, 2vf_1f^*_1+2(1-v)f_1f_2f^*_1f^*_2$\} & $(70, 4^{18}2^{34}, 4)$\\
   $\mathcal {C}_{15}$ & \{ $vf_1f_2f^*_2h+(1-v)f^*_1f_2h, 2vf_1f^*_1+2(1-v)f_1f_2f^*_1f^*_2$\} & $(70, 4^{18}2^{34}, 4)$\\
   $\mathcal {C}_{16}$ & \{ $vf_1f_2f^*_2h+(1-v)f_1f^*_1f_2h, 2vf_1f^*_1+2(1-v)f_2f^*_2$\} & $(70, 4^{15}2^{40}, 4)$\\
   \hline
   \end{tabular}
   \end{small}
   \end{center}
   \end{table}
   \par
   (vii)~$n=39.$ It is well known that $$X^{39}-1=f(X)f^*(X)h(X),$$ where $f(X)=X^{12}+X^{11}-X^{10}-X^9+2X^6+X^5-X^4+X^3-X^2+2X+1$ and $h(X)=(X-1)(X^2+X+1)(X^{12}+X^{11}+\cdots+X+1)$. There is only one non-trivial Euclidean cyclic self-dual code of length $39$ over $R$. It is
   \begin{equation*}
   \mathcal {C}=(f(X)h(X), 2f(X)f^*(X)).
   \end{equation*}
   By Theorem 2, the Gray image $\Phi(\mathcal {C})$ is a Euclidean self-dual code of length $78$ with type $4^{24}2^{30}$ over $\mathbb{Z}_4$. Moreover, $\Phi(\mathcal {C})$ is with minimum Lee distance 6, i.e., $\Phi(\mathcal {C})$ is a quaternary $(78, 4^{24}2^{30}, 6)$ Type I code.
   \end{example}
\begin{example}
   In this example, compared to the linear codes in table of The $\mathbb{Z}_4$ Database \cite{Aydin}, we show that some new linear codes over $\mathbb{Z}_4$ with improved minimum Lee weight can be constructed from the cyclic codes over $R$. We do not list the generator matrices of these linear codes here for the interest of space. If needed, they are available from the authors.
   \par
   (i)~It is well known that $$X^{23}-1=(X-1)f(X)(3f^*(X)),$$ where $f(X)=X^{11}+2X^{10}+3X^9+3X^7+3X^6+3X^5+2X^4+X+3$. Let
\begin{equation*}
\mathcal {C}=((X-1)f(X)).
\end{equation*}
Then $\Phi(\mathcal {C})$ is a $\mathbb{Z}_4$-linear $(46, 4^{22})$ code with minimum Lee weight $8$, which is better than $6$.
\par
(ii)~It is well known that $$X^{31}-1=(X-1)f_1(X)(3f^*_1(X))f_2(X)(3^*_2(X))f_3(X)(3f^*_3(X)),$$ where $f_1(X)=X^5+3X^2+2X+3$, $f_2(X)=X^5+2X^4+3X^3+X^2+3X+3$ and $f_3(X)=X^5+3X^4+X^2+3X+3$. Let $h_1(X)=(X-1)f_2(X)f^*_2(X)f_3(X)f^*_3(X)$, $h_2(X)=h_3(X)=(X-1)f_3(X)f^*_3(X)$ and $h_4(X)=h_5(X)=X-1$. We list $21$ new $\mathbb{Z}_4$-linear codes of length $62$ from the cyclic codes of length $31$ over $R$ in Table $4$.
\begin{table}
   \caption{The $21$ new $\mathbb{Z}_4$-linear codes of length $62$}
   \begin{center}
   \begin{small}
   \begin{tabular}{ccc}
   \hline
   Codes number &  Generators of cyclic self-dual codes &  Gray images \\
   \hline
   $\mathcal {C}_1$ & \{ $vf_1h_1+(1-v)f_1f_2h_2$\} & $(62, 4^{15}, 16)$\\
   $\mathcal {C}_2$ & \{ $vf_1h_1+(1-v)f_1f_2f_3h_4$\} & $(62, 4^{20},14)$\\
   $\mathcal {C}_3$ & \{ $(1-v)f_1f^*_2h_3$\} & $(62, 4^{15},16)$\\
   $\mathcal {C}_4$ & \{ $f_1h_1$\} & $(62, 4^{10},28)$\\
   $\mathcal {C}_5$ & \{ $vf_1h_1+(1-v)f_1f^*_2f_3h_5$\} & $(62, 4^{20},14)$\\
   $\mathcal {C}_6$ & \{ $f_1f_2h_2$\} & $(62, 4^{20}, 16)$\\
   $\mathcal {C}_7$ & \{ $vf_1f_2h_2+(1-v)f_1h_1$\} & $(62, 4^{15}, 16)$\\
   $\mathcal {C}_8$ & \{ $vf_1f_2h_2+(1-v)f_1f^*_2h_3$\} & $((62, 4^{20}, 16)$\\
   $\mathcal {C}_9$ & \{ $vf_1f_2h_2+(1-v)f_1f_2f_3h_4$\} & $(62, 4^{25}, 14)$\\
   $\mathcal {C}_{10}$ & \{ $vf_1f_2h_2+(1-v)f_1f^*_2f_3h_5$\} & $(62, 4^{25}, 14)$\\
   $\mathcal {C}_{11}$ & \{ $f_1f^*_2h_3$\} & $(62, 4^{20}, 18)$\\
   $\mathcal {C}_{12}$ & \{ $vf_1f^*_2h_3+(1-v)f_1h_1$\} & $(62, 4^{15}, 18)$\\
   $\mathcal {C}_{13}$ & \{ $vf_1f^*_2h_3+(1-v)f_1f_2h_2$\} & $(62, 4^{20}, 16)$\\
   $\mathcal {C}_{14}$ & \{ $vf_1f^*_2h_3+(1-v)f_1f_2f_3h_4$\} & $(62,4^{25}, 14)$\\
   $\mathcal {C}_{15}$ & \{ $vf_1f^*_2h_3+(1-v)f_1f^*_2f_3h_5$\} & $(62, 4^{25}, 14)$\\
   $\mathcal {C}_{16}$ & \{ $f_1f_2f_3h_4$\} & $(62, 4^{30}, 14)$\\
   $\mathcal {C}_{17}$ & \{ $vf_1f_2f_3h_4+(1-v)f_1f_2h_2$\} & $(62, 4^{25}, 14)$\\
   $\mathcal {C}_{18}$ & \{ $vf_1f_2f_3h_4+(1-v)f_1f^*_2h_3$\} & $(62, 4^{25}, 14)$\\
   $\mathcal {C}_{19}$ & \{ $vf_1f^*_2f_3h_5+(1-v)f_1h_1$\} & $(62, 4^{20}, 14)$\\
   $\mathcal {C}_{20}$ & \{ $vf_1f^*_2f_3h_5+(1-v)f_1f_2h_2$\} & $(62, 4^{25}, 14)$\\
   $\mathcal {C}_{21}$ & \{ $vf_1f^*_2f_3h_5+(1-v)f_1f_2f_3h_4$\} & $(62, 4^{30}, 14)$\\
   \hline
   \end{tabular}
   \end{small}
   \end{center}
   \end{table}
\par
(iii)~ It is well known that $$X^{35}-1=f_1(X)f^*_1(X)f_{2}(X)f^*_{2}(X)h(X),$$ where $f_1(X)=X^3+2X^2+X+3$, $f_2(X)=X^{12}+2X^{11}+
   3X^{10}+X^9+X^8+3X^7+2X^6+2X^5+X^4+2X^3+3X^2+X+1$ and $h(X)=(X-1)(X^4+X^3+X^2+X+1)$. We list $4$ new $\mathbb{Z}_4$-linear codes of length $70$ from the cyclic codes of length $35$ over $R$ in Table $5$.
\begin{table}
   \caption{The $4$ new $\mathbb{Z}_4$-linear codes of length $70$}
   \begin{center}
   \begin{small}
   \begin{tabular}{ccc}
   \hline
   Codes number &  Generators of cyclic self-dual codes &  Gray images \\
   \hline
   $\mathcal {C}_1$ & \{ $f^*_1f_2h$\} & $(70, 4^{30}, 12)$\\
   $\mathcal {C}_2$ & \{ $vf^*_1f_2h+(1-v)f_1f^*_1f_2h$\} & $(70, 4^{27}, 12)$\\
   $\mathcal {C}_3$ & \{ $f_1f^*_1f_2h$\} & $(70, 4^{24}, 12)$\\
   $\mathcal {C}_{4}$ & \{ $vf_1f^*_1f_2h+(1-v)f^*_1f_2h$\} & $(70, 4^{27}, 12)$\\
   \hline
   \end{tabular}
   \end{small}
   \end{center}
   \end{table}
\par
(iv)~It is well known that $$X^{39}-1=f(X)f^*(X)h(X),$$ where $f(X)=X^{12}+X^{11}-X^{10}-X^9+2X^6+X^5-X^4+X^3-X^2+2X+1$ and $h(X)=(X-1)(X^2+X+1)(X^{12}+X^{11}+\cdots+X+1)$. Let
\begin{equation*}
\mathcal {C}=(f(X)h(X)).
\end{equation*}
Then $\Phi(\mathcal {C})$ is a $\mathbb{Z}_4$-linear $(78, 4^{24})$ code with minimum Lee weight $16$, which is better than $10$.
\end{example}
\begin{example}
 Let $p=7$. We consider the quadratic residue codes of length $7$ over $R$. By the definitions of $Q(X)$ and $N(X)$, we have $Q(X)=X+X^2+X^4$ and $N(X)=X^3+X^5+X^6$. Since $7\equiv  -1({\rm mod}8)$, by Definition $5$, it follows that
\begin{equation*}
\mathcal {D}_1=(v(X+X^2+2X^3+X^4+2X^5+2X^6)+(1-v)(2X+2X^2+X^3+2X^4+X^5+X^6)),
\end{equation*}
\begin{equation*}
\mathcal {D}_2=(v(2X+2X^2+X^3+2X^4+X^5+X^6)+(1-v)(X+X^2+2X^3+X^4+2X^5+2X^6)),
\end{equation*}
\begin{equation*}
\begin{split}
\mathcal {E}_1&= (v(1+2X+2X^2+3X^3+2X^4+3X^5+3X^6)\\
              &+(1-v)(1+3X+3X^2+2X^3+3X^4+2X^5+2X^6))
 \end{split}
 \end{equation*}
and
\begin{equation*}
\begin{split}
\mathcal {E}_2&=(v(1+3X+3X^2+2X^3+3X^4+2X^5+2X^6)\\
              &+(1-v)(1+2X+2X^2+3X^3+2X^4+3X^5+3X^6))
 \end{split}
 \end{equation*}
are quadratic residue codes of length $7$ over $R$. By Theorem $16$, $\mathcal {E}_1$ and $\mathcal {E}_2$ can be regarded as the $\mathbb{Z}_4[X]$-modules, i.e. \begin{equation*}
\begin{split}
\mathcal {E}_1&= v(1+2X+2X^2+3X^3+2X^4+3X^5+3X^6) \\
              &\oplus (1-v)(1+3X+3X^2+2X^3+3X^4+2X^5+2X^6)
 \end{split}
 \end{equation*}
and
\begin{equation*}
\begin{split}
\mathcal {E}_2&= v(1+3X+3X^2+2X^3+3X^4+2X^5+2X^6) \\
              &\oplus (1-v)(1+2X+2X^2+3X^3+2X^4+3X^5+3X^6),
 \end{split}
\end{equation*}
which implies that $\mathcal {E}_1$ and $\mathcal {E}_2$ have the following $\mathbb{Z}_4$-generator matrices
$$G_1=\left[\begin{array}{c}vG_{11}\\ (1-v)G_{12} \end{array}\right]~~{\rm and}~~G_2=\left[\begin{array}{c}vG_{21}\\ (1-v)G_{22} \end{array}\right],$$
where
$$G_{11}=G_{22}=\left[\begin{array}{ccccccc} 1&0&0&1&2&3&1 \\ 0&1&0&3&3&3&2 \\ 0&0&1&2&3&1&1\end{array}\right],$$
$$G_{12}=G_{21}=\left[\begin{array}{ccccccc} 1&0&0&1&1&3&2 \\ 0&1&0&2&3&3&3 \\ 0&0&1&1&3&2&1\end{array}\right].$$
By Definition 7, we have $\widehat{\mathcal {D}}_1$, $\widetilde{\mathcal {D}}_1$ and $\widehat{\mathcal {D}}_2$, $\widetilde{\mathcal {D}}_2$ are the extensions of $\mathcal {D}_1$ and $\mathcal {D}_2$, respectively. Furthermore, thay have the generator matrices as follows
$$\widehat{G}_1=\left[\begin{array}{cccc}3&3&\cdots&3\\ 0& & &\\\vdots& & \huge{G}_1 &\\0& & & \end{array}\right]~~,~~\widetilde{G}_1=\left[\begin{array}{cccc}1&3&\cdots&3\\ 0& & &\\\vdots& & \huge{G}_1 &\\0& & & \end{array}\right],$$
$$\widehat{G}_2=\left[\begin{array}{cccc}3&3&\cdots&3\\ 0& & &\\\vdots& & \huge{G}_2 &\\0& & & \end{array}\right]~~{\rm and}~~\widetilde{G}_2=\left[\begin{array}{cccc}1&3&\cdots&3\\ 0& & &\\\vdots& & \huge{G}_2 &\\0& & & \end{array}\right].$$
$\widehat{\mathcal {D}}_1$, $\widetilde{\mathcal {D}}_1$, $\widehat{\mathcal {D}}_2$ and $\widetilde{\mathcal {D}}_2$ are equivalent and, by Theorem 20, they are extremal Type II codes. Theorefore, by Theorem 7, the Gray images of $\widehat{\mathcal {D}}_1$, $\widetilde{\mathcal {D}}_1$, $\widehat{\mathcal {D}}_2$ and $\widetilde{\mathcal {D}}_2$ are extremal Type II codes of length $16$ with minimum Euclidean weight $8$ over $\mathbb{Z}_4$. The Euclidean weight distributions of these codes are given as follows
$$W_E(y)=1+256y^8+16636y^{16}+32256y^{24}+15878y^{32}+256y^{40}+252y^{48}+y^{64}.$$
\end{example}

\end{document}